\documentclass[reqno]{amsart}

\usepackage{amsmath}
\usepackage{amsfonts}
\usepackage{amssymb}
\usepackage{mathrsfs}
\usepackage[colorlinks,citecolor=blue,linkcolor=blue]{hyperref}
\usepackage[sort&compress,numbers]{natbib}

\newtheorem{theorem}{Theorem}
\theoremstyle{plain}

\newtheorem{definition}{Definition}

\newtheorem{lemma}{Lemma}

\newtheorem{proposition}{Proposition}
\newtheorem{remark}{Remark}

\numberwithin{equation}{section}

\numberwithin{equation}{section}
\numberwithin{theorem}{section}
\numberwithin{definition}{section}
\numberwithin{lemma}{section}
\numberwithin{proposition}{section}
\numberwithin{corollary}{section}

\newcommand{\naturals}{\mathbb{N}}
\newcommand{\reals}{\mathbb{R}}
\newcommand{\complexes}{\mathbb{C}}
\newcommand{\integers}{\mathbb{Z}}

\newcommand{\domain}[1]{\mathrm{Dom} \, #1}

\newcommand{\spec}[1]{\mathrm{\sigma}(#1)}

\newcommand{\tnorm}[1]{\| #1 \|_{1}}
\newcommand{\fnorm}[1]{\| #1 \|_{2}}

\newcommand{\tr}[1]{\mathrm{tr} \, #1 }
\newcommand{\comm}[2]{[#1,#2]}
\newcommand{\acomm}[2]{\{#1,#2\}}
\newcommand{\ad}[1]{\comm{#1}{\,\cdot \,}}
\newcommand{\adj}{\star}
\newcommand{\hadj}{*}
\newcommand{\dual}{\prime}
\newcommand{\tp}[1]{\mathrm{TP}( #1 )}
\newcommand{\cp}[1]{\mathrm{CP}( #1 )}
\newcommand{\cptp}[1]{\mathrm{CP,\,TP}( #1 )}


\newcommand{\matr}[1]{\complexes^{#1 \times #1}}
\newcommand{\matrd}{\matr{d}}

\newcommand{\trs}[1]{\mathrm{tr}_{\bar{\otimes}} \, #1 }
\newcommand{\trl}[1]{\mathrm{tr}_{\mathscr{L}^{2}} \, #1 }

\renewcommand{\vec}[1]{\mathbf{#1}}

\DeclareMathOperator*{\esssup}{ess\,sup}

\begin{document}

\title[On Howland's time-independent formulation of CP-divisible ...]{On Howland's time-independent formulation of CP-divisible quantum evolutions}

\author{Krzysztof Szczygielski}
\address[K. Szczygielski]{Institute of Theoretical Physics and Astrophysics, Faculty of Mathematics, Physics and Informatics, University of Gda\'{n}sk, Wita Stwosza 57, 80-308 Gda\'{n}sk, Poland}
\email{krzysztof.szczygielski@ug.edu.pl}

\author{Robert Alicki}
\address[R. Alicki]{International Centre for Theory of Quantum Technologies (ICTQT), University of Gda\'{n}sk, Wita Stwosza 63, 80-308 Gda\'{n}sk, Poland}
\email{robert.alicki@ug.edu.pl}

\begin{abstract}
We extend Howland time-independent formalism to the case of completely positive and trace preserving dynamics of finite dimensional open quantum systems governed by periodic, time dependent Lindbladian in Weak Coupling Limit, expanding our result from previous papers. We propose the Bochner space of periodic, square integrable matrix valued functions, as well as its tensor product representation, as the generalized space of states within the time-independent formalism. We examine some densely defined operators on this space, together with their Fourier-like expansions and address some problems related to their convergence by employing general results on Banach-space valued Fourier series, such as the generalized Carleson-Hunt theorem. We formulate Markovian dynamics in the generalized space of states by constructing appropriate time-independent Lindbladian in standard (Lindblad -- Gorini -- Kossakowski -- Sudarshan) form, as well as one-parameter semigroup of bounded evolution maps. We show their similarity with Markovian generators and dynamical maps defined on matrix space, i.e.~the generator still possesses a standard form (extended by closed perturbation) and the resulting semigroup is also completely positive, trace preserving and a contraction.
\end{abstract}

\keywords{Open quantum systems; CP-divisible dynamics; Floquet formalism; Bochner spaces.}

\maketitle

\section{Introduction}

Completely positive (CP) and trace preserving (TP) dynamics of open quantum systems governed by time-dependent generators recently began gaining an increasing attention worldwide. In particular, the concept of evolution of open system in the regime of Markovian approximation and under external perturbation of \emph{periodic} nature led to the formulation of both Markovian Master Equation (MME) and appropriate quantum dynamical map in \cite{Alicki2006b}. In this paper we present an extension of our previous results \cite{Alicki2006b,Szczygielski2014} on CP-divisible dynamics governed by periodic generator in standard form by introducing the so-called \emph{time-independent formalism}. Such approach was originally invented and applied in the case of unitary, reversible dynamics under periodic self-adjoint bounded Hamiltonian by Shirley \cite{Shirley_1965}, Sambe \cite{Sambe_1973} and Howland \cite{Howland1974,Howland1979} and was successfully utilized in e.g. description of nuclear magnetic resonance (NMR) \cite{Ernst2005,Leskes2010,Scholz2010}, theory of decoherence suppresion \cite{Bach2014}, general laser spectroscopy \cite{Ho1983,Chu2004} and others. Time-independent formalism employs the Floquet theory in order to lift the \emph{dynamic} description of ordinary differential equations (ODEs), such as Schroedinger equation, to in a sense \emph{static}, algebraic one constructed in the infinite-dimensional \emph{generalized space of states}. The original ODE is then mapped to an eigenequation of some unbounded self-adjoint linear operator, the \emph{Floquet Hamiltonian}, acting on this large space. Moreover, one can show that the semigroup generated by this operator may be actually utilized to construct the solution of original ODE. Here we present a construction of the generalized space of states and appropriate dynamics in case of open quantum system described by Markovian Master Equation.

This paper is structured as follows: in Section \ref{sect:OpenQuantumSystems} we provide a concise recollection of most important results regarding CP-divisible dynamics of $d$-dimensional open quantum systems with periodically modulated Hamiltonian, preceded by short introduction to Floquet theory. The main result of this paper is presented in Section \ref{sect:SambeFloquet}, where we propose a formal construction of the time-independent formalism for CP-divisible open quantum systems. Starting with analysis of Banach space-valued Fourier series, we construct the generalized space of states as a certain Bochner space (definition \ref{def:GeneralizedSpace}), as well as appropriate representations of algebra of bounded operators. Then, we derive generalized CP-divisible quantum dynamical semigroup in Section \ref{subsect:MEinSspace} (theorem \ref{thm:WtauCPTPcontraction}) and demonstrate its relation to CP-divisible dynamical maps on matrix space. Finally, we elaborate (section \ref{sec:FourierExpansions}) on Fourier-like expansions of operators acting on the generalized space of states (which is a common and popular representation in realm of traditional time-independent formalism) and address their convergence.

Throughout this paper, $\matrd$ will denote a linear space of complex square matrices of size $d$ and we will use boldface to denote vectors in $\complexes^d$ or functions with values in $\complexes^d$ or $\matrd$. $\|\cdot\|_{\infty}$ will generally denote induced operator norm (of matrix or general linear operator), while $\| \cdot \|_{L^p}$ or $\| \cdot \|_{\mathscr{L}^p}$, $p\in[1,\infty ]$, will be reserved for functions (either complex or linear space valued) and denote appropriate $L^p$-norm, with $p=\infty$ reserved for supremum norm. Conjugate transpose of matrix $a\in\matrd$ will be denoted as $a^\hadj$ and we will use a notation $x^\adj$ to denote involution in certain spaces; for $T$ being a linear map on Banach space $\mathcal{X}$, $T^\dual$ will denote a dual map and $\mathcal{X}^\dual$ will be a topological dual space. Time derivative will be interchangeably denoted as $\frac{dy}{dt}$ or $\dot{y}$. We will emphasize any other notational conventions, as needed.

\section{Periodically controlled open quantum systems}
\label{sect:OpenQuantumSystems}

\subsection{Floquet theory}
\label{subsect:FloquetTheory}

Let $\mathbb{T} \simeq \reals / T \integers$, $T > 0$, denote a circle group and let a matrix-valued function $t \mapsto A(t) \in \matrd$, $t\in\mathbb{T}$, be continuous. Consider the initial value problem of a form
\begin{equation}\label{eq:CauchyProblem}
	\dot{x}(t) = A(t) x(t), \quad x(0) = x_0,
\end{equation}
where $x(t), x_0 \in \complexes^d$. Then, there exists a function $t\mapsto\Phi(t)$ such that each solution to \eqref{eq:CauchyProblem} is of a form $x(t) = \Phi(t) c$ for some vector $c\in\complexes^d$. $\Phi(t)$ is called the \emph{fundamental matrix solution} of the ODE and by construction, $\dot{\Phi}(t) = A(t) \Phi(t)$ and $\Phi(t)$ is invertible for each $t$. Since fundamental solution is always non-unique, one may additionally require $\Phi(0) = I$; in this case, it is called \emph{principal}. Then, the following Floquet's theorem applies:

\begin{theorem}[\textbf{Floquet}]\label{thm:FloquetTheorem}
For a Cauchy problem of a form \eqref{eq:CauchyProblem} with continuous and $T$--periodic $A(t)$, there exist a $T$--periodic function $t\mapsto P(t) \in \matrd$ and constant $B\in\matrd$ such that the fundamental matrix solution is 
\begin{equation}\label{eq:FloquetNormalForm}
	\Phi(t) = P(t) e^{tB}.
\end{equation}
\end{theorem}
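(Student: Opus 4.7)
The plan is to follow the classical strategy based on the monodromy matrix. First I would work with the principal fundamental solution $\Phi(t)$, i.e. the one satisfying $\Phi(0) = I$. The key observation is that because $A(t)$ is $T$-periodic, the shifted matrix $\Psi(t) := \Phi(t+T)$ also satisfies the matrix ODE $\dot{\Psi}(t) = A(t) \Psi(t)$. Since the space of matrix solutions is completely determined by the value at a single point (by uniqueness for the linear Cauchy problem), there must exist a constant invertible matrix $C \in \matrd$ with $\Phi(t+T) = \Phi(t)\, C$ for all $t$. Evaluating at $t=0$ identifies $C = \Phi(T)$, the \emph{monodromy matrix}, which is invertible since $\Phi(t)$ is.

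Next I would choose a matrix logarithm of $C$. Since $C$ is invertible, there exists $B \in \matrd$ with $e^{TB} = C$; in the present complex setting one may invoke the Jordan normal form of $C$ and define a logarithm on each Jordan block by the usual power series after factoring out the eigenvalue, or equivalently apply the holomorphic functional calculus on a simply connected neighbourhood of the spectrum avoiding the origin. With such a $B$ in hand, I would define
\begin{equation}
P(t) := \Phi(t)\, e^{-tB},
\end{equation}
which is continuous (in fact as smooth as $\Phi$). One then checks immediately that $\Phi(t) = P(t)\, e^{tB}$, so the decomposition \eqref{eq:FloquetNormalForm} holds.

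Finally I would verify $T$-periodicity of $P$ by a direct computation:
\begin{equation}
P(t+T) = \Phi(t+T)\, e^{-(t+T)B} = \Phi(t)\, C\, e^{-TB} e^{-tB} = \Phi(t)\, e^{-tB} = P(t),
\end{equation}
using $C = e^{TB}$ and the fact that $e^{-TB}$ and $C$ commute. This closes the argument.

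The main obstacle I would anticipate is the construction of the logarithm $B$ of the monodromy matrix $C$. Over the reals one cannot in general find a real $B$ with $e^{TB}=C$ (one must then pass to period $2T$ and use $C^{2}$ which always has a real logarithm), but since the statement is phrased over $\matrd = \complexes^{d\times d}$ this subtlety does not arise and $B$ is directly furnished by the Jordan decomposition. The remaining steps are routine consequences of uniqueness for linear ODEs and the semigroup property of the matrix exponential.
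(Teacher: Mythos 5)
Your proof is correct and is precisely the classical monodromy-matrix argument (periodicity gives $\Phi(t+T)=\Phi(t)\Phi(T)$, take a complex logarithm $B$ of the invertible monodromy matrix, set $P(t)=\Phi(t)e^{-tB}$), which is the proof the paper itself defers to in \cite{Chicone2006}. Your remark that the complex setting sidesteps the real-logarithm obstruction is also accurate, so there is nothing to correct.
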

For proof, see e.g. \cite{Chicone2006}. Expression \eqref{eq:FloquetNormalForm} is known as \emph{Floquet normal form} of fundamental matrix solution and $e^{TB}$ is called \emph{monodromy matrix}. Assume $B$ to be diagonalizable, i.e.~that there exists a linearly independent system $(\varphi_i)_{i=1}^{d}$ in $\complexes^d$ satisfying eigenequations $B\varphi_i = \xi_i \varphi_i$, where $\{\xi_i\}_{i=1}^{d} = \spec{B}$ is the spectrum of $B$. Then, $\spec{e^{TB}} = \{e^{\xi_i T}\}_{i=1}^{d}\subset \complexes\setminus \{0\}$ and monodromy matrix is diagonalizable as well. Eigenvalues $\xi_i$ of $B$ are called the \emph{Floquet exponents} (real part of each $\xi_i$ is called \emph{Lyapunov exponent}), while eigenvalues $e^{\xi_i T}$ of monodromy matrix are also known as \emph{characteristic multipliers} of the ODE; much about asymptotic stability of solutions can be implied from the exact analysis of characteristic multipliers \cite{Chicone2006}. One can immediately spot however, that spectrum of $B$ is non-uniquely determined by form of the monodromy matrix. Indeed, for every $\xi_j \in \spec{B}$ there exists a countable set $\{\xi_{j,n} \in \spec{B_k}\}$,
\begin{equation}
	\xi_{j,n} = \xi_{j} + in\Omega , \quad \xi_j \in \spec{e^{TB}}, \, n\in\integers,
\end{equation}
for $\Omega = 2\pi/ T$, such that $e^{\xi_{j,n}T} = e^{\xi_j T}$, so in general there exist infinitely many Floquet exponents corresponding to the same set of characteristic multipliers. We will call $\bigcup_{j=1}^{d} \{\xi_{j,n} \}$ a \emph{set of shifted Floquet exponents}.

As $(\varphi_j)_{j=1}^{d}$ is a basis in $\complexes^d$, one can put $c = \sum_{j=1}^{d} c_j \varphi_j$ and
\begin{equation}\label{eq:ODEgenSolDecomp}
	x(t) = \Phi(t) c = \sum_{j=1}^{d} c_j \varphi_j (t),
\end{equation}
where $\varphi_j (t) = e^{t\xi_j}\phi_j (t)$, and $\phi_j(t) = P(t) \varphi_j$ are periodic functions, called the \emph{Floquet solutions} or \emph{Floquet states}. It is also not hard to prove that $(\varphi_j (t))_{j=1}^{d}$ and $(\phi_j (t))_{j=1}^{d}$ are \emph{bases} in $\complexes^d$ for every $t\in\reals$; to prove this, it suffices to show linear independence of both sets which comes easily from invertibility of $\Phi(t)$ and linear independence of $(\varphi_j )_{j=1}^{d}$.

\subsection{Periodically driven open quantum systems}

Consider the open quantum system described by Hilbert space $\complexes^d$, $d \geqslant 1$, equipped with standard Euclidean (dot) inner product $\vec{x} \cdot \vec{y} = \sum_{j=1}^{d} \overline{x_j} y_j$, $\vec{x},\vec{y} \in \complexes^{d}$. Let $B(\complexes^d) \simeq \matrd$ be the C*-algebra of all linear operators on $\complexes^d$, endowed with operator norm $\| \cdot \|_{\infty}$, induced by Euclidean norm, i.e.
\begin{equation}\label{eq:InducedOpNorm}
	\| a \|_\infty = \sup_{\|\vec{w}\|\leqslant 1} \| a\vec{w}\|, \quad \| \vec{w} \| = \sqrt{\vec{w}\cdot\vec{w}}, \quad \vec{w} \in \complexes^d ,
\end{equation}
and conjugate transpose $a \mapsto a^\hadj$ as involution. We endow $\matrd$ with the \emph{trace norm}
\begin{equation}
	\tnorm{a} = \tr{\sqrt{a^\hadj a}},
\end{equation}
which makes $(\matrd,\tnorm{\cdot})$ a Banach space and a Banach algebra, since it can be shown to satisfy inequalities
\begin{equation}\label{eq:InequalityOfMatrixNorms}
	\| a \|_{\infty} \leqslant \tnorm{a}, \qquad \tnorm{ab} \leqslant \| a \|_{\infty} \tnorm{b} \leqslant \tnorm{a}\tnorm{b}
\end{equation}
for any $a,b\in\matrd$. 

As usual, we describe the state of open system with a time-dependent \emph{density operator} $\rho_t \in \matrd$, i.e.~a Hermitian, positive-semidefinite matrix of trace one. The time evolution of $\rho_t$ will be given by sufficiently smooth mapping $t\mapsto \Lambda_t \in B(\matrd)$ such that
\begin{equation}
	\rho_t = \Lambda_t (\rho_0), \qquad \rho_t \in (\matrd)^{+}, \, \tnorm{\rho_t} = 1,
\end{equation}
where $\rho_0$ is some initial density matrix, $(\matrd)^{+}$ is a positive cone in $\matrd$ and linear map $\Lambda_t$ is \emph{completely positive and trace preserving} (CP, TP) on $\matrd$ for each $t\in [0,\infty)$, called the \emph{quantum dynamical map}. The associated \emph{propagator} $(t,s)\mapsto V_{t,s}$ will be required to satisfy the divisibility condition
\begin{equation}
	\Lambda_t = V_{t,s} \Lambda_s , \qquad 0 \leqslant s \leqslant t.
\end{equation}
Provided $\Lambda_{s}^{-1}$ exists, one easily obtains $V_{t,s} = \Lambda_{t} \Lambda_{s}^{-1}$. Then, $\Lambda_t$ is called \emph{CP-divisible} or \emph{Markovian} if and only if $V_{t,s}$ is CP, TP for any $s \in [0,t]$ for a given $t\in [0,\infty )$. We will focus solely on the case of \emph{differentiable} dynamical maps, i.e.~we invoke the common assumption of existence of time-dependent \emph{Lindbladian} $L_t$, such that $\Lambda_t$ satisfies the time-local MME, here presented in two equivalent forms, 
\begin{equation}\label{eq:MMELambda}
	\dot{\Lambda}_t = L_t \Lambda_t, \qquad \dot{\rho}_t = L_t (\rho_t),
\end{equation}
with initial conditions often stated as $\rho_{0}$ and $\Lambda_0 = I$. One can show \cite{Rivas2012,Chruscinski2014a} that if the MME is satisfied, then the resulting $\Lambda_t$ is CP-divisible if and only if $L_t$ admits the so-called \emph{standard (Kossakowski -- Lindblad -- Gorini -- Sudarshan)} form (we put $\hbar = 1$)
\begin{equation}\label{eq:StandardForm}
	L_{t}(\rho) = -i \comm{H_t}{\rho} + \sum_{k=1}^{d^2 - 1} \Big(V_{k,t} \rho V_{k,t}^{\hadj} - \frac{1}{2}\acomm{V_{k,t}^{\hadj}V_{k,t}}{\rho}\Big),
\end{equation}
where $H_t = H_{t}^{\hadj} \in \matrd$ is the effective Hamiltonian of the system and $( V_{k,t})$ is a sequence of time-dependent noise operators. No universal method of solving \eqref{eq:MMELambda} is known apart from general techniques involving Dyson or Magnus expansions \cite{Alicki2006a,Rivas2012,Blanes1998}. Nevertheless, if the function $t\mapsto L_t$ is \emph{constant} or \emph{periodic}, the exact form of solutions is known. In the former case, namely if $L_t = L$, the solution is the celebrated \emph{quantum dynamical semigroup} $\{e^{tL} : t \geqslant 0 \}$, i.e.~one-parameter, strongly continuous contraction semigroup of CP, TP maps on $\matrd$; then, $L$ is an infinitesimal generator of this semigroup given in standard form \eqref{eq:StandardForm}, however without any time dependence (see \cite{Lindblad1976,Gorini1976,Alicki2006a,Rivas2012,Davies1974,Davies1976} and references therein for details). The latter case, involving periodic time-dependence of $L_t$ is more involved, as we elaborate in next section.

\subsection{Examples of periodic Lindbladians}
\label{sect:ExamplesofperiodicLindbladians}

At least two classes of \emph{periodic} Lindbladians, which were derived from the underlying Hamiltonian dynamics of the open system weakly interacting with a quantum stationary reservoir, are known in the literature. The presented examples are used in quantum thermodynamics literature to describe various types of heat engines or refrigerators.

The first one was presented in e.g. \cite{Alicki2006b,Szczygielski2013,Szczygielski2014}; here the open quantum system was described by a time-dependent, periodic Hamiltonian $H_t = H_{t+T}$ and coupled to a large reservoir described in the thermodynamic limit. The interaction Hamiltonian was given by usual expression
\begin{equation}
	H_{\mathrm{i}} = \sum_{k\in\mathcal{I}} S_k \otimes R_k ,
\end{equation}
where $S_k \in \matrd$, and $R_k$ are the reservoir's observables. Beside the standard assumptions allowing for application of Weak Coupling Limit (WCL) techniques, one assumes that the modulation is \emph{fast}, i.e.~its frequency $\Omega$ is comparable to the relevant Bohr frequencies of the system Hamiltonian. In this case one first applies the Floquet decomposition of the system unitary dynamics defined by propagator $u_t$,
\begin{equation}\label{eq:TheSchroedingerEquation}
	\dot{u}_t = -i H_t u_t, \quad H_{t+T} = H_t, \, u_0 = I.
\end{equation}
Due to Floquet theorem (Theorem \ref{thm:FloquetTheorem}), a principal fundamental matrix solution $u_t$ is then
\begin{equation}\label{eq:UnitaryEvolutionOperator}
	u_t = p_t e^{-it\bar{H}},
\end{equation}
where $p_t$ is unitary and periodic, $p_0 = I$, and $\bar{H}$ is Hermitian and called the \emph{averaged Hamiltonian}. A set of \emph{Bohr quasifrequencies} is defined as
\begin{equation}
	\{\omega\} = \{\epsilon - \epsilon^\prime : \epsilon, \epsilon^\prime \in \spec{\bar{H}}\} .
\end{equation}
The Schroedinger picture system dynamics is denoted by $U_t (a) = u_t a u_t^{\hadj}$ and its periodic component by $P_{t}(a) = p_{t} a p_{t}^{\hadj}$. Then, applying WCL technique, one can derive the MME of the form
\begin{equation}\label{eq:LtFloquet}
	\dot{\rho}_t = L_t(\rho_t) = -i\comm{H_t}{\rho_t} + (P_t \, K \, P_{t}^{-1})(\rho_t)
\end{equation}
with solution in a factorized form
\begin{equation}\label{eq:LambdaGeneral}
	\Lambda_t = U_t e^{tK} = P_t e^{t\bar{L}}, \quad \bar{L} = -i\ad{\bar{H}} + K.
\end{equation}
The Davies type structure of the semigroup generator $K$ is 
\begin{equation}\label{eq:LindInterPicture}
	K(\rho) = \sum_{k,k' \in \mathcal{I}}\sum_{q\in\integers}\sum_{\{\omega\}}\gamma_{kk'}(\omega + q\Omega) \Big( S_{kq\omega} \rho S_{k'q\omega}^{\hadj} - \frac{1}{2}\acomm{S_{kq\omega}^{\hadj}S_{k'q\omega}}{\rho}\Big).
\end{equation}
Operators $S_{kq\omega}$ may be then shown to satisfy relations
\begin{equation}\label{eq:SkomegaRelations}
	\comm{\bar{H}}{S_{kq\omega}} = \omega S_{kq\omega}, \qquad S_{kq\omega}^{\hadj} = S_{k,-q,-\omega},
\end{equation}
and the positive definite matrix $[\gamma_{kk'}(\omega + q\Omega)]$ is, as usually, the Fourier transform of suitable reservoir correlation matrix taken at $\omega + q\Omega$. 
By direct computation one can show that the semigroup generator $K$ \eqref{eq:LindInterPicture} commutes with the generator $-i\ad{\bar{H}}$. This property leads to product forms of the dynamics \eqref{eq:LambdaGeneral}. A simple diagonalization procedure transforms \eqref{eq:LindInterPicture} into the standard form and then the generator in MME \eqref{eq:LtFloquet} takes the form \eqref{eq:StandardForm} with periodic $H_t$ and $V_k^t$.

The second class can be derived for \emph{slowly varying} system Hamiltonians using WCL approach \cite{Davies1978,Alicki1979,Alicki1979a}.  Namely, following Davies and Spohn \cite{Davies1978}, we assume that the full Hamiltonian of system plus reservoir is of a form
\begin{equation}
	H_{t}^{\text{sr}} = H_{\lambda^2 t} \otimes I + I \otimes H_{\text{r}} + \lambda H_{\text{int.}},
\end{equation}
where $H_{\lambda^2 t}$, $H_{\text{r}}$ and $H_{\text{int.}}$ are self-adjoint Hamiltonians of the system, reservoir and interaction, respectively, and $0 < \lambda \ll 1$ is a small dimensionless parameter (hence the weak coupling). Note that for technical reasons, the system's part of $H_{t}^{\text{sr}}$ is actually assumed to be a function of \emph{rescaled} time parameter $\lambda^2 t$, which then allows to obtain a rigorous and well-behaved \emph{adiabatic limit} of resulting dynamics via application of quantum adiabatic theorem (see \cite{Davies1978} for more rigorous treatment). This rescaling implies that the system's Hamiltonian changes \emph{slowly} compared to any other relevant time change within the model, i.e. by increasing time $t$ by $\Delta t$ small enough, the change of $H_{\lambda^2 t}$ is
\begin{equation}\label{eq:AdiabaticChangeOfH}
	H_{\lambda^2 (t+\Delta t)} - H_{\lambda^2 t} \simeq \lambda^2 \left. \frac{dH_s}{ds}\right|_{\lambda^2 t} \Delta t + \mathcal{O}(\lambda^4),
\end{equation}
which is small (in norm sense) due to factor $\lambda^2 \ll 1$. Then, applying techniques similar to those in \cite{Davies1978}, one can derive the MME of the following form
\begin{equation}\label{eq:AdFloquet}
	\dot{\rho}_t = -i\comm{H_t}{\rho_t} + \sum_{k,k'}\sum_{\{\omega_t\}}\gamma_{kk'}(\omega_t) \Big( S_{k\omega_t} \rho_t S_{k'\omega_t}^{\hadj} - \frac{1}{2}\acomm{S_{k\omega_t}^{\hadj}S_{k'\omega_t}}{\rho_t}\Big),
\end{equation}
where $\{\omega_t\}$ are time-dependent Bohr frequencies of $H_t$ and matrices $S_{k\omega_t}$ satisfy
\begin{equation}\label{eq:AdSkomegaRelations}
	\comm{H_t}{S_{k\omega_t}} = \omega_t S_{k\omega_t}, \qquad S_{k\omega_t}^{\hadj} = S_{k,-\omega_t}.
\end{equation}
Formula \eqref{eq:AdiabaticChangeOfH} then implies, that all Bohr frequencies again are slowly-varying, i.e.~$\Delta \omega_t \simeq \lambda^2 \dot{\omega}_{\lambda^2 t} \Delta t$. It follows from \eqref{eq:AdSkomegaRelations} that at any moment of time the Hamiltonian part commutes with the dissipative one. The WCL procedure assures that if $H_t$ is then additionally assumed \emph{periodic}, matrices $S_{k\omega_t}$ are periodic and the whole Lindbladian in \eqref{eq:AdFloquet} is also periodic.

\section{Howland time-independent formalism}
\label{sect:SambeFloquet}

In this section we will present a formal construction of time-independent formalism for open quantum system governed by MME \eqref{eq:LtFloquet}. The formalism will exhibit some similarities with usual quantum-dynamical approach, e.g. complete positivity, trace preservation and contractivity of evolution maps. We will largely adapt the approach by Sambe \cite{Sambe_1973} and Howland \cite{Howland1974,Howland1979} as mentioned earlier, who introduced the idea of enlarged Hilbert space of states and generalized dynamics in order to find solutions of Schroedinger equation with periodic Hamiltonian as an extension of method proposed by Shirley \cite{Shirley_1965}. We will not, however, elaborate much on the unitary (Hamiltonian) time independent formalism; interested readers should refer to literature, e.g. \cite{Shirley_1965,Sambe_1973,Howland1974,Howland1979,Scholz2010,Ernst2005,Ho1983,Chu2004} and references therein, for further details.

The current section contains our main results. In section \ref{sect:FourierAnalysis} we propose a formal candidate for generalized space of states (in definition \ref{def:GeneralizedSpace}) suitable for open systems theory. We propose this space to be a certain \emph{Bochner space} of periodic functions with values in Banach space $(\matrd , \tnorm{\cdot})$ of matrices furnished with trace norm. Section \ref{sec:IsomorphicRep} contains a formal construction of generalized space of states from its underlying \emph{tensor product} structure, i.e.~we show that in principle it may be constructed as a Banach space completion of certain algebraic tensor product with respect to Bochner space norm. We emphasize here, that both the definition of the space and its tensorial structure are stated in the similar spirit as in traditional approach, however they are of \emph{different} nature (see remark \ref{remark:Comparison} for comparisons). We then formulate notions of \emph{complete positivity} and \emph{trace preservation} of maps acting on generalized space (based on definitions \ref{eq:DefinitionNpos}, \ref{def:TraceOnGenSpace}), followed by miscellaneous results in section \ref{sec:ExamplesOfOperators}. Next, we formulate a definition of Floquet Lindbladian (being a counterpart of Floquet Hamiltonian from Howland approach) and show some properties of corresponding generated $C_0$-semigroup in theorem \ref{thm:WtauCPTPcontraction}, which may be considered as the main result of the paper. The last part of section \ref{sect:SambeFloquet} concerns our results for convergence of certain Fourier-like expansions of operators acting on generalized space of states.

For convenience we will require some regularity of Hamiltonian $H_t$, namely we will assume $t\mapsto H_t$ is periodic and of piecewise continuous first derivative. This assumption is quite mild and will be of particular importance for convergence of certain operator series later on.

\subsection{Generalized space of states}
\label{sect:FourierAnalysis}

Let $(V, \Sigma, \mu)$ be a finite measure space and let $(\mathcal{X}, \mathcal{X}^\dual)$ be a dual pair for Banach space $(\mathcal{X},\|\cdot \|)$, with duality pairing $\varphi (x) \equiv (\varphi, a)_{\mathcal{X},\mathcal{X}^\dual}$, $a\in \mathcal{X}$, $\varphi \in \mathcal{X}^\dual$. We denote by $\mathscr{L}^p (V,\mathcal{X})$, $p \in [1,\infty ]$, the \emph{Bochner space} of $\mu$-measurable, $p$-integrable, $\mathcal{X}$-valued functions on $V$, complete with respect to $\mathscr{L}^p$-norm
\begin{equation}
	\| f \|_{\mathscr{L}^p} = \Big( \int_{V} \| f \|^{p} \, d\mu \Big)^{1/p}, \qquad \| f \|_{\mathscr{L}^\infty} = \esssup_{x\in V}{\| f(x) \|} .
\end{equation}
Naturally, $\mathscr{L}^p (V, \mathcal{X})$ is a Banach space for every $p\in [1,\infty)$. One shows \cite{Diestel_1977} that if $\mathcal{X}$ has a \emph{Radon-Nikodym property} with respect to $(V,\Sigma,\mu)$, then $\mathscr{L}^p (V, \mathcal{X})^\dual$ is isometrically isomorphic to $\mathscr{L}^q (V, \mathcal{X}^\dual)$ for $p^{-1}+q^{-1}=1$  and the duality pairing $\mathscr{L}^p \times \mathscr{L}^q \mapsto (f,g)_{\mathscr{L}^p} \in \complexes$ takes a form of Lebesgue-Stieltjes integral
\begin{equation}
	(f,g)_{\mathscr{L}^p} = \int_{V} (f(x), g(x))_{\mathcal{X},\mathcal{X}^\dual} \, d\mu(x).
\end{equation}
Let $(V, \Sigma, \mu) = (\mathbb{T},\mathcal{B}(\mathbb{T}),T^{-1}\lambda)$ with $\Sigma = \mathcal{B}(\mathbb{T})$ the $\sigma$-algebra of Borel subsets of $\mathbb{T}$ and $\mu = T^{-1}\lambda$ a normalized Lebesgue measure, $T^{-1}\lambda(\mathbb{T})=1$. Also, let $(e_n)_{n\in\integers}$, $e_n (t) = e^{in\Omega t}$ denote the Fourier basis in space $L^p (\mathbb{T})$, $p \in (1,\infty )$ and let $D_n = \sum_{|k|\leqslant n} e_k$ be the Dirichlet kernel. The \emph{$n$-th partial Fourier sum operator} $D_n \, *$ is given by convolving $D_n$ with $f$ as
\begin{equation}
	f \mapsto D_n * f = \sum_{|k|\leqslant n} f_k \odot e_{k}, \qquad f_k = \frac{1}{T} \int_{\mathbb{T}} f(t) e^{-ik\Omega t} dt,
\end{equation}
where for $(a,h)\in \mathcal{X} \times L^p (\mathbb{T})$ we define $(a \odot h)(t) = h(t)a$. It is a standard result in harmonic analysis that for any complex-valued $\varphi \in L^p (\mathbb{T})$, $p\in (1,\infty )$, the sequence of its partial Fourier sums converges to $\varphi$ in $L^p$-norm \cite{Grafakos2009}, and by Carleson-Hunt theorem \cite{MR0238019,Carleson1966}, also pointwise almost everywhere (a.e.) over $\mathbb{T}$. It turns out that these results can be generalized elegantly to the case of Bochner spaces:

\begin{theorem}\label{FourierSeriesConvLp}
If $(\mathcal{X}, \| \cdot \| )$ is a UMD (unconditionality of martingale differences) Banach space with unconditional Schauder basis and $p\in(1,\infty)$, then $(D_n * f)$ converges to $f$ in norm in $\mathscr{L}^p (\mathbb{T},\mathcal{X})$ and also pointwise a.e. to $f(t)$, $t\in\mathbb{T}$, i.e.~for any $f$,
\begin{subequations}
	\begin{equation}
			\lim_{n\to\infty} \int_{\mathbb{T}} \Big\| \sum_{|k|\leqslant n} f_k e^{ik\Omega t} - f(t)\Big\|^{p} dt = 0,
	\end{equation}
	\begin{equation}
			\lim_{n\to\infty} \| (D_n * f)(t) - f(t) \| = 0 \quad \text{for a.e. }t\in\mathbb{T}.
	\end{equation}
\end{subequations}
\end{theorem}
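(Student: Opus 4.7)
The plan splits the theorem into its two statements, each reducing to a vector-valued extension of a classical scalar result from harmonic analysis.

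For the $\mathscr{L}^p$-norm convergence, I would run a standard three-$\varepsilon$ argument built on (i) uniform boundedness of the partial-sum projections $S_n := D_n * (\cdot)$ on $\mathscr{L}^p(\mathbb{T},\mathcal{X})$ and (ii) density of $\mathcal{X}$-valued trigonometric polynomials. For (i), invoke Bourgain's characterization of UMD spaces: $\mathcal{X}$ is UMD iff the periodic Hilbert transform (equivalently, the Riesz projection onto nonnegative Fourier modes) extends boundedly to $\mathscr{L}^p(\mathbb{T},\mathcal{X})$ for every $p \in (1,\infty)$. Writing each $S_n$ as a combination of two modulated Riesz projections then yields a uniform bound $\sup_{n} \|S_n\|_{\mathscr{L}^p \to \mathscr{L}^p} \leqslant C_p$. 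Point (ii) follows by approximating $f$ first by $\mathcal{X}$-valued simple functions and then replacing each by its convolution with a Fej\'er or de la Vall\'ee Poussin kernel. On any trigonometric polynomial $P = \sum_{|k| \leqslant N} a_k e_k$ one has $S_n P = P$ eventually, so the $3\varepsilon$ estimate closes.

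For the pointwise a.e.\ convergence, I would use the unconditional Schauder basis $(x_j)_{j \in \naturals}$ with associated coordinate functionals $(x_j^{\dual})$ to decompose $f(t) = \sum_j a_j(t)\, x_j$ where $a_j(t) := x_j^{\dual}(f(t)) \in L^p(\mathbb{T},\complexes)$. The scalar Carleson--Hunt theorem applied to each $a_j$ yields a full-measure set $E_j \subset \mathbb{T}$ on which $(D_n * a_j)(t) \to a_j(t)$; on $E := \bigcap_j E_j$, which still has full measure, and since $S_n$ commutes with continuous linear functionals, $x_j^{\dual}((D_n * f)(t)) \to x_j^{\dual}(f(t))$ for every $j$. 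To upgrade this coordinate-wise convergence to genuine norm convergence in $\mathcal{X}$, I would invoke the vector-valued Carleson maximal inequality
\[
\bigl\| \sup_{n} \|(D_n * f)(\cdot)\| \bigr\|_{L^p(\mathbb{T})} \leqslant C_p \| f \|_{\mathscr{L}^p(\mathbb{T},\mathcal{X})},
\]
valid in UMD spaces with unconditional basis (essentially due to Rubio de Francia and Weisz). With this maximal bound secured, a.e.\ convergence on the dense subset of trigonometric polynomials, where it is trivial, extends to arbitrary $f$ by the classical density-plus-maximal-operator principle.

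The main obstacle is precisely this vector-valued Carleson maximal inequality. Scalar Carleson--Hunt alone only delivers coordinate-wise convergence of the basis expansion, and upgrading this to norm convergence in $\mathcal{X}$ genuinely requires both the UMD hypothesis (to control a Cotlar-type decomposition in the $\mathcal{X}$-valued setting) and the unconditional-basis structure (which supplies the lattice-like framework in which Rubio de Francia's extrapolation operates). By contrast, the norm-convergence half rests only on the relatively clean Hilbert-transform characterization of UMD, so the real work is concentrated in the maximal estimate underlying the a.e.\ statement.
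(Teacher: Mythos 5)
The paper does not actually prove this theorem: it is imported from the literature, with \cite{ArendtBu2010,Arendt2011} cited for the norm convergence and \cite{Francia1985} for the pointwise a.e.\ statement. Your outline is a faithful sketch of how those references proceed, so there is no divergence of method to report: uniform $\mathscr{L}^p$-boundedness of the partial-sum projections via the Hilbert-transform (Riesz-projection) characterization of UMD, plus density of $\mathcal{X}$-valued trigonometric polynomials, gives the first claim; the vector-valued Carleson maximal inequality of Rubio de Francia --- which is exactly where the additional unconditional-basis (equivalently, atomic lattice) hypothesis enters on top of UMD --- combined with the density-plus-maximal-operator principle gives the second. Two small remarks. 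First, once the maximal inequality and convergence on trigonometric polynomials are in hand, your coordinate-wise Carleson--Hunt step is redundant: the standard principle already yields a.e.\ convergence of $(D_n * f)(t)$ in the norm of $\mathcal{X}$, and the limit is identified as $f(t)$ by the $\mathscr{L}^p$-convergence, so the basis decomposition can be dropped (it is also, as you note, insufficient on its own, since coordinate-wise convergence of an unconditional expansion does not imply norm convergence). Second, you cite rather than prove the $\mathcal{X}$-valued Carleson bound; since the paper itself only cites the theorem, and since in the paper's application $\mathcal{X} = \matrd$ is finite-dimensional so that both claims reduce to the scalar Riesz and Carleson--Hunt theorems applied entrywise, this is the appropriate level of detail.
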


For more detailed version of the above theorem and proofs, see e.g. \cite{ArendtBu2010,Arendt2011} and references therein, as well as \cite{Francia1985}; for UMD spaces, see \cite{9780080532806}.
\par\vspace{\baselineskip}
We say that Fourier series of Banach space-valued function $f : \mathbb{T} \to (\mathcal{X},\| \cdot \|)$ converges \emph{uniformly} to $f$, if $(D_n * f)$ converges to $f$ in $\mathscr{L}^\infty$-norm, i.e.
\begin{equation}
	\lim_{n\to\infty} \esssup_{t\in\mathbb{T}} \| (D_n * f)(t) - f(t)\| = 0 .
\end{equation}
Moreover, Fourier series $\sum_{n\in\integers} f_n \odot e_n$, $f_n \in \mathcal{X}$, will be called \emph{absolutely convergent}, if $(\| f_{n} \|)_{n\in\integers} \in l^1 (\integers)$.
\par\vspace{\baselineskip}
Take $\mathcal{X} = (\matrd,\| \cdot \|)$, where $\| \cdot \|$ is (any) matrix norm on $\matrd$. Then $\mathcal{X}^\dual = (\matrd , \| \cdot \|^\dual)$, where $\| \cdot \|^\dual$ stands for the \emph{matrix dual norm}. If $\| \cdot \|$ is chosen to be the induced operator norm \eqref{eq:InducedOpNorm}, then $\| \cdot \|_{\infty}^{\dual} = \tnorm{\cdot}$, and vice versa \cite{RogerA.Horn2012}. Therefore, ${(\matrd_1)}^\dual \simeq \matrd_\infty = (\matrd, \| \cdot \|_\infty )$ with duality pairing $(a,b)_{\mathbb{C}^{d\times d}} = \tr{ab}$. Space $\matrd_1$ is reflexive, so has a Radon-Nikodym property; therefore $\mathscr{L}^p (\mathbb{T},\matrd_1)^{\dual} \simeq \mathscr{L}^q(\mathbb{T},\matrd_{\infty})$ for $p^{-1} + q^{-1} = 1$, which is complete with respect to norm
\begin{equation}
	\| f \|_{\mathscr{L}^{q}_{\infty}} = \Big( \frac{1}{T} \int_{\mathbb{T}} \| f(t) \|^{2}_{\infty} \, dt \Big)^{1/q}
\end{equation}
and the pairing for $(\mathscr{L}^p,\mathscr{L}^q)$ is $(f,g)_{\mathscr{L}^p} = \frac{1}{T} \int_{\mathbb{T}} \tr{f(t) g(t)} \, dt$.
\par\vspace{\baselineskip}
Now we are ready to formulate a notion of generalized space of states. We propose a following definition within time-independent formalism:

\begin{definition}\label{def:GeneralizedSpace}
The Bochner space $\mathscr{L}^2 (\mathbb{T},\matrd_1)$ of square integrable periodic matrix-valued functions, complete with a norm
\begin{equation}\label{eq:L2BochnerNorm}
	\| f \|_{\mathscr{L}^{2}_{1}} = \Big( \frac{1}{T} \int_{\mathbb{T}} \tnorm{f(t)}^{2} \, dt \Big)^{1/2},
\end{equation}
will be called the generalized space of states, or the Floquet space of states. The corresponding dual space is $\mathscr{L}^2 (\mathbb{T},\matrd_\infty)$ and the pairing is given by
\begin{equation}\label{eq:L2dualityPairing}
	(f,g)_{\mathscr{L}^2} = \frac{1}{T} \int_{\mathbb{T}} \tr{f(t) g(t)} \, dt.
\end{equation}
\end{definition} 

The following three remarks can provide additional justification for the proposed definition of enlarged space of states:

\begin{remark}
We choose the case $p=2$ mainly for good convergence of Fourier series. As $\matrd_1$ is of finite dimension, it is a UMD space; therefore any function $f \in \mathscr{L}^2 (\mathbb{T},\matrd_1)$ possesses a Fourier series which converges in norm and also pointwise a.e. by virtue of Theorem \ref{FourierSeriesConvLp}. This property will be important for existence and convergence of certain Fourier-like expansions of operators later on.
\end{remark}

\begin{remark}
One may define another norm $\| \cdot \|_{\mathscr{L}^{2}_{2}}$ in the space of periodic matrix-valued functions by
\begin{equation}
	\| f \|_{\mathscr{L}^{2}_{2}} = \Big( \frac{1}{T} \int_{\mathbb{T}} \fnorm{f(t)}^{2} \, dt \Big)^{1/2},
\end{equation}
where $\fnorm{a} = (\tr{a^\adj a})^{1/2}$ stands for Frobenius (Hilbert-Schmidt) norm of $a$. Putting $\matrd_2 = (\matrd, \fnorm{\cdot})$, the space $\mathscr{L}^2 (\mathbb{T}, \matrd_2)$ is a Hilbert space. Matrix norms $\tnorm{\cdot}$ and $\fnorm{\cdot}$ are naturally equivalent \cite{RogerA.Horn2012} and it is easy to see $\| \cdot \|_{\mathscr{L}^{2}_{1}}$ and $\| \cdot \|_{\mathscr{L}^{2}_{2}}$ are equivalent as well, thus inducing the same topology on $\mathscr{L}^2 (\mathbb{T},\matrd_1)$. Therefore, spaces $\mathscr{L}^2 (\mathbb{T}, \matrd_1)$ and $\mathscr{L}^2 (\mathbb{T}, \matrd_2)$ are canonically isomorphic as Banach spaces. This identification may be useful for various computational reasons (we will not, however, exploit it in this article).
\end{remark}

\begin{remark}
We choose $\tnorm{\cdot}$ norm in the target space for two reasons. First, it is a natural choice of norm of density matrices in $\matrd$ in quantum theory and is the most commonly used one, as it is numerically equal to the \emph{trace} of any positive semi-definite matrix. Second, it allows certain evolution maps on the enlarged space of states to be \emph{contractions} with respect to this norm, which is an expected property of irreversible quantum evolution.
\end{remark}

\subsection{Isomorphic representations of generalized space of states}
\label{sec:IsomorphicRep}

Bochner spaces may be given a standard \emph{tensorial} representation in a following manner: the algebraic tensor product $\matr{d} \otimes  L^2 (\mathbb{T})$ is embedded in $\mathscr{L}^2 (\mathbb{T},\matrd_1)$ via a natural injection $\iota$ defined on simple tensors as $\iota (a\otimes h) = a\odot h$ and then extended by linearity. This allows to endow this tensor product with a norm $x \mapsto \| x \|_{\mathscr{L}^{2}_{1}} = \| \iota(x) \|_{\mathscr{L}^{2}_{1}}$ (we use the same symbol to denote both norms). In result, the injection extends to an isometric isomorphism
\begin{equation}
	\iota : \matrd \, \bar{\otimes} \, L^2 (\mathbb{T}) \to \mathscr{L}^{2}(\mathbb{T},\matrd_1),
\end{equation}
where $\matrd \, \bar{\otimes} \, L^2 (\mathbb{T})$ is the completion of algebraic tensor product with respect to $\| \cdot \|_{\mathscr{L}^{2}_{1}}$; we will use a notation $\tilde{f}$ for elements in $\matrd \, \bar{\otimes} \, L^2 (\mathbb{T})$ to indicate the isomorphism, i.e.~$f = \iota(\tilde{f})$. We note that $\| \cdot \|_{\mathscr{L}^{2}_{1}}$ is not a tensor norm, as it does not satisfy the so-called mapping property \cite{A.Defant1992} and alternative choices are possible \cite{Calabuig2015}.

For each $\tilde{f} \in \matrd \,\bar{\otimes}\, L^2 (\mathbb{T})$ one can find a unique sequence $(f_n)_{n\in\integers} \subset \matrd$ such that $\tilde{f} = \sum_{n\in\integers} f_n \otimes e_n$, $f = \sum_{n\in\integers} f_n \odot e_n$, both series norm-convergent. Clearly, $\sum_{n\in\integers} f_n \odot e_n$ is a Fourier series of function $f = \iota(\tilde{f})$ and, since $\matrd_1$ is a UMD space, it converges pointwise a.e. by theorem \ref{FourierSeriesConvLp},
\begin{equation}\label{eq:MatrixFourierSeries}
	f(t) \stackrel{\mathrm{a.e.}}{=}\sum_{n\in\integers} f_n e^{in\Omega t}, \qquad f_n = \frac{1}{T} \int_{\mathbb{T}} f(t) e^{-in\Omega t} dt.
\end{equation}
Duality pairing in $\matrd \,\bar{\otimes}\, L^2 (\mathbb{T})$ will be then given by $(\tilde{f},\tilde{g})_{\bar{\otimes}} = (f,g)_{\mathscr{L}^2}$. The Bochner space and its tensorial form are identified, and so they can be referred to as \emph{the generalized space of states} simultaneously.

\begin{remark}\label{remark:Comparison}
We remark here, that despite the fact that the above construction of \emph{tensorial} form of enlarged space corresponds closely to the traditional one as given by Shirley, Sambe and Howland \cite{Shirley_1965,Sambe_1973,Howland1974,Howland1979}, construction presented in this article is notably different. In original approach, the generalized space was defined as a topological Hilbertian tensor product $\mathcal{H} \hat{\otimes} L^2 (\mathbb{T}) \simeq \mathscr{L}^2 (\mathbb{T},\mathcal{H})$, where $\mathcal{H}$ stands for a Hilbert space of (pure) states of the system and $\hat{\otimes}$ is a Hilbert space completion with respect to induced cross norm. As the traditional approach concerned unitary evolution of closed systems, such Hilbert space based construction was perfectly feasible. In our approach, the generalized space is no longer a Hilbert space (unless $p=2$ and the target space is Hilbert, which is not the case here), but rather a topological tensor product of two Banach spaces, complete with respect to Bochner space norm $\| \cdot \|_{\mathscr{L}^{2}_{1}}$ \eqref{eq:L2BochnerNorm}, which is not a tensor norm.
\end{remark}

\subsubsection{Operators on generalized space of states}
\label{subsubsect:OperatorsOnS}

The Banach algebra $B(\matrd_1)$ of all linear maps on $\matrd_1$ may be identified with $\complexes^{d^2 \times d^2}$, $\dim{B(\matrd_1)=d^4}$. We endow $B(\matrd_1)$ with supremum norm
\begin{equation}
	\| A \|_{\infty} = \sup_{\tnorm{a}\leqslant 1} \tnorm{A(a)}, \qquad a\in\matrd_1.
\end{equation}

\begin{definition}\label{def:FourierShifts}
Fourier shift operators $F_{n}$, $n\in\integers$ and Fourier number operator $F_z$, all acting on $L^{2}(\mathbb{T})$, are defined via equalities
\begin{equation}
	F_{n}(e_{m}) = e_{m+n}, \quad F_z (e_n) = n e_n , \qquad m,n\in\integers .
\end{equation}
\end{definition}

\begin{proposition}\label{prop:FourierOpComm}
Fourier operators have the following properties:
\begin{enumerate}
	\item\label{prop:FourierOpCommOne} $\{F_n : n\in\integers\}$ is a unitary commutative representation of group $(\integers, +)$,
	\item\label{prop:FourierOpCommTwo} $\comm{F_z}{F_n} = n F_n$,
	\item\label{prop:FourierOpCommThree} $F_z$ is self-adjoint and unbounded.
\end{enumerate}
\end{proposition}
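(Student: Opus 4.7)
The plan is to verify each item by direct computation on the Fourier orthonormal basis $(e_n)_{n\in\integers}$ of $L^2(\mathbb{T})$ and then extend by density, being careful about the domain for the unbounded operator $F_z$.

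For item \eqref{prop:FourierOpCommOne}, I would first check that the defining action $F_n(e_m) = e_{m+n}$ implies $F_n F_m (e_k) = e_{k+m+n} = F_{m+n}(e_k)$, so $F_n F_m = F_{n+m}$ on the basis, hence everywhere by linearity and continuity. Since $F_0 = I$, this gives a group homomorphism $\integers \ni n \mapsto F_n$, automatically commutative because $(\integers, +)$ is abelian. Unitarity follows because $F_n$ permutes the orthonormal Fourier basis, so it is an isometric bijection, with $F_n^{-1} = F_{-n} = F_n^\hadj$.

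For item \eqref{prop:FourierOpCommTwo}, I would compute on basis vectors:
\begin{equation}
F_z F_n (e_m) = F_z (e_{m+n}) = (m+n) e_{m+n}, \quad F_n F_z (e_m) = F_n (m e_m) = m e_{m+n},
\end{equation}
so $(F_z F_n - F_n F_z)(e_m) = n e_{m+n} = n F_n(e_m)$. Since $F_n$ is bounded and $F_z$ is defined at least on the dense span of $(e_m)$, this extends to the equality $\comm{F_z}{F_n} = n F_n$ on $\mathrm{span}\{e_m : m \in \integers\}$, which is a core for $F_z$.

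For item \eqref{prop:FourierOpCommThree}, I would first specify the natural maximal domain
\begin{equation}
\domain{F_z} = \Big\{ \varphi = \sum_{n\in\integers} \varphi_n e_n \in L^2(\mathbb{T}) : \sum_{n\in\integers} n^2 |\varphi_n|^2 < \infty \Big\},
\end{equation}
so that $F_z \varphi = \sum_n n \varphi_n e_n$. Via the Fourier isomorphism $L^2(\mathbb{T}) \simeq l^2(\integers)$, $F_z$ is unitarily equivalent to the multiplication operator $(\varphi_n) \mapsto (n\varphi_n)$ by the real function $n \mapsto n$; this is a standard example of a self-adjoint operator, so $F_z = F_z^\hadj$ on the stated domain. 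Unboundedness is immediate: $\| e_n \| = 1$ but $\| F_z e_n \| = |n| \to \infty$ as $|n|\to\infty$, so $F_z$ cannot satisfy a uniform bound.

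The only step requiring real care is item \eqref{prop:FourierOpCommThree}, where one must not only compute the action but also pin down the domain and invoke the multiplication-operator criterion for self-adjointness, rather than merely checking symmetry on the basis (which would only give essential self-adjointness on the core $\linspan{\{e_n\}}$); the remaining items reduce to one-line computations on basis vectors.
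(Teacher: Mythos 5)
Your proposal is correct and follows essentially the same route as the paper, which likewise verifies items (1) and (2) and the self-adjointness of $F_z$ by direct computation on the basis $(e_n)$ and proves unboundedness via $\sup_{n\in\integers}\|F_z(e_n)\|_{L^2}=\sup_{n}|n|=\infty$. You are in fact more careful than the paper on one point: the paper asserts self-adjointness without specifying $\domain{F_z}$, whereas your identification of the maximal domain and appeal to the multiplication-operator criterion is what genuinely upgrades symmetry on $\linspan{\{e_n\}}$ to self-adjointness.
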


\begin{proof}
Properties (\ref{prop:FourierOpCommOne}), (\ref{prop:FourierOpCommTwo}) as well as self-adjointness of $F_z$ can be easily shown by direct computation. Unboundedness of $F_z$ is also clear: simply consider sequence of basis vectors $(e_{n})_{n\in\integers}$ and notice $\sup_{n\in\integers}\| F_z (e_{n})\|_{L^2} = \sup_{n\in\integers} |n| = \infty$.
\end{proof}

Property (\ref{prop:FourierOpCommOne}) leads in particular to group-like conditions $F_0 = I$, $F_n = F_{1}^{n}$, $F_{n+m} = F_{n}F_{m}$, $F_{n}^{-1} = F_{-n} = F_{n}^{\hadj}$ and $\|F_n\|_{\infty}=1$. The idea standing behind introduction of Fourier operators is such that they may be efficiently used to represent periodic operator-valued functions as time-independent \emph{static} operators defined on the generalized space of states. This again will remain compatible with Howland formulation.

Let $A$ be a linear operator on $\mathscr{L}^{2}(\mathbb{T},\matrd_1)$. Then, the unique operator $\tilde{A}$ acting on $\matrd_1 \, \bar{\otimes} \, L^2 (\mathbb{T})$ given by
\begin{align}\label{eq:FourierLiftingDefinition}
	\tilde{A} = \iota^{-1}\circ A\circ\iota, \quad \domain{\tilde{A}} = \iota^{-1}(\domain{A}),
\end{align}
will be called the \emph{Fourier lifting of} $A$. Naturally, mapping $A \mapsto \tilde{A}$ is a bijection. By making a proper choice of $A$ one can then uniformly express operators, possibly time-dependent, acting on matrix-valued functions, as ``static'' operators acting on (a subspace of) $\matrd_1 \, \bar{\otimes} \, L^2 (\mathbb{T})$. In realm of time-independent formalism, this operation allows for passing from ODE-based analysis to time-independent, purely algebraic one. By additionally assuming denseness of $\domain{A}$, one may ensure a definition of Fourier lifting to be well-suited for representing discontinuous operators (like a derivative operator). Namely, if $\domain{A}$ is dense in $\mathscr{L}^{2}(\mathbb{T},\matrd_1)$, one can then define its dual (adjoint) operator $A^\dual$ by imposing
\begin{align}\label{eq:L2dualMap}
	(f, A(g))_{\mathscr{L}^2} &= \frac{1}{T} \int_{\mathbb{T}} \tr{f(t) (A\circ g)(t)} dt \\
	&= \frac{1}{T} \int_{\mathbb{T}} \tr{(A^\dual\circ f)(t) g(t)} dt = (A^\dual (f),g)_{\mathscr{L}^2} \nonumber
\end{align}
for duality pairing \eqref{eq:L2dualityPairing}. Its domain $\domain{A^\dual}$ consists of such $\varphi \in \mathscr{L}^2 (\mathbb{T},\matrd_\infty)$ that $\varphi \circ A$ extends boundedly on entire $\mathscr{L}^2 (\mathbb{T},\matrd_1)$; similarly, $\tilde{A}^\dual$ exists and it is straightforward to show, that $\tilde{A}^\dual = \iota^{-1}\circ A^\dual \circ \iota$ and $\domain{\tilde{A}^\dual} = \iota^{-1}(\domain{A^\dual})$, i.e.~$\tilde{A}^\dual$ is also a Fourier lifting of $A^\dual$.

Denote by $\mathscr{C}^{k}(\mathbb{T},\matrd_1)$, $k \in \{0,1,\infty \}$ the linear spaces of continuous ($k=0$), continuously differentiable ($k=1$) and smooth ($k=\infty$) matrix-valued functions on $\mathbb{T}$. As each periodic continuous function is bounded, in particular we have $\mathscr{C}^0 (\mathbb{T},\matrd_1) \subset \mathscr{L}^{2}(\mathbb{T},\matrd_1)$ set-theoretically. $\mathscr{C}^1 (\mathbb{T},\matrd_1)$ is then a dense linear subspace of $\mathscr{L}^{2}(\mathbb{T},\matrd_1)$, as it contains all the smooth functions.

\subsubsection{Some algebraic properties}

For $f,g \in \mathscr{L}^{2}(\mathbb{T},\matrd_1)$, we define their product $fg$ pointwise as $(fg)(t) = f(t) g(t)$ and for $\tilde{f},\tilde{g} \in \matrd_1 \, \bar{\otimes} \, L^2 (\mathbb{T})$ we have $\tilde{f}\tilde{g} = \iota^{-1}(fg)$. The target space of such multiplication can be unfortunately much larger than original space, as $\mathscr{L}^{2}(\mathbb{T},\matrd_1)$ is not an algebra. Nevertheless, multiplication by any essentially bounded function is well-posed, i.e.~inclusions $\mathscr{L}^\infty \mathscr{L}^{2}, \, \mathscr{L}^{2} \mathscr{L}^\infty\subset \mathscr{L}^{2}$ hold set-theoretically; to see this, simply estimate for, say, $f\in\mathscr{L}^{\infty}$ and $g\in\mathscr{L}^2$,
\begin{equation}
	\|fg\|^{2}_{\mathscr{L}^{2}_{1}} \leqslant \esssup_{t\in\mathbb{T}}{\tnorm{f(t)}^{2}} \cdot \frac{1}{T}\int_{\mathbb{T}} \tnorm{g(t)}^{2} \, dt = \| f \|_{\mathscr{L}^{\infty}}^{2} \| g \|_{\mathscr{L}^{2}}^{2}
\end{equation}
which easily comes from inequality \eqref{eq:InequalityOfMatrixNorms}. As a result, $fg \in \mathscr{L}^{2}(\mathbb{T},\matrd_1)$. Picking $g\in\mathscr{L}^{\infty}(\mathbb{T},\matrd_1)$ yields the second inclusion. It is easy to check that constant function $I\odot e_0$ is a neutral element of multiplication in $\mathscr{L}^{2}(\mathbb{T},\matrd_1)$. Likewise, $I \otimes e_0 = \iota^{-1}(I\odot e_0)$ is a neutral element of multiplication in $\matrd_1 \, \bar{\otimes} \, L^2 (\mathbb{T})$. 

Let $h \mapsto h^\adj$, $h^\adj (t) = \overline{h(t)}$ be an involution on $L^2 (\mathbb{T})$. A conjugate-linear injective map $f \mapsto f^\adj$ given on $\mathscr{L}^{2}(\mathbb{T},\matrd_1)$ as
\begin{equation}
	f^\adj (t) = f(t)^\hadj , \quad f^\adj = \sum_{n\in\integers} f_{n}^{\hadj} \odot e_{n}^{\adj},
\end{equation}
is then naturally an \emph{involution} on $\mathscr{L}^{2}(\mathbb{T},\matrd_1)$; we will call $f^\adj$ the \emph{adjoint} of $f$. Similarly, we define adjoint on $\matrd_1 \, \bar{\otimes} \, L^2 (\mathbb{T})$ by imposing $\tilde{f}^\adj = \iota^{-1}(f^\adj) = \sum_{n\in\integers}f_{n}^{\hadj}\otimes e_{n}^{\adj}$.

\subsubsection{Complete positivity, trace preservation and contractions}

Dynamical maps posses two important mathematical properties -- the \emph{complete positivity} and \emph{trace preservation} -- imposed to ensure the overall ``physicality'' of the evolution. One can then ask if it is reasonable to expect similar conditions to be satisfied by generalized quantum dynamics, lifted to space $\mathscr{L}^2 (\mathbb{T},\matrd_1)$. As we show later on, this indeed is the case: the generalized dynamics will exhibit complete positivity and trace preservation (as well as contractivity) conditions, in the sense which we define in this section. What is important, such situation perfectly resembles the original Hamiltonian time-independent formalism as proposed by Howland -- in that case, the evolution maps acting on Howland enlarged space of states inherited their properties, like unitarity, from ordinary time-dependent evolution operators acting on regular state space of the system. In our formulation, we observe the very same correspondence between maps on $\matrd_1$ and $\mathscr{L}^{2}(\mathbb{T},\matrd_1)$ as they also share the most crucial and expected properties.
\par\vspace{\baselineskip}
We will call $f\in\mathscr{L}^2 (\mathbb{T},\matrd_1)$ \emph{positive}, $f \geqslant 0$, if and only if $f(t) \geqslant 0$ for every $t\in\mathbb{T}$. The positive cone $P^{+}$ will be then generated by all elements of a form $g^\adj g$, i.e.~$P^+ = \Big\{ \sum_{j=1}^{n} g_{j}^{\adj} g_{j} : g_j \in \mathscr{L}^{2}, \, n\in\naturals\Big\}$.

For given operator $A$, let $A_n = I \otimes A$ denote its extension on tensor product space $\complexes^{n\times n} \otimes \mathscr{L}^{2}(\mathbb{T},\matrd_1)$. This space is naturally identified with $M_n (\mathscr{L}^2)$ i.e.~every $\mathbf{f} \in \complexes^{n\times n} \otimes \mathscr{L}^{2}$ is uniquely represented as a matrix $\mathbf{f} = [f_{ij}]$ such that $f_{ij} \in \mathscr{L}^2(\mathbb{T},\matrd_1)$. The adjoint $\mathbf{f}^\adj$ is naturally expressed as $\mathbf{f}^\adj = [f_{ji}^{\adj}]$ and the positive cone $P_{n}^{+} \subset M_n (\mathscr{L}^2)$ is $P_{n}^{+} = \{ \sum_{j=1}^{m} \mathbf{g}_{j}^{\adj} \mathbf{g}_j : m < \infty \}$. The action of $A_n$ on $P_{n}^{+}$ is then simply $A_{n}([f_{ij}]) = [A(f_{ij})]$.

\begin{definition}\label{eq:DefinitionNpos}
Densely defined operator $A$, $\domain{A} \subset \mathscr{L}^2 (\mathbb{T},\matrd_1)$, will be called:
\begin{enumerate}
	\item positive, $A \geqslant 0$, iff $A(\domain{A} \cap P^{+}) \subset P^{+}$,
	\item $n$-positive, $A\in\mathrm{P}_{n}(\mathscr{L}^2)$, iff $A_n \geqslant 0$, i.e.~$A_n (\domain{A_n} \cap P_{n}^{+}) \subset P_{n}^{+}$,
	\item completely positive, $A\in\cp{\mathscr{L}^2}$, iff $A\in\mathrm{P}_{n}(\mathscr{L}^2)$ for all $n\in\naturals$.
\end{enumerate}
\end{definition}

Complete positivity and $n$-positivity on $\matrd_1 \, \bar{\otimes} \, L^2 (\mathbb{T})$ will be then defined and denoted analogously as $\cp{\bar{\otimes}}$ and $\mathrm{P}_{n}(\bar{\otimes})$, respectively. The correspondence between positivity on both spaces is also evident, i.e.~$A\in\mathrm{P}_{n}(\mathscr{L}^{2})$ iff $\tilde{A}\in\mathrm{P}_{n}(\bar{\otimes})$ and $A\in\cp{\mathscr{L}^{2}}$ iff $\tilde{A}\in\cp{\bar{\otimes}}$.
\par\vspace{\baselineskip}
Positivity preserving condition, imposed on quantum dynamics is almost always paired together with \emph{trace preserving} condition, important for pertaining a statistical interpretation of density operators as quantum-mechanical probability distributions. A specifically stated definition of trace on the enlarged space will allow to extend this condition onto generalized dynamics, too. Canonical \emph{trace} of a matrix is a positive linear functional on $\matrd$, uniquely represented by identity matrix such that $\tr{a} = (I,a)_{\matrd}$ for any $a \in \matrd$. One can then formulate similar definition of traces on both $\mathscr{L}^2 (\mathbb{T},\matrd_1)$ and $\matrd_1 \, \bar{\otimes} \, L^2 (\mathbb{T})$:

\begin{definition}\label{def:TraceOnGenSpace}
Linear functional $\trl{} : \mathscr{L}^2 (\mathbb{T},\matrd_1) \to \complexes$ represented by neutral element in $\mathscr{L}^2 (\mathbb{T},\matrd_1)$, i.e.~given by
\begin{equation}\label{eq:TraceL2definition}
	\trl{f} = (I \odot e_0,f)_{\mathscr{L}^2} = \frac{1}{T}\int_{\mathbb{T}} \tr{f(t)} \, dt,
\end{equation}
will be called a trace on $\mathscr{L}^2 (\mathbb{T},\matrd_1)$. By analogy, the trace on $\matrd_1 \, \bar{\otimes} \, L^2 (\mathbb{T})$ will be defined as
\begin{equation}\label{eq:TraceSdefinition}
	\trs{\tilde{f}} = (I\otimes e_0,\tilde{f})_{\bar{\otimes}} = (I \odot e_0 , f)_{\mathscr{L}^2}.
\end{equation}
\end{definition}

Let us denote sets of all trace preserving maps on $\mathscr{L}^2 (\mathbb{T},\matrd_1)$ and $\matrd_1 \, \bar{\otimes} \, L^2 (\mathbb{T})$ by $\tp{\mathscr{L}^2}$ and $\tp{\bar{\otimes}}$, respectively. One can immediately show that we have $A \in \tp{\mathscr{L}^2}$ iff $\tilde{A}\in\tp{\bar{\otimes}}$. A following simple, yet useful lemma then follows (we leave a proof for the reader):

\begin{lemma}\label{prop:TPneutralElementPreservation}
$A \in \tp{\mathscr{L}^2}$ iff $A^\dual$ is unital and $\tilde{A}\in\tp{\bar{\otimes}}$ iff $\tilde{A}^\dual$ is unital.
\end{lemma}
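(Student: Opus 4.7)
The plan is to reduce trace preservation to a statement about the pairing \eqref{eq:L2dualityPairing} and then dualize. The key observation is that the trace functional \eqref{eq:TraceL2definition} is, by construction, the pairing against the neutral element $I\odot e_0$, so trace preservation literally says $A$ fixes $I\odot e_0$ under the pairing, which is exactly the condition that the dual $A^\dual$ fixes $I\odot e_0$ itself.

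First I would unfold the definitions. Trace preservation of $A$ means $\trl{A(f)} = \trl{f}$ for every $f\in\domain{A}$, which by \eqref{eq:TraceL2definition} is
\begin{equation*}
(I\odot e_0,\, A(f))_{\mathscr{L}^2} = (I\odot e_0,\, f)_{\mathscr{L}^2}.
\end{equation*}
Since $A$ is densely defined, the dual $A^\dual$ exists as discussed around \eqref{eq:L2dualMap}, and the defining identity of the dual yields $(I\odot e_0,\, A(f))_{\mathscr{L}^2} = (A^\dual(I\odot e_0),\, f)_{\mathscr{L}^2}$, valid provided $I\odot e_0\in\domain{A^\dual}$. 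I would first argue this membership: the functional $f\mapsto(I\odot e_0,A(f))_{\mathscr{L}^2} = \trl{A(f)}$ equals $\trl{f}$ which is bounded on all of $\mathscr{L}^2$ (with norm $1$ since it is pairing with a fixed element), hence extends boundedly, so $I\odot e_0$ indeed lies in $\domain{A^\dual}$.

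Next, the TP condition rewrites as
\begin{equation*}
(A^\dual(I\odot e_0) - I\odot e_0,\, f)_{\mathscr{L}^2} = 0 \qquad \text{for all } f\in\domain{A}.
\end{equation*}
By density of $\domain{A}$ in $\mathscr{L}^2(\mathbb{T},\matrd_1)$ and non-degeneracy of the duality pairing between $\mathscr{L}^2(\mathbb{T},\matrd_1)$ and $\mathscr{L}^2(\mathbb{T},\matrd_\infty)$, this forces $A^\dual(I\odot e_0) = I\odot e_0$, i.e.\ $A^\dual$ is unital. The converse is immediate: if $A^\dual(I\odot e_0)=I\odot e_0$, then for every $f\in\domain{A}$,
\begin{equation*}
\trl{A(f)} = (I\odot e_0, A(f))_{\mathscr{L}^2} = (A^\dual(I\odot e_0), f)_{\mathscr{L}^2} = (I\odot e_0,f)_{\mathscr{L}^2} = \trl{f}.
\end{equation*}

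For the tensorial version, the argument is identical once one uses \eqref{eq:TraceSdefinition} and the fact that $\iota$ is an isometric isomorphism intertwining $A$ with $\tilde{A}$ and $A^\dual$ with $\tilde{A}^\dual$, as recorded in \eqref{eq:FourierLiftingDefinition} and the subsequent discussion. The main (mild) obstacle is the domain check for $A^\dual$ — ensuring that $I\odot e_0\in\domain{A^\dual}$ so that the adjoint identity is actually applicable; this is handled by noting that $\trl{\cdot}$ is $\|\cdot\|_{\mathscr{L}^2_1}$-continuous via Cauchy–Schwarz in the pairing, so $f\mapsto\trl{A(f)} = \trl{f}$ extends boundedly to the whole space, placing $I\odot e_0$ in $\domain{A^\dual}$ by the very definition stated after \eqref{eq:L2dualMap}.
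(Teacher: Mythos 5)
Your argument is correct and follows essentially the same route as the paper: rewrite $\trl{\cdot}$ as the pairing against the neutral element $I\odot e_0$, pass it through $A$ via the defining identity of the dual map, and conclude $A^\dual(I\odot e_0)=I\odot e_0$ by density and non-degeneracy of the pairing. The only difference is that you make explicit the domain check ($I\odot e_0\in\domain{A^\dual}$) and the density argument, which the paper leaves implicit; this is a welcome but not essentially different elaboration.
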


\subsection{Examples of operators and their properties}
\label{sec:ExamplesOfOperators}

Here we will construct and provide some basic properties for three important operators on space $\mathscr{L}^{2}(\mathbb{T},\matrd_1)$, together with their counterparts acting on $\matrd_1 \, \bar{\otimes} \, L^2 (\mathbb{T})$; this will include the \emph{time-dependent operator-valued function}, \emph{derivative operator} and \emph{shift operator}. The next section will provide some concrete expressions for aforementioned operators; in particular, the infinite series representation will be introduced (with some certain convergence issues addressed).

\subsubsection{Operator-valued function}
\label{sect:Operatorvaluedfunction}

Let $U \in B(\matrd_1)$. Its dual $U^\dual$ is then given via duality pairing on $\matrd_1$ by
\begin{equation}\label{eq:XadjointDefinition}
	(a,U(b))_{\matrd} = \tr{a \, U (b)} = \tr{U^{\dual} (a) \, b} = (U^{\dual}(a),b)_{\matrd}.
\end{equation}
Now, let $t\mapsto A_t \in B(\matrd_1)$ be a periodic, operator-valued function. It induces a linear operator $A$ acting on $\mathscr{L}^{2}(\mathbb{T},\matrd_1)$ via
\begin{equation}\label{eq:AlphaFt}
	f \mapsto (A \circ f)(t) = A_t (f(t)), \qquad f \in \mathscr{L}^{2}(\mathbb{T},\matrd_1), \, t \in \mathbb{T}.
\end{equation}
In all the following, we will be assuming $t\mapsto A_t$ is a bounded function of $t$, i.e.~$\sup_{t\in\mathbb{T}} \| A_t \|_{\infty} = C$ is finite, where $\| A_t \|_\infty = \sup_{\tnorm{a}\leqslant 1} \tnorm{A_t(a)}$ is the operator norm of $A_t$ as a map on $\matrd_1$. In such case it is easy to notice
\begin{equation}
		\frac{1}{T}\int_{\mathbb{T}} \tnorm{A_t(f(t))}^{2} \, dt \leqslant C^2 \|f\|_{\mathscr{L}^{2}_{1}}^{2}
\end{equation}
for every $f\in\mathscr{L}^{2}(\mathbb{T},\matrd_1)$, so $A$ is a bounded endomorphism on $\mathscr{L}^{2}(\mathbb{T},\matrd_1)$. Employing \eqref{eq:L2dualMap} we find $A^\dual$ given by
\begin{equation}\label{eq:AlphaFboundedFadj}
	A^{\dual}(f)(t) = A_{t}^{\dual}(f(t)),
\end{equation}
where the prime symbol at the right hand side denotes a dual in a sense of \eqref{eq:XadjointDefinition}.
Its Fourier lifting $\tilde{A}^{\dual}$ then satisfies, due to \eqref{eq:FourierLiftingDefinition},
\begin{equation}\label{eq:AlphaFbounded}
	(\tilde{A}^{\dual}\circ \iota)(\tilde{x})(t) = A_{t}^{\dual}(x(t)).
\end{equation}

\begin{proposition}\label{thm:AlphaTP}
$A \in \tp{\mathscr{L}^2}$ if and only if $A_t \in \tp{\matrd}$ for all $t\in\mathbb{T}$.
\end{proposition}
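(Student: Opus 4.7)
The plan is to derive this as an almost immediate consequence of Lemma \ref{prop:TPneutralElementPreservation} together with the explicit formula \eqref{eq:AlphaFboundedFadj} for the dual map.

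First I would recall that, for a matrix map $A_t$ on $\matrd$, trace preservation is equivalent (by the same duality argument used in Lemma \ref{prop:TPneutralElementPreservation}, but applied to the pairing $(a,b)_{\matrd} = \tr{ab}$) to unitality of its dual: $A_t \in \tp{\matrd}$ iff $A_t^\dual(I) = I$. Analogously, by Lemma \ref{prop:TPneutralElementPreservation} applied to $A$, we have $A \in \tp{\mathscr{L}^2}$ iff $A^\dual(I\odot e_0) = I \odot e_0$ in $\mathscr{L}^2(\mathbb{T},\matrd_\infty)$.

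Next, I would evaluate $A^\dual(I\odot e_0)$ pointwise using \eqref{eq:AlphaFboundedFadj}. Since $(I\odot e_0)(t) = e_0(t) \cdot I = I$ for every $t\in\mathbb{T}$, one obtains
\begin{equation}
A^\dual(I\odot e_0)(t) = A_t^\dual\bigl((I\odot e_0)(t)\bigr) = A_t^\dual(I).
\end{equation}
Hence the equality $A^\dual(I\odot e_0) = I\odot e_0$ in $\mathscr{L}^2(\mathbb{T},\matrd_\infty)$ is equivalent to $A_t^\dual(I) = I$ for a.e. $t\in\mathbb{T}$, and under the standing regularity assumption (piecewise $\mathcal{C}^1$ behavior of the ingredients defining $A_t$, so in particular $t\mapsto A_t^\dual(I)$ is continuous on $\mathbb{T}$ apart from a finite set), equality a.e. upgrades to equality everywhere.

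Combining these two observations gives the biconditional: $A\in\tp{\mathscr{L}^2}$ iff $A^\dual(I\odot e_0) = I\odot e_0$ iff $A_t^\dual(I) = I$ for all $t$ iff $A_t\in\tp{\matrd}$ for all $t$. There is no real obstacle here; the only mild subtlety is handling the passage between "a.e. $t$" and "every $t$", which is resolved by the continuity of $A_t$ already built into the framework.
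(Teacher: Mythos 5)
Your proof is correct and follows essentially the same route as the paper's: both reduce trace preservation of $A$ to unitality of $A^\dual$ via Lemma \ref{prop:TPneutralElementPreservation}, evaluate $A^\dual(I\odot e_0)$ pointwise through \eqref{eq:AlphaFboundedFadj}, and translate unitality of $A_t^\dual$ back into $A_t\in\tp{\matrd}$ by the matrix duality pairing. Your explicit handling of the ``a.e.\ $t$'' versus ``every $t$'' passage is a small refinement the paper's proof leaves implicit, but it does not change the argument.
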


\begin{proof}
By Lemma \ref{prop:TPneutralElementPreservation}, $A^\dual$ is unital and therefore
\begin{equation}
	A^\dual (I\odot e_0)(t) = A_{t}^{\dual}(I) = I
\end{equation}
and $A_t$ is unital as well; as for any $a\in\matrd_1$ we have
\begin{equation}\label{eq:AlphaTPproof}
	\tr{a} = (I,a)_{\matrd} = (A_{t}^{\dual}(I),a)_{\matrd} = (I,A_{t}(a))_{\matrd} = \tr{A_{t}(a)},
\end{equation}
automatically $A_t \in \tp{\matrd_1}$. On the contrary, supposing $A_t$ trace preserving and applying \eqref{eq:TraceSdefinition} we immediately obtain the opposite.
\end{proof}

\begin{proposition}\label{prop:AlphaContractive}
If $A_t \in B(\matrd_1)$ is a trace norm contraction on $\matrd_1$ for each $t\in\mathbb{T}$, then $A$ is a contraction on $\mathscr{L}^{2}(\mathbb{T},\matrd_1)$.
\end{proposition}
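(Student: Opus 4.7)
The plan is to derive the $\mathscr{L}^{2}_{1}$-contractivity of $A$ by promoting the pointwise trace-norm contractivity of the family $\{A_t\}_{t\in\mathbb{T}}$ to the integrated norm, in direct analogy with the boundedness argument already carried out for $A$ immediately above the statement of Proposition \ref{prop:AlphaContractive}.

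Concretely, for any $f \in \mathscr{L}^{2}(\mathbb{T},\matrd_1)$ the defining relation \eqref{eq:AlphaFt} yields $(A\circ f)(t) = A_t(f(t))$ for almost every $t\in\mathbb{T}$. The hypothesis that each $A_t$ is a trace-norm contraction on $\matrd_1$ translates into the pointwise bound $\tnorm{A_t(f(t))} \leqslant \tnorm{f(t)}$ a.e. in $\mathbb{T}$. Squaring this inequality, integrating against the normalized Lebesgue measure $T^{-1}\lambda$ on $\mathbb{T}$, and recalling the definition \eqref{eq:L2BochnerNorm} of $\|\cdot\|_{\mathscr{L}^{2}_{1}}$, one obtains
\begin{equation}
\| A(f) \|_{\mathscr{L}^{2}_{1}}^{2} = \frac{1}{T}\int_{\mathbb{T}} \tnorm{A_{t}(f(t))}^{2} \, dt \leqslant \frac{1}{T}\int_{\mathbb{T}} \tnorm{f(t)}^{2} \, dt = \| f \|_{\mathscr{L}^{2}_{1}}^{2},
\end{equation}
and taking square roots delivers $\|A(f)\|_{\mathscr{L}^{2}_{1}}\leqslant \|f\|_{\mathscr{L}^{2}_{1}}$, which is exactly the desired contractivity on $\mathscr{L}^{2}(\mathbb{T},\matrd_1)$.

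I do not foresee any serious obstacle in this argument. The pointwise contractivity hypothesis automatically entails $\sup_{t\in\mathbb{T}}\| A_t \|_{\infty} \leqslant 1$, so the preceding discussion of Section \ref{sect:Operatorvaluedfunction} already guarantees that $A$ is a bounded endomorphism on $\mathscr{L}^{2}(\mathbb{T},\matrd_1)$ (and in particular that $t\mapsto A_t(f(t))$ is Bochner measurable), so no additional care is required to justify the integral above. The only subtlety worth mentioning is that the pointwise estimate holds only almost everywhere, which is however immaterial for the integration and hence for the final conclusion.
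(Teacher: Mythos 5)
Your argument is correct and coincides with the paper's own proof: both square the pointwise bound $\tnorm{A_t(f(t))}\leqslant\tnorm{f(t)}$ and integrate over $\mathbb{T}$ to obtain $\|A(f)\|_{\mathscr{L}^{2}_{1}}\leqslant\|f\|_{\mathscr{L}^{2}_{1}}$. Your additional remarks on measurability and on the a.e.\ nature of the pointwise estimate are sound but not needed beyond what the paper already assumes.
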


\begin{proof}
This claim is shown by simple computation
\begin{equation}
	\| A (f) \|_{\mathscr{L}^{2}_{1}}^{2} = \frac{1}{T} \int_{\mathbb{T}} \tnorm{A_t (f(t))}^{2} \, dt \leqslant \frac{1}{T} \int_{\mathbb{T}} \tnorm{f(t)}^{2} \, dt = \| f \|_{\mathscr{L}^{2}_{1}}^{2},
\end{equation}
as $\tnorm{A_t (a)} \leqslant \tnorm{a}$.
\end{proof}

\begin{remark}
The opposite claim does \textbf{not} hold. As a counterexample, take $A$ defined as a multiplication operator via $A (f) = \xi f$, where $\xi \in L^2 (\mathbb{T})$, $f(t) \geqslant 0$, $\| f \|_{L^2} = 1$, and $f(t) > 1$ for $t$ in a finite family of sub-intervals in $[0,T)$. Take any $f = a \odot e_0$, i.e.~a constant function $f(t) = a \in \matrd$; one has, after simple algebra,
\begin{equation}
	\| A (a \odot e_0) \|_{\mathscr{L}^{2}} = \Big(\frac{1}{T} \int_{\mathbb{T}} \tnorm{A_t (a)} \, dt \Big)^{1/2} = \tnorm{a} \| \xi \|_{L^2} = \tnorm{a},
\end{equation}
so $A$ is a contraction. However, take $\xi$ as, say, quadratic function of $t$ with zeroes at points $t=0$ and $t=T$, given as
\begin{equation}
	\xi (t) = -\frac{\sqrt{30}}{T^2} (t-T)t.
\end{equation}
Then, one easily finds $\| \xi \|_{L^2} = 1$ and $\xi(t) > 1$ over interval $(t_1, t_2)$ centered around $T/2$, where the maximal point of $\xi$ is located, $\xi (T/2) = \sqrt{30}/4 > 1$. For each $t\in (t_1, t_2)$ we therefore have $\tnorm{A_t (a)} = \xi(t) \tnorm{a} > \tnorm{a}$, and so $A_t$ is not a trace norm contraction on $\matrd_1$.
\end{remark}

\begin{proposition}\label{thm:AlphaAtCP}
The following hold:
\begin{enumerate}
	\item\label{point:AlphaAtCPone} Let $A_t \in \cp{\matrd_1}$ for every $t\in\mathbb{T}$. If additionally $A_t$ admits a Kraus representation
	\begin{equation}\label{eq:AlphaAtCP}
		A_t (a) = \sum_{j=1}^{d^2} X_{j,t}^{\hadj} a X_{j,t}, \qquad a\in\matrd_1,
	\end{equation}
	such that for each $j$, a mapping $t\mapsto X_{j,t}$ is a bounded matrix-valued function of $t$, then $A \in \cp{\mathscr{L}^2}$.
	\item\label{point:AlphaAtCPtwo} If $A \in \cp{\mathscr{L}^2}$, then $A_t \in \cp{\matrd_1}$ for every $t\in\mathbb{T}$.
\end{enumerate}
\end{proposition}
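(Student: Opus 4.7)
The plan is to handle the two implications separately, but to exploit in both cases the same pointwise identity $(A_n \mathbf{f})(t) = (A_t)_n(\mathbf{f}(t))$: since $A_n = I_n \otimes A$ acts entrywise on $M_n(\mathscr{L}^2)$ and $A$ itself acts entrywise in $t$, applying $A_n$ to $\mathbf{f}$ and evaluating at $t$ coincides with applying the matricial ampliation $(A_t)_n = I_n \otimes A_t$ to the value $\mathbf{f}(t) \in M_n(\matrd_1)$.

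For part (\ref{point:AlphaAtCPone}), I would use that $P_n^+$ is generated by elements of the form $\mathbf{g}^\adj\mathbf{g}$ and show directly that $A_n(\mathbf{g}^\adj\mathbf{g}) \in P_n^+$. Substituting the hypothesized Kraus decomposition at each $t$ and regrouping indices yields
\begin{equation}
A_n(\mathbf{g}^\adj \mathbf{g}) = \sum_{m=1}^{d^2} \mathbf{h}^{(m)\adj}\mathbf{h}^{(m)}, \qquad h^{(m)}_{kj}(t) = g_{kj}(t)\, X_{m,t},
\end{equation}
so the problem reduces to verifying $\mathbf{h}^{(m)} \in M_n(\mathscr{L}^2)$. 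This is immediate because each $t\mapsto X_{m,t}$ is essentially bounded by hypothesis, and pointwise multiplication by an $\mathscr{L}^\infty$-function preserves $\mathscr{L}^2$, as already recorded in the paper's discussion of $\mathscr{L}^\infty\mathscr{L}^2 \subset \mathscr{L}^2$. Closing under finite linear combinations gives $A_n(P_n^+) \subset P_n^+$ for every $n$.

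For part (\ref{point:AlphaAtCPtwo}), I would test $A_n$ against constant matrix-valued functions. Given $\mathbf{a} \in M_n(\matrd_1)^+$, writing $\mathbf{a} = \mathbf{b}^\hadj\mathbf{b}$ via a spectral square root identifies $\mathbf{a}\odot e_0 = (\mathbf{b}\odot e_0)^\adj (\mathbf{b}\odot e_0) \in P_n^+$; complete positivity of $A$ then forces $A_n(\mathbf{a}\odot e_0) \in P_n^+$, i.e.\ $(A_t)_n(\mathbf{a}) \geqslant 0$ for almost every $t$. The delicate point is that the exceptional null set depends on $\mathbf{a}$; since $M_n(\matrd_1)^+$ is separable, I would pick a countable dense subset, take the countable union of the corresponding null sets, and exploit continuity of $\mathbf{a}\mapsto (A_t)_n(\mathbf{a})$ on the finite-dimensional space $M_n(\matrd_1)$ to upgrade positivity from the dense subset to all of $M_n(\matrd_1)^+$ outside this single null set. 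A further countable union over $n\in\naturals$ yields complete positivity of $A_t$ for almost every $t\in\mathbb{T}$.

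The main expected obstacle is precisely this ``a.e.\ $t$ versus every $t$'' gap in part (\ref{point:AlphaAtCPtwo}): strict pointwise complete positivity can only be recovered from additional regularity of $t\mapsto A_t$ (e.g.\ continuity, which is natural in view of the paper's running hypotheses), and the almost-everywhere version is the strongest conclusion available in general, since $A$ determines the family $(A_t)$ only up to null sets. Everything else reduces to routine manipulation of Kraus operators and the block-matrix structure of ampliations.
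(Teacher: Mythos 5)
Your argument follows essentially the same route as the paper's. Part (\ref{point:AlphaAtCPone}) is identical: substitute the Kraus decomposition, regroup to exhibit $A_n(\mathbf{g}^\adj\mathbf{g})$ as a finite sum of squares $[g_{kj}X_m]^\adj[g_{kj}X_m]$, and invoke $\mathscr{L}^\infty\mathscr{L}^2\subset\mathscr{L}^2$ to keep the factors in the right space. Part (\ref{point:AlphaAtCPtwo}) uses the same key idea of testing $A_n$ on constant functions; the only divergence is your endgame. The paper concludes positivity of $(I\otimes A_t)(a^\adj a)$ \emph{for every} $t$, with no separability or null-set bookkeeping, and this is not an oversight but a consequence of its conventions: the cone $P_n^+$ is defined as the set of \emph{literal} finite sums $\sum_k\mathbf{g}_k^\adj\mathbf{g}_k$ of genuine (pointwise-defined) functions, each of which evaluates to a positive semidefinite matrix at every single $t$, and $A$ is defined pointwise by $(A\circ f)(t)=A_t(f(t))$, so membership of $A_n(\mathbf{a}\odot e_0)$ in $P_n^+$ immediately reads off as $(I\otimes A_t)(\mathbf{a})=\sum_k\mathbf{g}_k(t)^\adj\mathbf{g}_k(t)\geqslant 0$ at every $t$, for every $\mathbf{a}$. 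Your density-plus-continuity upgrade and the resulting ``almost every $t$'' conclusion are correct under a strict equivalence-class reading of $\mathscr{L}^2$, and your observation that $A$ alone determines $(A_t)$ only up to null sets is a fair criticism of the statement as phrased in that setting; but given the paper's everywhere-pointwise definition of positivity, the stronger conclusion claimed in the proposition follows directly and your extra machinery can be dispensed with.
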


\begin{proof}
Ad (\ref{point:AlphaAtCPone}). Let $A_t$ be given by \eqref{eq:AlphaAtCP} and take a sequence of bounded functions $(X_j)$ such that $X_j (t) = X_{j,t}$. Then, for each $f\in\mathscr{L}^2(\mathbb{T},\matrd_1)$, we also have $X_j ^\adj f X_j \in \mathscr{L}^2(\mathbb{T},\matrd_1)$ and
\begin{equation}
	A_t (f(t)) = \Big( \sum_{j=1}^{d^2} X_{j}^{\adj} f X_{j} \Big) (t) \,\, \Rightarrow \,\, A(f) = \sum_{j=1}^{d^2} X_{j}^{\adj} f X_{j}.
\end{equation}
Take any $n \geqslant 1$ and any $\mathbf{f} = [f_{ij}]_{n\times n}$, $f_{ij}\in\mathscr{L}^2$. For $A_n = I \otimes A$, we have
\begin{align}
	A_n (\mathbf{f}^\adj \mathbf{f}) &= (I \otimes A)([f_{ij}]_{n\times n}^{\adj}[f_{ij}]_{n\times n}) = \Big[\sum_{k=1}^{n} A(f_{ki}^{\adj} f_{kj})\Big]_{n\times n} \\
	&= \Big[\sum_{l=1}^{d^2}\sum_{k=1}^{n} X_{l}^{\adj} f_{ki}^{\adj} f_{kj} X_{l} \Big]_{n\times n} = \sum_{l=1}^{d^2} \Big[\sum_{k=1}^{n} (f_{ki}X_l)^{\adj} f_{kj}X_l \Big]_{n\times n} \nonumber \\
	&= \sum_{l=1}^{d^2} [f_{ij}X_l]_{n\times n}^{\adj} [f_{ij}X_l]_{n\times n} \in P_{n}^{+},
\end{align}
so $A \in \mathrm{P}_{n}(\mathscr{L}^2)$. As the above remains valid for any $n\geqslant 1$, $A \in \cp{\mathscr{L}^2}$.

Ad (\ref{point:AlphaAtCPtwo}). Take $A \in \cp{\mathscr{L}^2}$. Then $A_n \in \mathrm{P}_{n}(\mathscr{L}^2)$ for every $n\geqslant 1$, i.e.
\begin{equation}\label{eq:AnNPositivity}
	A_n (\mathbf{f}^\adj \mathbf{f}) = \Big[\sum_{k=1}^{n} A(f_{ki}^{\adj} f_{kj})\Big]_{n\times n} = \sum_{k=1}^{m} \mathbf{g}_{k}^{\adj} \mathbf{g}_{k} \in P_{n}^{+},
\end{equation}
for some elements $\mathbf{g}_k = [g^{(k)}_{ij}]_{n\times n}$ and $m < \infty$. Note, that $\complexes^{n\times n} \otimes \mathscr{L}^{2} \simeq M_n(\mathscr{L}^{2})$ may be injectively embedded in space of all periodic functions with values in $M_d (\complexes^{n\times n}) \simeq \complexes^{nd\times nd}$ via the evaluation map $\eta$, defined as $\eta (\mathbf{f})(t) = \mathbf{f}(t) = [f_{ij}(t)]_{n\times n}$, $f_{ij}(t) \in \matrd$. Then, acting with $\eta$ on \eqref{eq:AnNPositivity} we obtain
\begin{equation}\label{eq:AnNPositivity2}
	\Big[\sum_{k=1}^{n} A_t (f_{ki}(t)^{\adj} f_{kj}(t))\Big]_{n\times n} = \sum_{k=1}^{m} \mathbf{g}_{k}(t)^{\adj} \mathbf{g}_{k}(t),
\end{equation}
where $\mathbf{g}_{k}(t) = [g_{ij}^{(k)}(t)]_{n\times n}$, $g_{ij}^{(k)}(t) \in \matrd$. Left hand side of \eqref{eq:AnNPositivity2} may be however put in a form
\begin{equation}
	\Big[\sum_{k=1}^{n} A_t (f_{ki}(t)^{\hadj} f_{kj}(t))\Big]_{n\times n} = (I \otimes A_t)([f_{ij}(t)]_{n\times n}^{\adj} [f_{ij}(t)]_{n\times n}),
\end{equation}
which yields
\begin{equation}
	(I \otimes A_t ) (\mathbf{f}(t)^\adj \mathbf{f}(t)) = \sum_{k=1}^{m} \mathbf{g}_{k}(t)^{\adj} \mathbf{g}_{k}(t).
\end{equation}
In particular, one can take $\mathbf{f}$ to be any constant matrix-valued function, i.e.~$\mathbf{f}(t) = a \in\complexes^{nd\times nd}$. As space of all constant functions is isomorphic to entire $\complexes^{nd\times nd}$, we have for every $a\in\complexes^{nd\times nd}$,
\begin{equation}
	(I \otimes A_t)(a^\adj a) = \sum_{k=1}^{m} \mathbf{g}_{k}(t)^{\adj} \mathbf{g}_{k}(t)
\end{equation}
for some $(\mathbf{g}_k)$, i.e.~$A_t$ is $n$-positive for every $t\in\mathbb{T}$, and since the above result does not depend on $n$, also completely positive.
\end{proof}

\subsubsection{Time derivative and right shift operators}

Each $f\in\mathscr{L}^2(\mathbb{T},\matrd_1)$ may be expressed as a matrix of functions $f(t) = [f_{kl}(t)]_{d\times d}$. Then, the derivative operator $\partial$ on $\mathscr{L}^2(\mathbb{T},\matrd_1)$ will be defined, by Lemma \ref{lemma:FrechetDiffEquivMatElementsDiff}, in standard manner as Fr\'{e}chet derivative
\begin{equation}
	\partial(f)(t) = \Big[\frac{df_{kl}(t)}{dt}\Big]_{d\times d}, \quad \domain{\partial} = \mathscr{C}^{1}(\mathbb{T},\matrd_1) .
\end{equation}
$\partial$ is densely defined (as space $\mathscr{C}^1 (\mathbb{T},\matrd_1)$, containing all smooth functions, is dense in $\mathscr{L}^2(\mathbb{T},\matrd_1)$), unbounded and closed. By isometry, its Fourier lifting $\tilde{\partial}$ is also densely defined on $\domain{\tilde{\partial}} = \iota^{-1}(\mathscr{C}^1 (\mathbb{T},\matrd_1))$, unbounded and closed.

\begin{proposition}\label{thm:SrepresentationDelta}
Fourier lifting of $\partial$ and its dual $\tilde{\partial}$ admit expressions
\begin{equation}
	\tilde{\partial} = i\Omega \, I \otimes F_z, \qquad \tilde{\partial}^{\dual} = -\tilde{\partial} .
\end{equation}
\end{proposition}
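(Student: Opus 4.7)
The plan is to verify both identities by computing Fourier coefficients. For the first identity, I would start from a function $f\in\mathscr{C}^1(\mathbb{T},\matrd_1)$ with Fourier expansion $\tilde{f}=\sum_{n\in\integers} f_n\otimes e_n$, where $f_n = \frac{1}{T}\int_{\mathbb{T}} f(t)e^{-in\Omega t}\,dt$. Rather than differentiating the Fourier series term-by-term (which would require justifying convergence), I would compute the Fourier coefficients of $\partial f$ directly. Using periodicity of $f$ and integration by parts entry-wise in the matrix,
\begin{equation}
(\partial f)_n = \frac{1}{T}\int_{\mathbb{T}} f'(t)e^{-in\Omega t}\,dt = in\Omega \cdot \frac{1}{T}\int_{\mathbb{T}} f(t)e^{-in\Omega t}\,dt = in\Omega\, f_n.
\end{equation}
Since $\partial f\in\mathscr{L}^2(\mathbb{T},\matrd_1)$ and $\matrd_1$ is UMD, its Fourier series converges to it in norm, so $\widetilde{\partial f} = \sum_{n\in\integers} in\Omega\, f_n\otimes e_n = i\Omega\sum_{n\in\integers} f_n\otimes F_z(e_n) = (i\Omega\, I\otimes F_z)(\tilde f)$ on $\domain{\tilde\partial}=\iota^{-1}(\mathscr{C}^1(\mathbb{T},\matrd_1))$. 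This proves $\tilde\partial = i\Omega\, I\otimes F_z$.

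For the second identity, I would use the duality pairing \eqref{eq:L2dualityPairing} and integration by parts. For any $f,g\in\mathscr{C}^1(\mathbb{T},\matrd_1)$, the product $fg$ is $\mathscr{C}^1$ and periodic, so
\begin{equation}
(f,\partial g)_{\mathscr{L}^2} = \frac{1}{T}\int_{\mathbb{T}} \tr{f(t) g'(t)}\,dt = -\frac{1}{T}\int_{\mathbb{T}}\tr{f'(t) g(t)}\,dt = -(\partial f, g)_{\mathscr{L}^2},
\end{equation}
with the boundary term $\frac{1}{T}[\tr{f(t)g(t)}]_{0}^{T}$ vanishing by periodicity. This shows $\mathscr{C}^1\subset\domain{\partial^\dual}$ with $\partial^\dual|_{\mathscr{C}^1} = -\partial$. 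Since $\partial$ is closed and densely defined on $\mathscr{C}^1$, and $-\partial$ is itself closed and densely defined, standard arguments identify $\partial^\dual$ with $-\partial$ on the matching domain. Transporting via $\iota$ using \eqref{eq:FourierLiftingDefinition} and the dual-of-lifting identity $\tilde{A}^\dual = \iota^{-1}\circ A^\dual\circ\iota$ gives $\tilde\partial^\dual = -\tilde\partial$ on $\domain{\tilde\partial}$.

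The first identity also admits a direct verification by computing $\tilde\partial^\dual$ from the formula $\tilde\partial = i\Omega\, I\otimes F_z$: since $I\otimes F_z$ has real spectrum (self-adjointness of $F_z$ from Proposition \ref{prop:FourierOpComm}, part \ref{prop:FourierOpCommThree}), its dual in the pairing $(\cdot,\cdot)_{\bar{\otimes}}$ equals $I\otimes F_z$ itself, and the prefactor $i\Omega$ flips sign under the conjugate-linear-free pairing used here, giving $\tilde\partial^\dual = -\tilde\partial$ as a consistency check.

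The main obstacle is purely a bookkeeping one concerning domains: I need to be careful that the integration-by-parts computation identifies $\partial^\dual$ on exactly the right domain (not just on $\mathscr{C}^1$ but on all of $\domain{\partial^\dual}$), and to reconcile the two expressions for $\tilde\partial^\dual$ obtained by the Fourier/algebraic route versus the integration-by-parts route. Everything else -- the Fourier coefficient computation, the $\mathscr{L}^2$-convergence of Fourier series guaranteed by Theorem \ref{FourierSeriesConvLp}, and the isometry properties of $\iota$ -- is mechanical once the domain question is settled.
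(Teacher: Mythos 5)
Your main argument is correct and follows essentially the same route as the paper: the first identity via the relation $(\partial f)_n = in\Omega f_n$ between Fourier coefficients combined with norm convergence from Theorem \ref{FourierSeriesConvLp} (the paper differentiates partial sums and invokes closedness of $\partial$, whereas you compute the coefficients of $\partial f$ directly by parts — a minor variation that, if anything, is cleaner), and the second identity via integration by parts over $\mathbb{T}$, exactly as in the paper; the domain caveat you raise for $\partial^\dual$ is real but the paper treats it at the same level of informality. One genuine flaw: your final ``consistency check'' paragraph is wrong as reasoned. The pairing \eqref{eq:L2dualityPairing} is \emph{bilinear}, so the scalar $i\Omega$ passes through the dual without conjugation and does not flip sign; what does flip sign is $F_z$ itself, since $(e_n,e_m)_{L^2} = \delta_{n,-m}$ under this pairing gives $F_z^\dual = -F_z$ (self-adjointness of $F_z$ refers to the sesquilinear Hilbert-space inner product, not to this duality). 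The two sign errors cancel, so the conclusion $\tilde{\partial}^\dual = -\tilde{\partial}$ happens to come out right, but the check as written would mislead; since it is explicitly supplementary, the proof itself stands on the integration-by-parts argument.
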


\begin{proof}
Let us take any function $f\in\mathscr{L}^2(\mathbb{T},\matrd_1)$ and set $\xi_n = \sum_{|k|\leqslant n} f_k \odot e_k$, $f_k \in \matrd_1$, to be its partial Fourier series. As $\xi_n$ is clearly differentiable, we have
\begin{equation}
	\partial(\xi)(t) = i\Omega \sum_{|k|\leqslant n} k \, f_k e^{ik\Omega t} = \Big( \sum_{|k|\leqslant n} i\Omega \, f_k \odot F_z (e_k)\Big)(t)
\end{equation}
Take $\tilde{\xi}_n = \iota^{-1}(\xi_n) = \sum_{|k|\leqslant n} f_k \otimes e_k \in \matrd_1 \bar{\otimes} L^2(\mathbb{T})$. As $\iota^{-1} (f_k \odot F_z (e_k)) = f_k \otimes F_z(e_k)$ we immediately have
\begin{equation}
	\tilde{\partial}(\tilde{\xi}_n) = (\iota^{-1}\circ\partial\circ\iota)(\tilde{\xi}_n) = i\Omega\sum_{|k|\leqslant n} f_k \otimes F_z (e_k) = (i\Omega \, I \otimes F_z)(\tilde{\xi}_n).
\end{equation}
Since $\xi_n \to f$ and $\partial$ is closed, $\partial(\xi_n)\to\partial(f)$; similarly $\tilde{\xi}_n \to \tilde{f} = \iota^{-1}(f)$ and $(i\Omega \, I \otimes F_z)(\tilde{\xi}_n)\to (i\Omega \, I \otimes F_z)(\tilde{f})$. This yields $\tilde{\partial} = i\Omega \, I \otimes F_z$, as claimed.

For $f,g\in\mathscr{C}^1 (\mathbb{T},\matrd_1)$ it is easy to show $(f,\partial(g))_{\mathscr{L}^2} = (-\partial (f), g)_{\mathscr{L}^2}$ and so $\partial^\dual = -\partial$, which comes from integrating by parts over $\mathbb{T}$ (see \cite{Treves1967} for details); as a result, $\tilde{\partial}^\dual = -\tilde{\partial}$.
\end{proof}

Finally, we address the last example, namely the shift operator. The unitary abelian group $\{S_{\tau}\}$ of shift operators on $L^2 (\mathbb{T})$, acting by $S_{\tau}(g)(t) = g(t+\tau)$ may be easily shown to be generated by $i\Omega F_z$: indeed, for any $g \in \mathcal{C}^1 (\mathbb{T}) \subset L^2 (\mathbb{T})$ simply compute
\begin{equation}
	i\Omega \, F_z (g)(t) = i\Omega \sum_{n\in\integers} n g_n e^{in\Omega t} =\frac{dg(t)}{dt},
\end{equation}
so $S_{\tau} = \exp{\tau\frac{d}{dt}} = \exp{i\tau\Omega F_z}$. Let us denote its extension onto $\mathscr{L}^2(\mathbb{T},\matrd_1)$ also by $S_\tau$, i.e.~$S_{\tau}(f)(t) = f(t+\tau)$. Analogously, define $\Delta_\tau$ as a \emph{right shift operator} on $\mathscr{L}^{2}(\mathbb{T},\matrd_1)$ by imposing
\begin{equation}\label{eq:RightShiftDefinition}
	\Delta_{\tau}(f)(t) = S_{-\tau}(f)(t) = f(t-\tau) , \quad f\in \mathscr{L}^{2}(\mathbb{T},\matrd), \,\tau \in [0,\infty )
\end{equation}
and denote by $\tilde{\Delta}$ its Fourier lifting.

\begin{proposition}\label{prop:DeltaTauSrep}
We have $\tilde{\Delta}_\tau = I \otimes e^{-i\tau\Omega F_z}$ and set $\{\tilde{\Delta}_\tau : \tau \geqslant 0\}$ is a semigroup of right shift operators on $\mathscr{L}^{2}(\mathbb{T},\matrd_1)$, generated by $-\tilde{\partial} = -i\Omega \, I \otimes F_z$.
\end{proposition}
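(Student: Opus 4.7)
The plan is to verify the three assertions by first checking the tensor identity on Fourier modes and then transporting the one-parameter group structure of $e^{-i\tau\Omega F_z}$ on $L^2(\mathbb{T})$ to the Bochner space via the isometric isomorphism $\iota$. The preceding Proposition \ref{thm:SrepresentationDelta} has already identified $\tilde{\partial} = i\Omega\, I\otimes F_z$, so the generator claim will follow once the tensor representation of $\tilde{\Delta}_\tau$ is established and the strong derivative at $\tau = 0$ is computed.

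For the tensor identity, I would test both operators on a simple tensor $a\otimes e_n \in \matrd_1 \,\bar{\otimes}\, L^2(\mathbb{T})$. Passing through $\iota$ yields the function $t\mapsto e^{in\Omega t}a$; applying $\Delta_\tau$ pointwise in accordance with \eqref{eq:RightShiftDefinition} gives $\Delta_\tau(a\odot e_n)(t) = e^{in\Omega(t-\tau)}a = e^{-in\Omega\tau}(a\odot e_n)(t)$, hence $\tilde{\Delta}_\tau(a\otimes e_n) = e^{-in\Omega\tau}\,a\otimes e_n$ after lifting back. On the other side, since $F_z(e_n)=n e_n$ by Definition \ref{def:FourierShifts}, the spectral (series) definition of the exponential gives $(I\otimes e^{-i\tau\Omega F_z})(a\otimes e_n) = e^{-in\Omega \tau}\, a\otimes e_n$, matching the previous expression. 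Both maps are bounded---$\Delta_\tau$ is an $\mathscr{L}^2_1$-isometry by a simple change of variable, and $I\otimes e^{-i\tau\Omega F_z}$ is an isometry as well---and they agree on the dense linear span of simple tensors $\{a\otimes e_n : a \in \matrd,\, n\in\integers\}$, so they coincide everywhere by uniqueness of bounded extensions.

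For the semigroup property and the generator I would invoke Stone's theorem on $L^2(\mathbb{T})$: since $F_z$ is self-adjoint by Proposition \ref{prop:FourierOpComm}, the family $\{e^{-i\tau\Omega F_z} : \tau \in \reals\}$ is a strongly continuous unitary group whose infinitesimal generator is $-i\Omega F_z$ on $\domain{F_z}$. Tensoring with the bounded identity $I_{\matrd}$ preserves the group law, so $\tilde{\Delta}_{\tau_1}\tilde{\Delta}_{\tau_2} = \tilde{\Delta}_{\tau_1+\tau_2}$, and for any simple tensor one computes
\begin{equation*}
\lim_{\tau\to 0^+}\frac{\tilde{\Delta}_\tau - I}{\tau}(a\otimes e_n) = -in\Omega\, a\otimes e_n = -i\Omega\,(I\otimes F_z)(a\otimes e_n) = -\tilde{\partial}(a\otimes e_n),
\end{equation*}
which extends to $\domain{\tilde{\partial}} = \iota^{-1}(\mathscr{C}^1(\mathbb{T},\matrd_1))$ by closedness of $\tilde{\partial}$ and density. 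Restricting $\tau$ to $[0,\infty)$ produces the claimed semigroup with infinitesimal generator $-\tilde{\partial}$. The only delicate point is ensuring that strong continuity on the $L^2(\mathbb{T})$ factor lifts to strong continuity on $\matrd_1\,\bar{\otimes}\,L^2(\mathbb{T})$; this, however, reduces to the tensor-norm estimate $\|(I\otimes U)(a\otimes h)\|_{\mathscr{L}^2_1}\leqslant\tnorm{a}\|h\|_{L^2}$ on simple tensors followed by a density argument, so no genuine obstacle arises.
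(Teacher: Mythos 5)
Your proof is correct and follows essentially the same route as the paper: both arguments reduce the identity $\tilde{\Delta}_\tau = I\otimes e^{-i\tau\Omega F_z}$ to the diagonal action $e^{-i\tau\Omega F_z}(e_n)=e^{-in\Omega\tau}e_n = S_{-\tau}(e_n)$ on Fourier modes (the paper applies this termwise to the full Fourier series of an arbitrary $\tilde f$, while you check it on simple tensors and extend by boundedness and density, which is the same computation), and then obtain the semigroup law and the generator $-\tilde{\partial}$ by differentiating at $\tau=0$. Your explicit appeal to Stone's theorem and to the isometry of $\Delta_\tau$ makes the strong continuity and extension steps somewhat more careful than the paper's "by simple algebra", but it is not a different method.
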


\begin{proof}
For any $\tilde{f}\in\matrd_1 \bar{\otimes} L^2(\mathbb{T})$,
\begin{align}
	(\iota \circ I &\otimes e^{-i\tau \Omega F_z})(\tilde{f}) = \sum_{n\in\integers} \iota(f_n \otimes e^{-i\tau\Omega F_z}(e_n)) \\
	&= \sum_{n\in\integers} f_n \odot e^{-i\tau\Omega F_z}(e_n) = \sum_{n\in\integers} f_n \odot S_{-\tau}(e_n) = S_{-\tau}(f),\nonumber
\end{align}
so indeed $\tilde{\Delta}_{\tau} = I \otimes e^{-i\tau\Omega F_z}$ is the Fourier lifting of right shift $\Delta_\tau$ and, by simple algebra, the semigroup properties are obvious. Computing the derivative of function $\tau \mapsto \tilde{\Delta}_\tau$ automatically proves the second claim, i.e.~$-\tilde{\partial} = -i\Omega \, I \otimes F_z$ indeed generates the semigroup $\{\tilde{\Delta}_\tau\}$.
\end{proof}

\begin{proposition}\label{prop:RightshiftCPTP}
Family $\{\Delta_\tau : \tau \geqslant 0\}$ of right shift operators is a contraction semigroup of completely positive and trace preserving maps on $\mathscr{L}^2 (\mathbb{T},\matrd)$. Likewise, $\{\tilde{\Delta}_\tau = I \otimes e^{-i\tau\Omega F_z} : \tau \geqslant 0\}$ is a contraction semigroup of completely positive and trace preserving maps on $\matrd_1 \bar{\otimes} L^2(\mathbb{T})$.
\end{proposition}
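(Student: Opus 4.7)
The semigroup structure of $\{\Delta_\tau\}$ and $\{\tilde{\Delta}_\tau\}$ is already in hand from Proposition \ref{prop:DeltaTauSrep}, and via the isometric isomorphism $\iota$ every property verified on $\mathscr{L}^2(\mathbb{T},\matrd_1)$ transfers verbatim to $\matrd_1 \bar{\otimes} L^2(\mathbb{T})$. So the plan is to prove the three quantitative properties only on the Bochner side, and then invoke $\iota$ for the tensorial version.

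The contraction property I would extract from translation invariance of the normalized Lebesgue measure on $\mathbb{T}$: a direct change of variables $s = t-\tau$ gives
\begin{equation}
	\| \Delta_\tau (f) \|_{\mathscr{L}^{2}_{1}}^{2} = \frac{1}{T}\int_{\mathbb{T}} \tnorm{f(t-\tau)}^{2} \, dt = \frac{1}{T}\int_{\mathbb{T}} \tnorm{f(s)}^{2} \, ds = \| f \|_{\mathscr{L}^{2}_{1}}^{2},
\end{equation}
so each $\Delta_\tau$ is in fact an isometry. Trace preservation comes from exactly the same maneuver applied to the trace functional \eqref{eq:TraceL2definition}:
\begin{equation}
	\trl{\Delta_\tau (f)} = \frac{1}{T}\int_{\mathbb{T}} \tr{f(t-\tau)} dt = \frac{1}{T}\int_{\mathbb{T}} \tr{f(s)} ds = \trl{f}.
\end{equation}

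The main point that requires a moment's care is complete positivity, because Proposition \ref{thm:AlphaAtCP} does not directly apply: $\Delta_\tau$ is not an operator-valued function acting pointwise on the matrix value. Instead I would argue from the definition of $P_n^+$. For $\mathbf{f} = [f_{ij}]_{n\times n}\in M_n (\mathscr{L}^2)$ in $P_n^+$, we have $\mathbf{f}(t)\in M_n (\matrd)$ positive for a.e.\ $t\in\mathbb{T}$. The amplification $(\Delta_\tau)_n = I \otimes \Delta_\tau$ acts entrywise, so
\begin{equation}
	(\Delta_\tau)_n (\mathbf{f})(t) = [f_{ij}(t-\tau)]_{n\times n} = \mathbf{f}(t-\tau),
\end{equation}
which is positive for a.e.\ $t$. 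Hence $(\Delta_\tau)_n (P_n^+)\subset P_n^+$ for every $n\in\naturals$, giving $\Delta_\tau\in\cp{\mathscr{L}^2}$.

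Finally, having established contraction, trace preservation and complete positivity for $\Delta_\tau$ on $\mathscr{L}^2 (\mathbb{T},\matrd_1)$, the corresponding statements for $\tilde{\Delta}_\tau = I \otimes e^{-i\tau\Omega F_z}$ on $\matrd_1 \bar{\otimes} L^2(\mathbb{T})$ follow because $\iota$ is an isometric isomorphism that intertwines positivity ($A\in \mathrm{P}_n(\mathscr{L}^2)\Leftrightarrow \tilde{A}\in\mathrm{P}_n(\bar{\otimes})$) and intertwines the two traces \eqref{eq:TraceL2definition}--\eqref{eq:TraceSdefinition} by construction. I expect no real obstacle beyond keeping track of the definitions; the only subtlety is resisting the temptation to apply Proposition \ref{thm:AlphaAtCP} and instead checking $n$-positivity directly from pointwise positivity of the shifted matrix function.
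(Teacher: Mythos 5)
Your proof is correct and follows essentially the same route as the paper: translation invariance of the normalized Lebesgue measure gives both trace preservation and the (in fact isometric) contraction bound, and the tensorial statement is transferred through $\iota$. The only refinement worth recording concerns the complete-positivity step, which the paper dismisses as ``easily shown'': since $P_n^+$ is defined as the set of sums $\sum_j\mathbf{g}_j^\adj\mathbf{g}_j$ rather than by a.e.\ pointwise positivity, the cleanest way to close your argument is to observe that $\Delta_\tau$ commutes with pointwise products and with the adjoint, so $(\Delta_\tau)_n(\mathbf{g}^\adj\mathbf{g})=(\Delta_\tau)_n(\mathbf{g})^\adj(\Delta_\tau)_n(\mathbf{g})\in P_n^+$ directly, without passing through the (true but unproved here) equivalence between membership in $P_n^+$ and a.e.\ positivity of the matrix values.
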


\begin{proof}
Family $\{\Delta_\tau\}$ is easily shown to be a semigroup of CP maps. As the Lebesgue measure is traslationally invariant, for any periodic measurable $\varphi$ and any $\delta\in\reals$ we have $\int_{[0,T]} \varphi \, d\mu = \int_{[\delta,\delta+T]}\varphi \, d\mu$ yielding
\begin{equation}
	\trl{\Delta_\tau (f)} = \frac{1}{T}\int_{0}^{T} \tr{f(t-\tau)} \, dt = \frac{1}{T}\int_{-\tau}^{T-\tau} \tr{f(t)}\, dt = \trl{f}
\end{equation}
for any $f\in\mathscr{L}^2(\mathbb{T},\matrd_1)$, so $\Delta_\tau \in \tp{\mathscr{L}^2}$; as the above computation remains true after changing $f(t-\tau)$ for $\tnorm{f(t-\tau)}^{2}$ under the integral, the equality
\begin{equation}
	\| \Delta_\tau (f) \|_{\mathscr{L}^{2}_{1}} = \| f \|_{\mathscr{L}^{2}_{1}}
\end{equation}
emerges, so it is a contraction semigroup. Claims related to family $\{\tilde{\Delta}_\tau\}$ then follow.
\end{proof}

\subsection{Evolution in generalized space of states}
\label{subsect:MEinSspace}

\subsubsection{Generalized Lindbladian}

Now we are ready to apply the general construction outlined in previous sections to define the dynamical map in the generalized space of states. First, let us set operators $L, P$ on $\mathscr{L}^2(\mathbb{T},\matrd_1)$ given as in section \ref{sect:Operatorvaluedfunction} as operator-valued functions
\begin{equation}
	f\mapsto L(f)(t) = L_t (f(t)), \qquad f\mapsto P(f)(t) = P_t (f(t)),
\end{equation}
where $t\mapsto P_t$ and $t\mapsto L_t$ were originally given in section \ref{sect:ExamplesofperiodicLindbladians} by formulas
\begin{equation}
	L_t (a) = -i \comm{H_t}{a} + P_t K P_{t}^{-1} (a), \quad P_t (a) = p_t \, a \, p_{t}^{\hadj}.
\end{equation}
Then, from cyclic properties of trace, one can easily obtain 
\begin{equation}
	(a, P_t (b))_{\matrd} = \tr{a p_t b p_{t}^{\hadj}} = \tr{p_{t}^{\hadj} a p_t b} = (P_{t}^{\dual} (a), b)_{\matrd},
\end{equation}
so $P_{t}^{\dual}(a) = p_{t}^{\hadj} a p_t$. Function $p : \mathbb{T} \to p_t \in \matrd$ is bounded and therefore square integrable; then, there exists $\tilde{p} \in \matrd_1 \bar{\otimes} L^2(\mathbb{T})$ such that
\begin{equation}
	\tilde{p} = \sum_{n\in\integers} p_n \otimes e_n , \quad p = \sum_{n\in\integers} p_n \odot e_n, \quad p_n = \frac{1}{T}\int_{\mathbb{T}} p_t e^{-in\Omega t} \, dt,
\end{equation}
and $p_t \stackrel{\mathrm{a.e.}}{=} \sum_{n\in\integers} p_n e^{in\Omega t}$. Then, for $\tilde{p}^{\adj} = \sum_{n\in\integers} p_{n}^{\hadj} \otimes e_{n}^{\adj}$ we have $p^\adj p = I \odot e_0$ and $\tilde{p}^\adj \tilde{p} = I \otimes e_0$, so $p$, $\tilde{p}$ are unitary. From this we have
\begin{equation}
	P (f) = p \, f \, p^\adj, \quad P^\dual (f) = p^\adj \, f \, p, \quad \tilde{P} (\tilde{f}) = \tilde{p} \, \tilde{f} \, \tilde{p}^\adj, \quad \tilde{P}^{\dual} (\tilde{f}) = \tilde{p}^\adj \, \tilde{f} \, \tilde{p}.
\end{equation}
Applying \eqref{eq:UnitaryEvolutionOperator} it is easy to obtain
\begin{equation}
	\frac{dp_t}{dt} = -i H_t p_{t} + i p_{t} \bar{H}, \quad \frac{dp_{t}^{\hadj}}{dt} = \Big(\frac{dp_t}{dt}\Big)^{\hadj},
\end{equation}
as well as, after some algebra, derivatives $\dot{P}$, $\dot{P}^\dual$,
\begin{equation}\label{eq:PdotPdotDual}
	\dot{P}_t = -i \comm{H_t-P_t(\bar{H})}{P_t}, \qquad \dot{P}^{\dual}_{t} = i \comm{P^{\dual}_{t}(H_t)-\bar{H}}{P^{\dual}_{t}}.
\end{equation}

\begin{proposition}\label{prop:PfunctionConvergence}
The following hold:
\begin{enumerate}
	\item\label{point:PfunctionConvergencePFourier} Fourier series $\sum_{n\in\integers} P_n \odot e_n$ converges absolutely and uniformly to $P$,
	\item\label{point:PfunctionConvergenceDerFourier} $\dot{P}$ and $\dot{P}^{\dual}$ admit uniformly convergent Fourier series
	\begin{equation}\label{eq:PfunctionConvergenceDerFourierSeries}
		\dot{P} = i\Omega \sum_{n\in\integers} n P_n \odot e_n, \qquad \dot{P}^{\dual} = i\Omega \sum_{n\in\integers} n P_{n}^{\dual} \odot e_n.
	\end{equation}
	\item\label{point:PpropertiesUnitary} $P^{\dual}P = PP^{\dual} = I \odot e_0$ and $\tilde{P}^\dual \tilde{P} = \tilde{P} \tilde{P}^\dual = I \otimes e_0$,
	\item\label{point:PpropertiesDistributive} $P$ and $\tilde{P}$ are distributive and preserve adjoints,
	\item\label{point:PpropertiesCPcontractice} $P \in \cptp{\mathscr{L}^2}$, $\tilde{P}\in \cptp{\mathscr{\bar{\otimes}}}$ and both $P$, $\tilde{P}$ are isometries.
\end{enumerate}
\end{proposition}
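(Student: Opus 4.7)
The proposition splits into two groups: parts (\ref{point:PfunctionConvergencePFourier})--(\ref{point:PfunctionConvergenceDerFourier}) are Fourier-analytic statements about $t\mapsto P_t$, while parts (\ref{point:PpropertiesUnitary})--(\ref{point:PpropertiesCPcontractice}) are essentially pointwise-in-$t$ consequences of the unitarity of $p_t$. I would address them in this order, with the main effort concentrated on (\ref{point:PfunctionConvergenceDerFourier}), where the piecewise $\mathcal{C}^1$ regularity of $H_t$ has to propagate one derivative further into $P_t$ in order to upgrade termwise differentiation of the Fourier series from a formal gesture to an honest uniformly convergent identity.

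For (\ref{point:PfunctionConvergencePFourier}), the hypothesis that $H_t$ is continuous and piecewise $\mathcal{C}^1$ propagates through the ODE $\dot p_t=-iH_tp_t+ip_t\bar H$ from \eqref{eq:UnitaryEvolutionOperator} to give $p_t\in\mathcal{C}^1$ with $\dot p_t\in L^\infty\subset L^2$. Parseval applied to $\dot p$ together with equivalence of matrix norms yields $\sum_n n^2\|p_n\|_\infty^2<\infty$, and Cauchy--Schwarz against $\sum_{n\neq 0}n^{-2}$ upgrades this to absolute summability $\sum_n\|p_n\|_\infty<\infty$. Expanding $P_t(a)=p_ta p_t^\hadj$ as a product of Fourier series and regrouping by total index identifies $P_n(a)=\sum_{k\in\integers}p_k\,a\,p_{k-n}^\hadj$, and submultiplicativity of $\tnorm{\cdot}$ from \eqref{eq:InequalityOfMatrixNorms} leads to $\sum_n\|P_n\|_\infty\leq (\sum_k\|p_k\|_\infty)^2<\infty$, establishing absolute and uniform convergence of $\sum_n P_n\odot e_n$ to $P$.

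For (\ref{point:PfunctionConvergenceDerFourier}), formula \eqref{eq:PdotPdotDual} shows that $\dot P_t=-i[H_t-P_t(\bar H),P_t]$ is continuous in $t$, and differentiating once more (using piecewise $\mathcal{C}^1$-ness of $H_t$) puts $\ddot P_t$ in $L^\infty\subset L^2$. Integration by parts in the $n$-th Fourier integral identifies the Fourier coefficient of $\dot P$ as $i\Omega n P_n$, and rerunning the Parseval/Cauchy--Schwarz argument above, now applied to $\dot P$ and $\ddot P$, gives $\sum_n|n|\,\|P_n\|_\infty<\infty$, which is exactly the uniform-absolute convergence of \eqref{eq:PfunctionConvergenceDerFourierSeries}. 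The identical argument covers $\dot P^\dual$, since $P_t^\dual(a)=p_t^\hadj a p_t$ is obtained from $P_t$ merely by interchanging $p_t$ and $p_t^\hadj$.

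For (\ref{point:PpropertiesUnitary})--(\ref{point:PpropertiesCPcontractice}): the unitarity $p_tp_t^\hadj=p_t^\hadj p_t=I$ gives pointwise $P_t^\dual\circ P_t=P_t\circ P_t^\dual=\mathrm{id}$, hence $P^\dual P=PP^\dual=I\odot e_0$; transport through $\iota$ handles $\tilde P$. Distributivity $P(fg)=P(f)P(g)$ is pointwise, from inserting $p_t^\hadj p_t=I$ between $f(t)$ and $g(t)$, and $P(f^\adj)=P(f)^\adj$ follows from $(p_tf(t)p_t^\hadj)^\hadj=p_tf(t)^\hadj p_t^\hadj$. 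For (\ref{point:PpropertiesCPcontractice}), $P_t$ admits the single-Kraus representation \eqref{eq:AlphaAtCP} with $X_{1,t}=p_t^\hadj$, bounded since $\|p_t^\hadj\|_\infty=1$, and is trace preserving by cyclicity of the trace, so Propositions \ref{thm:AlphaAtCP}(\ref{point:AlphaAtCPone}) and \ref{thm:AlphaTP} give $P\in\cptp{\mathscr{L}^2}$ (and the same for $\tilde P$ by the correspondence established in Section \ref{subsubsect:OperatorsOnS}); isometry then follows from unitary invariance of $\tnorm{\cdot}$, which upgrades the contraction inequality from Proposition \ref{prop:AlphaContractive} to the equality $\|P(f)\|_{\mathscr{L}^{2}_{1}}=\|f\|_{\mathscr{L}^{2}_{1}}$.
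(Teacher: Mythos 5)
Your proposal is correct and follows essentially the same route as the paper: parts (\ref{point:PfunctionConvergencePFourier})--(\ref{point:PfunctionConvergenceDerFourier}) rest on the standard bounded-derivative (Parseval plus Cauchy--Schwarz) argument for absolute and uniform convergence — which is exactly the content of Lemmas \ref{lemma:MatrixFunFourierUniConv} and \ref{lemma:MatrixFunFourierAbsConv} that the paper invokes, applied there directly to $P$ and $\dot P$ rather than to $p_t$ and the convolution product as you do — while (\ref{point:PpropertiesUnitary})--(\ref{point:PpropertiesCPcontractice}) follow, as in the paper, from unitarity of $p_t$, the Kraus form via Proposition \ref{thm:AlphaAtCP}, Proposition \ref{thm:AlphaTP}, and unitary invariance of $\tnorm{\cdot}$. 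Your version of (\ref{point:PfunctionConvergenceDerFourier}) in fact yields the slightly stronger conclusion $\sum_n |n|\,\|P_n\|_\infty<\infty$, which the paper does not claim but which is harmless.
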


\begin{proof}
Ad (\ref{point:PfunctionConvergencePFourier}). As $\dot{P}$ is bounded, this claim is a consequence of Lemma \ref{lemma:MatrixFunFourierAbsConv} (see Section \ref{appsection:FourierSeries} in the Appendix).

Ad (\ref{point:PfunctionConvergenceDerFourier}). Second derivatives of $P$ and $P^\dual$ are easy to compute (we will omit the explicit calculations) and one immediately notices that $\ddot{P}$ and $\ddot{P}^{\dual}$ exist whenever $dH_t / dt$ exists. From assumption of a.e. differentiability of $H_t$ we conclude $\ddot{P}$ and $\ddot{P}^\dual$ are piecewise continuous; therefore by Lemma \ref{lemma:MatrixFunFourierUniConv}, both $\dot{P}$ and $\dot{P}^\dual$ admit uniformly convergent Fourier series.

Ad (\ref{point:PpropertiesUnitary}). This comes easily from unitarity of $\tilde{p}$ and $p$.

Ad (\ref{point:PpropertiesDistributive}). Again, employing unitarity of $p$,
\begin{equation}
	P (fg) = p \, fg \, p^{\adj} = p \, f \, p^{\adj} p \, g \, p^{\adj} = P(f)P(g)
\end{equation}
for any $fg \in \mathscr{L}^2 (\mathbb{T},\matrd)$, and
\begin{equation}
	P(f^\adj) = p \, f^\adj \, p^\adj = (p \, f \, p^\adj)^\adj = P(f)^\adj.
\end{equation}
Similar claims remain true for Fourier liftings.

Ad (\ref{point:PpropertiesCPcontractice}). From cyclicity of trace and unitarity of $p_t$ we have $\tr{P_t (a)} = \tr{a}$, so $P_t \in \tp{\matrd}$ for each $t\in\mathbb{T}$ and $P,P^\dual\in\tp{\mathscr{L}^2}$ by Proposition \ref{thm:AlphaTP}. As $P_t (a) = p_t a p_{t}^{\hadj}$ is clearly of Kraus form for each $t\in\mathbb{T}$ and $t\mapsto p_t$ is a bounded function, Proposition \ref{thm:AlphaAtCP} yields $P, P^\dual \in \cp{\mathscr{L}^2}$ and in consequence $\tilde{P},\tilde{P}^\dual \in \cp{\mathscr{\bar{\otimes}}}$. 
For any $a\in\matrd$ we have $\tnorm{a}=\tnorm{a^\hadj}$ \cite{Conway1999}; this yields that for any unitary matrices $u,v$, mapping $a \mapsto u a v$ is an isometry,
\begin{align}
	\tnorm{uav} &= \tr{\sqrt{v^\hadj a^\hadj u^\hadj u a v}} = \tr{\sqrt{(av)^{\hadj}av}} = \tnorm{av} \\
	&= \tnorm{v^\hadj a^\hadj} = \tr{\sqrt{av v^\hadj a^\adj}} = \tr{\sqrt{aa^\hadj}} = \tnorm{a^\hadj} = \tnorm{a},\nonumber
\end{align}
therefore for $a \mapsto P_t(a) = p_t a p_{t}^{\hadj}$,
\begin{equation}
	\| P_t \|_{\infty} = \sup_{\tnorm{a}\leqslant 1}\tnorm{p_t a p_{t}^{\hadj}} = \sup_{\tnorm{a}\leqslant 1}\tnorm{a} = 1
\end{equation}
since $p_t$ is unitary and $P_t$ is an isometry. Switching $p_t$ with $p_{t}^{\hadj}$ also yields $P_{t}^{-1}$ is an isometry as well. This yields
\begin{equation}
	\| P(f) \|_{\mathscr{L}^{2}_{1}}^{2} = \frac{1}{T} \int_{\mathbb{T}} \tnorm{P_t (f(t))}^{2}\, dt = \frac{1}{T} \int_{\mathbb{T}} \tnorm{f(t)}^{2}\, dt = \| f \|_{\mathscr{L}^{2}_{1}}^{2},
\end{equation}
which shows $P$, $\tilde{P}$ are isometries.
\end{proof}

\begin{lemma}\label{prop:LtSquareInt}
$L \in \mathscr{L}^{\infty}(\mathbb{T},B(\matrd))$ and therefore it is square-integrable.
\end{lemma}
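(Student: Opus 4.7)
The plan is to bound $\|L_t\|_{\infty}$ uniformly in $t$ by splitting the Lindbladian into its Hamiltonian and dissipative parts and controlling each separately.

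First I would write $L_t = -i\,\mathrm{ad}(H_t) + P_t K P_{t}^{-1}$ and estimate each summand in the operator norm on $\matrd_1$. For the commutator piece, a direct application of the submultiplicativity in \eqref{eq:InequalityOfMatrixNorms} gives $\tnorm{[H_t,a]} \leqslant 2\|H_t\|_\infty \tnorm{a}$, hence $\|\mathrm{ad}(H_t)\|_\infty \leqslant 2\|H_t\|_\infty$. Since $t\mapsto H_t$ is assumed periodic and piecewise $\mathcal{C}^1$, it is in particular continuous on the compact group $\mathbb{T}$, so $\sup_{t\in\mathbb{T}} \|H_t\|_\infty =: M_H < \infty$.

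For the dissipative part, I would invoke Proposition \ref{prop:PfunctionConvergence}(\ref{point:PpropertiesCPcontractice}) to observe that $P_t$ and $P_{t}^{-1}$ are isometries on $\matrd_1$, i.e.\ $\|P_t\|_\infty = \|P_{t}^{-1}\|_\infty = 1$ for every $t\in\mathbb{T}$. Consequently
\begin{equation}
\|P_t K P_{t}^{-1}\|_\infty \leqslant \|P_t\|_\infty \|K\|_\infty \|P_{t}^{-1}\|_\infty = \|K\|_\infty,
\end{equation}
where $\|K\|_\infty < \infty$ because $K \in B(\matrd_1)$ is a fixed bounded operator (the Davies generator from \eqref{eq:LindInterPicture}). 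Combining the two estimates yields
\begin{equation}
\esssup_{t\in\mathbb{T}} \|L_t\|_\infty \leqslant 2 M_H + \|K\|_\infty < \infty,
\end{equation}
which is exactly the assertion $L\in\mathscr{L}^\infty(\mathbb{T},B(\matrd))$.

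Square-integrability is then an immediate corollary: since $\mu = T^{-1}\lambda$ is a finite (in fact probability) measure on $\mathbb{T}$, the inclusion $\mathscr{L}^{\infty}(\mathbb{T},B(\matrd)) \subset \mathscr{L}^{2}(\mathbb{T},B(\matrd))$ holds with $\|L\|_{\mathscr{L}^2} \leqslant \|L\|_{\mathscr{L}^\infty}$. No step here is genuinely difficult; the only subtlety is the appeal to Proposition \ref{prop:PfunctionConvergence} for the isometry property of $P_t$ and $P_t^{-1}$, which ensures that conjugation by $P_t$ does not inflate $K$ pointwise in $t$.
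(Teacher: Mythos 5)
Your proof is correct and follows essentially the same route as the paper: split $L_t$ into the commutator and the conjugated-Davies parts, bound the former by $\sup_t\|H_t\|_\infty$ and the latter by $\|K\|_\infty$ using the isometry of $P_t$ and $P_t^{-1}$ from Proposition \ref{prop:PfunctionConvergence}, then use finiteness of the measure for square-integrability. The only cosmetic difference is that you bound $\|\mathrm{ad}(H_t)\|_\infty$ directly by $2\|H_t\|_\infty$ via submultiplicativity, whereas the paper routes through the equivalence of $\tnorm{\cdot}$ and $\|\cdot\|_\infty$ to get a constant $C\|H_t\|_\infty$; both are fine.
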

\begin{proof}
Pick any $t\in\mathbb{T}$; then
\begin{align}
	\| L_t \|_{\infty} \leqslant \| \comm{H_t}{\cdot} \|_{\infty} + \| P_t K P_{t}^{-1} \|_{\infty}.
\end{align}
Trivially, for any two matrices $a,b\in \matrd$, one has $\tnorm{\comm{a}{b}} \leqslant 2 \tnorm{a} \tnorm{b}$ (we note, however, that one can provide a better upper bound; see \cite{Boettcher2008} for details). As any two norms on $\matrd$ are equivalent \cite{RogerA.Horn2012}, there exists a constant $\frac{1}{2} C>0$ such that $\tnorm{\cdot} \leqslant \frac{1}{2} C \| \cdot \|_{\infty}$. This yields
\begin{subequations}
	\begin{equation}
		\| \comm{H_t}{\cdot} \|_{\infty} = \sup_{\tnorm{a}\leqslant 1}{\tnorm{H_t a - a H_t}} \leqslant C \|H_t\|_{\infty},
	\end{equation}
	\begin{align}\label{eq:LtSquareIntSecondEq}
		\| P_t K P_{t}^{-1} \|_{\infty} &= \sup_{\tnorm{a}\leqslant 1} \tnorm{P_t KP_{t}^{-1}(a)} = \sup_{\tnorm{a}\leqslant 1} \tnorm{KP_{t}^{-1}(a)} \\
		&\leqslant \| K \|_{\infty} \sup_{\tnorm{a}\leqslant 1} \tnorm{P_{t}^{-1}(a)} = \| K \|_{\infty}.\nonumber
	\end{align}
\end{subequations}
where \eqref{eq:LtSquareIntSecondEq} comes from Proposition \ref{prop:PfunctionConvergence}. So we have
\begin{equation}
	\| L \|_{\mathscr{L}^\infty} = \sup_{t\in\mathbb{T}}\| L_t \|_{\infty} \leqslant C D + \|K\|_{\infty} < \infty
\end{equation}
where $D = \sup_{t\in\mathbb{T}}\|H_t\|_{\infty}$. Hence, $L$ is bounded. Square integrability is then immediate, as $\int_{\mathbb{T}} \| L_t \|_{\infty}^{2} \, dt \leqslant \| L \|_{\mathscr{L}^\infty}^{2}$.
\end{proof}

The dynamics in the enlarged space of states is then defined by a new object, which we call a \emph{generalized Lindbladian}; this should be understood as an extension of the original Lindbladian $L_t$, encoding information about periodicity of fundamental solutions; we remark that the following definition of generalized Lindbladian corresponds exactly to the \emph{generalized Hamiltonian} (also called the \emph{Floquet Hamiltonian} in some sources), constructed in similar manner in Howland time-independent formalism. Unsurprisingly, these two closely related objects share some properties, as for example a band-like structure of a spectrum, as we show.

\begin{definition}\label{def:Lmap}
Densely defined map $\mathcal{L} : \mathscr{C}^1 (\mathbb{T},\matrd_1) \to \mathscr{L}^2 (\mathbb{T},\matrd_1)$ given as
\begin{equation}\label{eq:Lmap}
	\mathcal{L} = L - \partial, \quad f \mapsto \mathcal{L}(f)(t) = \Big( L_t - \frac{d}{dt} \Big) (f(t))
\end{equation}
will be called the generalized Lindbladian. Its Fourier lifting $\tilde{\mathcal{L}}$, densely defined on $\iota^{-1}(\mathscr{C}^{1}(\mathbb{T},\matrd_1))$ will be then simultaneously given by
\begin{equation}\label{eq:LmapS}
	\tilde{\mathcal{L}} = \tilde{L} - i\Omega \, I \otimes F_z,
\end{equation}
and often referred to by the same name.
\end{definition}

Recall, that Floquet theorem allowed to postulate the Floquet normal form of the solution to the MME \ref{eq:LtFloquet} given by $\Lambda_t = P_t e^{t\bar{L}}$ for $\bar{L} = -i \ad{\bar{H}} + K$ \eqref{eq:LambdaGeneral}. Assume $\bar{L}$ to be diagonalizable by family of linearly independent matrices $\{ \varphi_j \}\subset\matrd_1$,
\begin{equation}
	\bar{L}(\varphi_j) = \xi_j \varphi_j, \qquad \xi_j \in \complexes, \, 1\leqslant j \leqslant d^2.
\end{equation}

\begin{proposition}
Point spectrum of $\mathcal{L}$ is of a form 
\begin{equation}
	\sigma_{p}(\mathcal{L}) = \sigma (\bar{L}) - i \Omega \integers, \quad n\in\integers
\end{equation}
and the corresponding eigenbasis is
\begin{equation}
	\{ \phi_{j,n} = P(\varphi_j \odot e_n) : \bar{L}(\varphi_j) = \xi_j \varphi_j , \, n\in\integers\}.
\end{equation}
Similarily, family $\{ \tilde{P}(\varphi_j \otimes e_n) \}$ is the eigenbasis of $\tilde{\mathcal{L}}$ for the same eigenvalues.
\end{proposition}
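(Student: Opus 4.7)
My plan is to prove both inclusions $\sigma_p(\mathcal{L}) = \sigma(\bar{L}) - i\Omega\mathbb{Z}$ by first verifying that each $\phi_{j,n}$ is indeed an eigenvector with eigenvalue $\xi_j - in\Omega$ (giving $\supseteq$), and then using a gauge-like substitution $f = P(g)$ to conjugate the eigenvalue problem for $\mathcal{L}$ into a constant-coefficient ODE on $\matrd_1$ with periodic boundary conditions (giving $\subseteq$ and exhausting the eigenbasis).

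For the forward direction I would compute $\mathcal{L}(\phi_{j,n})(t)$ directly. Writing $\phi_{j,n}(t) = P_t(\varphi_j)\, e^{in\Omega t}$, I would use $P_t^{-1}(\phi_{j,n}(t)) = \varphi_j e^{in\Omega t}$ to reduce the dissipative part of $L_t$ to $P_t K(\varphi_j)\, e^{in\Omega t}$, and combine with the Hamiltonian commutator. For $\partial$ applied to $\phi_{j,n}$, I would substitute the identity $\dot P_t = -i\,\mathrm{ad}(H_t - P_t(\bar{H}))\circ P_t$ from \eqref{eq:PdotPdotDual}, noting that $[P_t(\bar H), P_t(\varphi_j)] = P_t([\bar H, \varphi_j])$ by unitarity of $p_t$. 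The commutators $-i[H_t,P_t(\varphi_j)]$ from $L_t$ and $+i[H_t,P_t(\varphi_j)]$ from $-\partial$ cancel, and what remains collapses to $P_t(K(\varphi_j) - i[\bar H, \varphi_j]) - in\Omega P_t(\varphi_j) = P_t(\bar L(\varphi_j)) - in\Omega P_t(\varphi_j) = (\xi_j - in\Omega)P_t(\varphi_j)$, yielding $\mathcal{L}(\phi_{j,n}) = (\xi_j - in\Omega)\phi_{j,n}$.

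For the reverse inclusion, suppose $\mathcal{L}(f) = \lambda f$ for some $f \in \mathscr{C}^1(\mathbb{T},\matrd_1)$. Substitute $f(t) = P_t(g(t))$, which is legitimate since $P$ is an invertible isometry on $\mathscr{L}^2(\mathbb{T},\matrd_1)$ by Proposition \ref{prop:PfunctionConvergence}, and $g$ inherits the regularity of $f$. Repeating the cancellation above with $g(t)$ in place of $\varphi_j e^{in\Omega t}$ reduces the eigenequation to $P_t(\bar L(g(t)) - \dot g(t)) = \lambda P_t(g(t))$, i.e.\ to the constant-coefficient ODE $\dot g = (\bar L - \lambda)g$ on $\matrd_1$ with the periodicity constraint $g(T) = g(0)$. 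Expanding $g(0) = \sum_j c_j \varphi_j$ in the eigenbasis of $\bar L$ gives $g(t) = \sum_j c_j\, e^{(\xi_j - \lambda)t}\varphi_j$, and periodicity forces $e^{(\xi_j - \lambda)T} = 1$ whenever $c_j \neq 0$, so $\lambda = \xi_j - in\Omega$ for some $j,n$; inserting back, $f = c_j\, \phi_{j,n}$ up to superposition of eigenvectors sharing the same $\lambda$.

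Finally, for the eigenbasis claim I would invoke that $(\varphi_j)_{j=1}^{d^2}$ is a basis of $\matrd_1$ and $(e_n)_{n\in\mathbb{Z}}$ is a Schauder basis of $L^2(\mathbb{T})$, so $(\varphi_j \otimes e_n)_{j,n}$ is a Schauder basis of $\matrd_1 \,\bar\otimes\, L^2(\mathbb{T})$, and applying the bounded invertible $\tilde P$ (and then $\iota$) yields that $\{\phi_{j,n}\}$ is a basis of $\mathscr{L}^2(\mathbb{T},\matrd_1)$. The corresponding statement for $\tilde{\mathcal{L}}$ follows by conjugating with $\iota^{-1}$. I expect the main technical obstacle to be bookkeeping in the first computation — keeping straight the two competing $[H_t, P_t(\cdot)]$ terms — and handling the possibility of repeated eigenvalues in $\bar L$, which I would address by noting that diagonalisability of $\bar L$ is assumed in the statement, so the degenerate case reduces to choosing any basis of the corresponding eigenspace.
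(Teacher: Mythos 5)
Your proof is correct, but it goes well beyond the route the paper takes, and it is worth comparing the two. For the forward direction the paper does not touch the commutator structure at all: it simply rewrites $\phi_{j,n}(t)=e^{-t\xi_{j,n}}\varphi_j(t)$, where $\varphi_j(t)=\Lambda_t(\varphi_j)=e^{t\xi_j}P_t(\varphi_j)$ is a Floquet solution of the MME satisfying $L_t(\varphi_j(t))=\dot{\varphi}_j(t)$, so the eigenequation $L_t(\phi_{j,n}(t))-\dot{\phi}_{j,n}(t)=\xi_{j,n}\phi_{j,n}(t)$ drops out of the product rule in one line. Your explicit bookkeeping with $\dot{P}_t=-i\,\mathrm{ad}(H_t-P_t(\bar H))\circ P_t$ and the cancellation of the two $[H_t,\cdot]$ terms arrives at the same identity, just less economically; both are valid. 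Where your write-up genuinely adds something is the reverse inclusion: the paper's proof stops at ``it suffices to show $\phi_{j,n}$ satisfy the eigenequation,'' which establishes $\sigma(\bar L)-i\Omega\mathbb{Z}\subseteq\sigma_p(\mathcal{L})$ and exhibits the candidate eigenvectors, but never argues that there are no \emph{other} eigenvalues or eigenvectors. Your gauge substitution $f=P(g)$, reducing $\mathcal{L}(f)=\lambda f$ to the constant-coefficient problem $\dot g=(\bar L-\lambda)g$ with periodic boundary condition $g(T)=g(0)$, supplies exactly that missing half (and is the standard Floquet monodromy argument), and your appeal to $(\varphi_j\otimes e_n)_{j,n}$ being a Schauder basis of $\matrd_1\,\bar{\otimes}\,L^2(\mathbb{T})$, pushed forward by the invertible isometry $\tilde P$, is what actually justifies the word ``eigenbasis'' in the statement. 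In short: the paper's argument is shorter because it reuses the already-established factorization $\Lambda_t=P_te^{t\bar L}$, while yours is self-contained and proves the full set equality rather than one inclusion.
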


\begin{proof}
It suffices to show $\phi_{j,n}$ satisfy the eigenequation
\begin{equation}\label{eq:LtMinusDerEigenequation}
	L_t (\phi_{j,n}(t)) - \frac{d}{dt} \phi_{j,n}(t) = \xi_{j,n} \phi_{j,n}(t).
\end{equation}
Note, that $\phi_{j,n}(t) = e^{-t\xi_{j,n}} \varphi_j (t)$, where $\varphi_j (t) = e^{t\xi_j} \phi_j (t)$; substituting this back to \eqref{eq:LtMinusDerEigenequation} and employing the fact, that $\varphi_j (t)$ solves the MME \eqref{eq:LtFloquet}, i.e.~$L_t(\varphi_j(t)) = \dot{\varphi}_j (t)$, the result is immediate.
\end{proof}

\subsubsection{CP-divisible dynamics in generalized space}

In this section we will introduce a dynamics on the enlarged space, induced by generalized Lindbladian, as mentioned earlier; we partially follow the reasoning given originally by Howland \cite{Howland1974,Howland1979,Cycon2007}. Let an operator-valued function $\tau \mapsto \mathcal{V}_\tau \in B(\mathscr{L}^2)$, $\tau \in [0,\infty )$, be defined via
\begin{equation}
	\mathcal{V}_\tau (f)(t) = V_{t,t-\tau}(f(t)),
\end{equation}
where $V_{t,s} = \Lambda_t \Lambda_{s}^{-1}$ is the propagator of quantum dynamical map $\Lambda_t$. We will consider a composition of $\mathcal{V}_\tau$ with a right shift operator $\Delta_\tau$ given by \eqref{eq:RightShiftDefinition}, namely a function $\tau\mapsto W_\tau = \mathcal{V}_\tau \circ \Delta_\tau$, acting on $\mathscr{L}^2 (\mathbb{T},\matrd_1)$ via
\begin{equation}\label{eq:WtauDef}
	W_\tau(f)(t) = V_{t,t-\tau}(f(t-\tau)),
\end{equation}
as well as its Fourier lifting $\tilde{W}_\tau$. The important result then follows:

\begin{theorem}\label{thm:WtauCPTPcontraction}
The following claims hold:
\begin{enumerate}
	\item\label{claim:WtauCPTPcontractionOne} Families $\{W_\tau : \tau \geqslant 0\}$ and $\{\tilde{W}_\tau : \tau \geqslant 0\}$ are strongly differentiable contraction $C_0$-semigroups of completely positive and trace preserving maps over $\mathscr{L}^{2}(\mathbb{T},\matrd_1)$ and $\complexes^{n\times n} \bar{\otimes} L^2 (\mathbb{T})$, generated by $\mathcal{L}$ and $\tilde{\mathcal{L}}$, respectively, i.e.~$W_\tau = e^{\tau\mathcal{L}}$ and $\tilde{W}_\tau = e^{\tau\tilde{\mathcal{L}}}$.
	\item\label{claim:WtauCPTPcontractionTwo} Semigroup $\{e^{\tau \tilde{\mathcal{L}}} : \tau \geqslant 0\}$ admits factorized form
		\begin{equation}
			e^{\tau\tilde{\mathcal{L}}} = \tilde{P} (e^{\tau \bar{L}} \otimes e^{-i\tau\Omega F_z}) \tilde{P}^{\dual}.
		\end{equation}
\end{enumerate}
\end{theorem}

\begin{proof}
Ad \eqref{claim:WtauCPTPcontractionOne}. We prove the statement only for $W_\tau$, as the proof for $\tilde{W}_\tau$ can be performed in analogous manner. Applying general ideas by Howland \cite{Howland1974,Cycon2007} we first check the semigroup properties. Clearly, $W_\tau = I$; applying Chapman-Kolmogorov properties of $V_{t,s}$ for $\tau_1, \tau_2 \in [0,\infty )$ (e.g. divisibility) we obtain
\begin{align}
	W_{\tau_1 + \tau_2}(f)(t) &= V_{t,t-\tau_1 - \tau_2}(f(t-\tau_1-\tau_2)) \\
	&= V_{t,t-\tau_1} V_{t-\tau_1, t-\tau_1-\tau_2} (f(t-\tau_1-\tau_2)) \nonumber \\
	&= V_{t,t-\tau_1}( W_{\tau_2}(f)(t-\tau_1)) = W_{\tau_1} W_{\tau_2} (f)(t) \nonumber
\end{align}
and so $W_{\tau_1+\tau_2} = W_{\tau_1}W_{\tau_2}$.

As $\Lambda_t$ is CP-divisible, $V_{t,t-\tau}\in \cptp{\matrd}$ and is a trace norm contraction for every $t \in [0,\infty )$. Propositions \ref{thm:AlphaTP} and \ref{prop:AlphaContractive} yield $\mathcal{V}_\tau \in \tp{\mathscr{L}^2}$ and $\mathcal{V}_\tau$ is a contraction. For $f(t) = a$, the constant function, \eqref{eq:LambdaGeneral} yields, after simple algebra
\begin{equation}
	\mathcal{V}_{\tau}(f)(t) = V_{t,t-\tau}(a) = p_t e^{\tau\bar{L}}(p_{t-\tau}^{\hadj} a\, p_{t-\tau})p_{t}^{\hadj}.
\end{equation}
As $e^{\tau\bar{L}} \in \cptp{\matrd}$, it admits Kraus representation $e^{\tau\bar{L}} = \sum_{j=1}^{d^2} Y_{j,\tau}^{\hadj} a Y_{j,\tau}$. This allows to write $V_{t,t-\tau}$ in Kraus form as well,
\begin{equation}
	V_{t,t-\tau}(a) = \sum_{j=1}^{d^2} X_{j\tau, t}^{\hadj} a X_{j\tau,t}, \qquad X_{j\tau,t} = p_{t-\tau}Y_{j,\tau}p_{t}^{\hadj},
\end{equation}
where functions $t\mapsto X_{j\tau,t}$ are clearly bounded. Then, Proposition \ref{thm:AlphaAtCP} guarantees $\mathcal{V}_{\tau} \in \cp{\mathscr{L}^2}$. As $\Delta_\tau \in \cptp{\mathscr{L}^2}$ and is a contraction (Proposition \ref{prop:RightshiftCPTP}), we finally have $W_\tau \in \cptp{\mathscr{L}^2}$ and $W_\tau$ is a contraction as well. Claims related to $\tilde{W}_\tau$ follow simultaneously.

To show that both families are generated by claimed maps, it suffices to compute the strong derivative of $\tau \mapsto W_\tau$. We have, for any $f\in \mathscr{C}^1 (\mathbb{T},\matrd_1)$,
\begin{align}
	\lim_{\tau\searrow 0}&\frac{1}{\tau}(W_\tau - I)(f)(t) = \lim_{\tau\searrow 0} \frac{1}{\tau} (V_{t,t-\tau}(f(t-\tau))-f(t)) \\
	&= \Lambda_t \lim_{\tau\searrow 0} \frac{1}{\tau} (\Lambda_{t-\tau}^{-1}(f(t-\tau))-\Lambda_{t}^{-1}(f(t))) = \Lambda_t \left. \frac{\partial g(t,\tau)}{\partial \tau}\right|_{\tau = 0}, \nonumber
\end{align}
where $g(t,\tau) = \Lambda_{t-\tau}^{-1}(f(t-\tau))$. The derivative of $g(t,\cdot)$ is easily found to be
\begin{equation}
	\left. \frac{\partial g(t,\tau)}{\partial \tau}\right|_{\tau = 0} = -\dot{\Lambda}_{t}^{-1}(f(t)) - \Lambda_{t}^{-1}( \dot{f}(t))
\end{equation}
leading to
\begin{equation}
	\lim_{\tau\searrow 0}\frac{1}{\tau}(W_\tau - I)(f)(t) = L_t(f(t)) - \frac{df(t)}{dt} = \mathcal{L}(f)(t),
\end{equation}
where the limit exists for every $f\in \mathscr{C}^1 (\mathbb{T},\matrd_1)$ and $W_\tau = e^{\tau\mathcal{L}}$; then, the remaining claim $\tilde{W}_\tau = e^{\tau \tilde{\mathcal{L}}}$ is obtained after considering appropriate Fourier lifting.

Ad \eqref{claim:WtauCPTPcontractionTwo}. Notice, that one can write
\begin{equation}
	\tilde{P} (e^{\tau \bar{L}} \otimes e^{-i\tau\Omega F_z}) \tilde{P}^{\dual} = \tilde{P}(e^{\tau \bar{L}}\otimes I)\tilde{\Delta}_{\tau}\tilde{P}^{\dual},
\end{equation}
where $\tilde{\Delta}_\tau = I \otimes e^{-i\tau\Omega F_z}$ is the Fourier lifting of right shift operator, as given in Proposition \ref{prop:DeltaTauSrep}; this yields
\begin{align}
	\iota \circ \tilde{P} (e^{\tau \bar{L}} &\otimes e^{-i\tau\Omega F_z}) \tilde{P}^{\dual}(\tilde{f})(t) = P_t e^{\tau \bar{L}} P_{t-\tau}^{\dual}(f(t-\tau)) \\
	&= P_t e^{t\bar{L}} e^{-(t-\tau) \bar{L}} P_{t-\tau}^{\dual} (f(t-\tau)) = \Lambda_t \Lambda_{t-\tau}^{-1} (f(t-\tau)) \nonumber \\
	&= V_{t,t-\tau} (f(t-\tau)) = e^{\tau \mathcal{L}} (f)(t),\nonumber
\end{align}
which means $\tilde{P} (e^{\tau \bar{L}} \otimes e^{-i\tau\Omega F_z}) \tilde{P}^{\dual} = \iota^{-1} \circ e^{\tau \mathcal{L}} \circ \iota = e^{\tau\tilde{\mathcal{L}}}$, as claimed. 
\end{proof}

The next result is a straightforward, yet important practical implication of preceding Theorem.

\begin{proposition}
Solution of the MME \eqref{eq:LtFloquet} may be expressed as
\begin{equation}
	\rho_t = e^{t\mathcal{L}}(\rho_0 \odot e_0)(t) = \iota \circ e^{t\tilde{\mathcal{L}}}(\rho_0 \otimes e_0) (t),
\end{equation}
where $\rho_0 \in \matrd$ is the initial density operator, $\rho \geqslant 0$, $\tr{\rho_0} = 1$.
\end{proposition}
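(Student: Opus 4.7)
The plan is to evaluate the semigroup directly on the constant function $\rho_0 \odot e_0$, using the explicit formula for $W_\tau$ established in Theorem \ref{thm:WtauCPTPcontraction}, and then recognise the output as $\rho_t$. First I would invoke the identification $e^{t\mathcal{L}} = W_t$ proved in that theorem, so that the claim reduces to computing $W_t(\rho_0\odot e_0)(t)$ via the defining relation \eqref{eq:WtauDef}.

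Concretely, since $(\rho_0 \odot e_0)(s) = \rho_0$ for every $s\in\mathbb{T}$, substitution into \eqref{eq:WtauDef} with $\tau = t$ gives
\begin{equation}
W_t(\rho_0 \odot e_0)(t) = V_{t,t-t}\bigl((\rho_0\odot e_0)(t-t)\bigr) = V_{t,0}(\rho_0).
\end{equation}
The initial condition $\Lambda_0 = I$ (stated after \eqref{eq:MMELambda}) forces $V_{t,0} = \Lambda_t \Lambda_{0}^{-1} = \Lambda_t$, and therefore $W_t(\rho_0\odot e_0)(t) = \Lambda_t(\rho_0) = \rho_t$, which is the first equality in the statement. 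One only has to check that the constant function $\rho_0\odot e_0$ lies in a subset on which $W_t$ is well defined, but this is immediate because it is smooth and hence in $\mathscr{C}^1(\mathbb{T},\matrd_1) = \domain{\mathcal{L}}$.

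The second equality follows purely by functoriality of the Fourier lifting. By \eqref{eq:FourierLiftingDefinition} we have $e^{t\mathcal{L}} = \iota \circ e^{t\tilde{\mathcal{L}}}\circ \iota^{-1}$, and since $\iota^{-1}(\rho_0\odot e_0) = \rho_0 \otimes e_0$ by definition of $\iota$ on simple tensors, applying $\iota\circ e^{t\tilde{\mathcal{L}}}$ to $\rho_0\otimes e_0$ and evaluating at $t$ reproduces $W_t(\rho_0\odot e_0)(t) = \rho_t$. As an internal consistency check, one may alternatively insert the factorised form of Proposition \ref{prop:ExpTauLalgebraicForm} and verify by direct computation, using $p_0 = I$ and $\Lambda_t = P_t e^{t\bar{L}}$, that $\tilde{P}(e^{t\bar{L}}\otimes e^{-it\Omega F_z})\tilde{P}^{\dual}(\rho_0\otimes e_0)$, when passed through $\iota$ and evaluated at $t$, collapses to $p_t e^{t\bar{L}}(\rho_0) p_t^{\hadj} = \rho_t$.

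There is no real obstacle here: the argument is essentially an evaluation of a constant-valued history at the initial time. The only point requiring mild care is keeping track of the distinction between the spatial variable $t$ appearing as the evaluation point and the time parameter $\tau = t$ of the semigroup, so that the shift $\Delta_\tau$ acts trivially on the constant section and leaves only the propagator $V_{t,0}$ acting on $\rho_0$.
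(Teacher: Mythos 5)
Your proposal is correct and follows essentially the same route as the paper: evaluate $W_t = e^{t\mathcal{L}}$ on the constant section via \eqref{eq:WtauDef} to get $V_{t,0}(\rho_0) = \Lambda_t(\rho_0) = \rho_t$, then obtain the second equality from $e^{t\mathcal{L}} = \iota \circ e^{t\tilde{\mathcal{L}}} \circ \iota^{-1}$ and $\iota^{-1}(\rho_0 \odot e_0) = \rho_0 \otimes e_0$. Your additional remarks on the domain of $W_t$ and the consistency check via the factorised form of Proposition \ref{prop:ExpTauLalgebraicForm} are sound but not needed.
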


\begin{proof}
Set $\rho_0 \odot e_0 \in \mathscr{L}^{2}(\mathbb{T},\matrd_1)$. By straightforward consequence of \eqref{eq:WtauDef} and Theorem \ref{thm:WtauCPTPcontraction},
\begin{equation}
	e^{t\mathcal{L}}(\rho_0 \odot e_0)(t) = V_{t,t-t}((\rho_0 \odot e_0)(0)) = V_{t,0}(\rho_0)
\end{equation}
which is equal to $\Lambda_t (\rho_0)$, as $V_{t,s} = \Lambda_t \Lambda_{s}^{-1}$ and $\Lambda_0 = I$. The second equality follows after putting $\rho_0 \otimes e_0 = \iota^{-1}(\rho_0 \odot e_0)$ and $e^{\tau\mathcal{L}} = \iota \circ e^{\tau\tilde{\mathcal{L}}} \circ \iota^{-1}$.
\end{proof}

Implications of the above Proposition are severe: namely, as a density operator, i.e.~a solution of original Master Equation, can be effectively computed by employing map $e^{t\mathcal{L}}$, such map may indeed be considered a \emph{generalized evolution} of the system. Theorem \ref{thm:WtauCPTPcontraction} then shows that such evolution inherits all nice mathematical properties from its counterpart acting on space $\matrd_1$, i.e.~is also completely positive, trace preserving and a contraction. As we emphasized few times before, this corresponds nicely with original Howland formulation, where the generalized evolution, induced by Floquet Hamiltonian, was given by \emph{unitary group} acting on the enlarged (Hilbert) space.

\subsection{Fourier formulation of time-independent formalism}
\label{sec:FourierExpansions}

In usual approach to time-independent formalism, say in NMR analysis \cite{Ernst2005,Scholz2010}, one often finds informative and useful to work with explicit, Fourier-like expansions of certain Fourier liftings; and so, given a time-periodic operator $A_t = A_{t+T}$ (for example the NMR Hamiltonian), one often expresses its related Fourier lifting $\tilde{A}$ as a series
\begin{equation}\label{eq:FourierLikeExpr}
	\tilde{A} \sim \sum_{n\in\integers} A_n \otimes F_n,
\end{equation}
where $A_n = \frac{1}{T} \int_{\mathbb{T}} A_t e^{-in\Omega t} \, dt$ stands for the Fourier transform of $A_t$ and $\{F_n\}$ are unitary Fourier shift operators, as given by Definition \ref{def:FourierShifts}. In this section, we explore such Fourier-like expansions of various time-dependent operators in some more detail and address some convergence-related issues.

\subsubsection{Fourier-like expression for \texorpdfstring{$\tilde{A}$}{S-representation of A} and its convergence}

Here we examine convergence of Fourier-like expressions \eqref{eq:FourierLikeExpr} in case of some particular operator $A$ on $\mathscr{L}^{2}(\mathbb{T},\matrd_1)$ defined as in Section \ref{sect:Operatorvaluedfunction} by $A(f)(t) = A_t (f(t))$, where $t\mapsto A_t$ is an operator-valued function. We explicitly note this function as $\mathbf{A} : \mathbb{T}\to B(\matrd_1)$, $\mathbf{A}(t) = A_t$. In all the following, we put
\begin{equation}
	A_n = \frac{1}{T} \int_{\mathbb{T}} A_t e^{-in\Omega t} \, dt.
\end{equation}

\begin{proposition}\label{prop:NormConvergenceJA}
If Fourier series $\sum_{n\in\integers} A_n \odot e_n$ converges uniformly to $\mathbf{A}$, then $\sum_{n\in\integers} A_n \otimes F_n$ converges to $\tilde{A}\in B(\matrd_1 \, \bar{\otimes} \, L^2 (\mathbb{T}))$ in norm. Likewise, $\sum_{n\in\integers} A_{n}^{\dual} \otimes F_{n}$ converges to $\tilde{A}^{\dual}$ in norm.
\end{proposition}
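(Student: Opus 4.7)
The plan is to identify each partial sum $\tilde{A}^{(N)} := \sum_{|n|\leqslant N} A_n \otimes F_n$ with the Fourier lifting of a concrete bounded operator on $\mathscr{L}^{2}(\mathbb{T},\matrd_1)$, and then transfer uniform ($\mathscr{L}^\infty$-norm) convergence of operator-valued Fourier series on $\mathbb{T}$ to operator-norm convergence of their Fourier liftings by means of the isometric isomorphism $\iota$.

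First I would verify that $A_n \otimes F_n$ is the Fourier lifting of the operator $B_n$ given by $B_n(f)(t) = e^{in\Omega t} A_n(f(t))$. This is a direct calculation: on a simple tensor $a \otimes e_m$ one has $(A_n \otimes F_n)(a \otimes e_m) = A_n(a) \otimes e_{n+m}$, and $\iota(A_n(a) \otimes e_{n+m})(t) = e^{i(n+m)\Omega t}A_n(a) = e^{in\Omega t}A_n(\iota(a \otimes e_m)(t))$. Extending by linearity and continuity via the norm-convergent Fourier expansion of elements of $\matrd_1 \bar{\otimes} L^2(\mathbb{T})$, summation over $|n|\leqslant N$ yields $\tilde{A}^{(N)} = \iota^{-1}\circ A^{(N)} \circ \iota$, where $A^{(N)}(f)(t) = \mathbf{A}^{(N)}(t)(f(t))$ and $\mathbf{A}^{(N)}(t) = \sum_{|n|\leqslant N} A_n e^{in\Omega t}$ is the $N$-th Dirichlet partial sum of $\mathbf{A}$.

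Since the Fourier lifting map $X \mapsto \iota^{-1} \circ X \circ \iota$ is an isometric algebra isomorphism of the respective bounded operator algebras, the problem reduces to estimating $\| A^{(N)} - A \|_\infty$. The key step is the pointwise inequality from \eqref{eq:InequalityOfMatrixNorms}, which gives
\begin{equation*}
	\| (A^{(N)} - A)(f) \|_{\mathscr{L}^{2}_{1}}^{2} = \frac{1}{T}\int_{\mathbb{T}} \tnorm{(\mathbf{A}^{(N)}(t) - A_t)(f(t))}^{2}\, dt \leqslant \| \mathbf{A}^{(N)} - \mathbf{A} \|_{\mathscr{L}^\infty}^{2}\, \| f \|_{\mathscr{L}^{2}_{1}}^{2},
\end{equation*}
so $\| A^{(N)} - A \|_\infty \leqslant \| \mathbf{A}^{(N)} - \mathbf{A} \|_{\mathscr{L}^\infty}$, which tends to $0$ by the assumption of uniform convergence; combined with the isometry, this yields $\tilde{A}^{(N)} \to \tilde{A}$ in norm.

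For the dual statement, the same scheme applies. The operator-valued function $t \mapsto A_t^\dual$ has Fourier coefficients $A_n^\dual$ by linearity of dualization, and since $B(\matrd_1)$ is finite-dimensional, the map $X \mapsto X^\dual$ is norm-continuous, so uniform convergence of $\sum A_n \odot e_n$ to $\mathbf{A}$ transfers to uniform convergence of $\sum A_n^\dual \odot e_n$ to $t\mapsto A_t^\dual$. Applying the previous argument to $A^\dual$, whose Fourier lifting is $\tilde{A}^\dual$ by \eqref{eq:AlphaFboundedFadj}, yields norm convergence of $\sum A_n^\dual \otimes F_n$ to $\tilde{A}^\dual$. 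The only slightly delicate point is the bookkeeping of step one, where the Fourier index shift $m \mapsto n+m$ produced by $F_n$ must be tracked carefully; the remaining steps are routine estimates.
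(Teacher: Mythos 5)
Your proposal is correct and follows essentially the same route as the paper: identify the partial sum $\sum_{|k|\leqslant n} A_k\otimes F_k$ with the Fourier lifting of the pointwise Dirichlet partial sum operator $f\mapsto (D_n * \mathbf{A})(t)(f(t))$, then bound the operator norm of the difference by $\|D_n*\mathbf{A}-\mathbf{A}\|_{\mathscr{L}^\infty}$ and invoke uniform convergence; the dual case is handled identically. Your write-up is only more explicit than the paper's about the simple-tensor verification of that identification (and your citation of \eqref{eq:InequalityOfMatrixNorms} should really be to the definition of the operator norm on $B(\matrd_1)$, a cosmetic point).
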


\begin{proof}
It is easy to see, that
\begin{align}\label{eq:NormConvergenceJA}
	\Big\| &\sum_{|k|\leqslant n} A_{k} \otimes F_k - \tilde{A} \Big\|_{\infty}^{2} = \sup_{\|f\|\leqslant 1} \frac{1}{T}\int_{\mathbb{T}} \Big\|\big(D_n * \mathbf{A}(t)-A_t\big)(f(t))\Big\|_{1}^{2} \, dt \\
	&\leqslant \sup_{t\in\mathbb{T}} \big\| D_n * \mathbf{A}(t) -A_t \big\|_{\infty}^{2} \sup_{\|f\|\leqslant 1} \frac{1}{T}\int_{\mathbb{T}} \tnorm{f(t)}^{2} \, dt = \big\| D_n * \mathbf{A} - \mathbf{A} \big\|_{\mathscr{L}^{\infty}}^{2}, \nonumber
\end{align}
which tends to $0$ as $n\to\infty$, as Fourier series $\sum_{n\in\integers} A_n \odot e_n$ converges uniformly to $\mathbf{A}$. Then, Fourier series $\sum_{n\in\integers}A_{n}^{\dual} \odot e_n$ converges uniformly to $A^\dual$ and \eqref{eq:NormConvergenceJA} may be directly reapplied to show that $\sum_{n\in\integers} A_{n}^{\dual} \otimes F_n$ converges to $\tilde{A}^{\dual}$.
\end{proof}

\begin{proposition}\label{prop:SreprAconvergence}
Let $\mathbf{A} : \mathbb{T} \to B(\matrd_1)$ be bounded. Then, $\tilde{A}$ is bounded and series $\sum_{n\in\integers} A_n\otimes F_n$ and $\sum_{n\in\integers} A_{n}^{\dual}\otimes F_n$ converge to $\tilde{A}$ and $\tilde{A}^{\dual}$, respectively, pointwise on $\iota^{-1}(\mathscr{C}^{0}(\mathbb{T},\matrd_1))$.
\end{proposition}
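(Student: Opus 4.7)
The plan is to (i) establish boundedness of $\tilde{A}$ via the isometry $\iota$ and the analogous boundedness of $A$ proved in Section \ref{sect:Operatorvaluedfunction}, and then (ii) reduce the pointwise convergence statement, by a basis decomposition in $\matrd$, to the $\mathscr{L}^2$-convergence of partial Fourier sums of a single bounded $\matrd_1$-valued function $t\mapsto A_t(b)$, where Theorem \ref{FourierSeriesConvLp} applies.

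For boundedness, set $C:=\sup_{t\in\mathbb{T}}\|A_t\|_\infty$, which is finite by assumption. The computation preceding \eqref{eq:AlphaFboundedFadj} then yields $\|A(f)\|_{\mathscr{L}^2_1}\leq C\|f\|_{\mathscr{L}^2_1}$, so $A$ is bounded. Because $\iota$ is an isometric isomorphism, $\tilde{A}=\iota^{-1}\!\circ A\circ\iota$ inherits boundedness with the same norm. The dual $\tilde{A}^\dual$ is handled identically, since $t\mapsto A_t^\dual$ is also bounded (by duality on the finite-dimensional $\matrd$) and $A_n^\dual$ is its $n$-th Fourier coefficient, as follows directly from \eqref{eq:L2dualMap} together with the Bochner-integral definition of $A_n$.

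For convergence on $\iota^{-1}(\mathscr{C}^0(\mathbb{T},\matrd_1))$, denote $S_N:=\sum_{|n|\leqslant N} A_n\otimes F_n$. Since $F_n(e_m)=e_{m+n}=e_n\cdot e_m$, the operator $F_n$ acts on $L^2(\mathbb{T})$ as multiplication by $e^{in\Omega t}$, hence on a simple tensor
\begin{equation*}
\iota\bigl((A_n\otimes F_n)(b\otimes g)\bigr)(t) = A_n(b)\,e^{in\Omega t}\,g(t),\qquad b\in\matrd_1,\; g\in L^2(\mathbb{T}).
\end{equation*}
Fix $f\in\mathscr{C}^0(\mathbb{T},\matrd_1)$. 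Since $\matrd_1$ is finite-dimensional, choose a basis $\{E_k\}_{k=1}^{d^2}$ and write $f(t)=\sum_{k=1}^{d^2} E_k\,g_k(t)$ with continuous scalar coefficients $g_k\in L^\infty(\mathbb{T})$; correspondingly $\tilde{f}=\sum_k E_k\otimes g_k$ is a \emph{finite} sum of simple tensors, so
\begin{equation*}
\iota(S_N\tilde{f})(t) = \sum_{k=1}^{d^2}\Bigl(\sum_{|n|\leqslant N} A_n(E_k)\,e^{in\Omega t}\Bigr)g_k(t) = \sum_{k=1}^{d^2}(D_N*\mathbf{A}_{E_k})(t)\,g_k(t),
\end{equation*}
where $\mathbf{A}_b(t):=A_t(b)$ has Fourier coefficients $A_n(b)$. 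Since $\iota(\tilde{A}\tilde{f})(t)=\sum_k \mathbf{A}_{E_k}(t)\,g_k(t)$, the $\mathscr{L}^2_1$-norm of the difference is controlled by
\begin{equation*}
\Bigl\|\sum_k (D_N*\mathbf{A}_{E_k}-\mathbf{A}_{E_k})\,g_k\Bigr\|_{\mathscr{L}^2_1}^2 \leqslant d^2\sum_{k=1}^{d^2}\|g_k\|_{L^\infty}^2\,\|D_N*\mathbf{A}_{E_k}-\mathbf{A}_{E_k}\|_{\mathscr{L}^2_1}^2.
\end{equation*}
Each $\mathbf{A}_{E_k}$ is bounded, hence lies in $\mathscr{L}^2(\mathbb{T},\matrd_1)$; as $\matrd_1$ is UMD, Theorem \ref{FourierSeriesConvLp} gives $D_N*\mathbf{A}_{E_k}\to\mathbf{A}_{E_k}$ in $\mathscr{L}^2_1$-norm, so the estimate tends to $0$. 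By isometry of $\iota$, this proves $S_N\tilde{f}\to\tilde{A}\tilde{f}$ in the norm of $\matrd_1\,\bar\otimes\, L^2(\mathbb{T})$. The proof for $\sum_n A_n^\dual\otimes F_n\to\tilde{A}^\dual$ is verbatim.

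The main obstacle is that, in the weaker bounded regime of the present proposition, uniform convergence of the Fourier series of $\mathbf{A}$ (as in Proposition \ref{prop:NormConvergenceJA}) is unavailable, so one cannot pull out an $\mathscr{L}^\infty$-norm of $D_N*\mathbf{A}-\mathbf{A}$. The saving device is precisely the continuity of $f$: the scalar coefficients $g_k$ arising from a basis expansion of $f$ are essentially bounded on the compact $\mathbb{T}$, and this is what converts the merely $\mathscr{L}^2_1$-convergence of $D_N*\mathbf{A}_{E_k}$ provided by Theorem \ref{FourierSeriesConvLp} into $\mathscr{L}^2$-convergence of the product.
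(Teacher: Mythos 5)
Your proof is correct and follows essentially the same route as the paper: both arguments rest on the fact that boundedness of $\mathbf{A}$ puts the relevant Fourier series in the scope of Theorem \ref{FourierSeriesConvLp}, and on the continuity of $f$ on the compact $\mathbb{T}$ to convert that $\mathscr{L}^2$-convergence into convergence of $S_N\tilde{f}$. The only (cosmetic) difference is that the paper applies Theorem \ref{FourierSeriesConvLp} once to the operator-valued function $\mathbf{A}\in\mathscr{L}^2(\mathbb{T},B(\matrd_1))$ and pulls out $\sup_{t}\tnorm{f(t)}$ directly, whereas you apply it to the $d^2$ matrix-valued functions $t\mapsto A_t(E_k)$ after a finite basis expansion of $f$.
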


\begin{proof}
Boundedness of $A$ and $A^\dual$ implies $A, \, A^\dual \in \mathscr{L}^2 (\mathbb{T},B(\matrd_1))$, so by Theorem \ref{FourierSeriesConvLp}, their Fourier series converge in $\| \cdot \|_{\mathscr{L}^{2}_{1}}$ norm. Take any $f\in \mathscr{C}^{0}(\mathbb{T},\matrd_1)$; employing boundedness of $A_{k}$ and isometry properties of $\iota$ we have
\begin{align}
	\Big\| \sum_{|k|\leqslant n}A_{k} \otimes F_k (\tilde{f}) - \tilde{A}(\tilde{f}) \Big\|_{\mathscr{L}^{2}_{1}}^{2} &= \frac{1}{T}\int_{\mathbb{T}} \Big\|\big(D_n * \mathbf{A}(t) - A_t\big)(f(t))\Big\|^{2}_{1} \, dt \\
	&\leqslant \sup_{t\in\mathbb{T}} \tnorm{f(t)}^{2} \cdot  \frac{1}{T}\int_{\mathbb{T}} \big\| D_n * \mathbf{A} (t)-A_t \big\|_{\infty}^{2} dt,\nonumber
\end{align}
which tends to $0$, $n\to\infty$, as $f$ attains its maximum over $\mathbb{T}$. The second equality results analogously.
\end{proof}

\begin{proposition}
If Fourier series of function $\mathbf{A} : \mathbb{T} \to B(\matrd_1)$ converges absolutely, then $\sum_{n\in\integers} A_n \otimes F_n$ converges strongly to $\tilde{A}$. Likewise, $\sum_{n\in\integers} A_{n}^{\dual} \otimes F_n$ converges strongly to $\tilde{A}^\dual$.
\end{proposition}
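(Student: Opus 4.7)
The plan is to show that the absolute-convergence hypothesis is in fact strong enough to recover uniform convergence of the operator-valued Fourier series of $\mathbf{A}$, after which Proposition \ref{prop:NormConvergenceJA} applies and delivers norm convergence, which is a stronger conclusion than the strong convergence claimed.

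First I would verify uniform convergence of $D_n*\mathbf{A}$ to $\mathbf{A}$. Given $\sum_{n\in\integers}\|A_n\|_\infty<\infty$, the partial sums of $\sum_n A_n e^{in\Omega t}$ are Cauchy in $\mathscr{L}^\infty(\mathbb{T},B(\matrd_1))$ by the elementary tail estimate
\begin{equation*}
\esssup_{t\in\mathbb{T}}\Big\| \sum_{m<|k|\leq n} A_k e^{ik\Omega t}\Big\|_\infty \leq \sum_{m<|k|\leq n}\|A_k\|_\infty,
\end{equation*}
so they converge uniformly to some continuous $\mathbf{B}:\mathbb{T}\to B(\matrd_1)$. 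Since $B(\matrd_1)\simeq\complexes^{d^2\times d^2}$ is finite-dimensional and hence UMD, Theorem \ref{FourierSeriesConvLp} applied in $\mathscr{L}^2(\mathbb{T},B(\matrd_1))$ gives $D_n*\mathbf{A}\to\mathbf{A}$ in $\mathscr{L}^2$-norm and pointwise a.e., forcing $\mathbf{B}=\mathbf{A}$ almost everywhere. Thus the Fourier series $\sum_n A_n\odot e_n$ converges uniformly to $\mathbf{A}$.

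With the uniform-convergence hypothesis now in hand, Proposition \ref{prop:NormConvergenceJA} immediately yields $\sum_n A_n\otimes F_n\to\tilde{A}$ in the operator norm on $B(\matrd_1\,\bar\otimes\, L^2(\mathbb{T}))$, which a fortiori implies strong convergence. The dual assertion follows in parallel: since taking the Banach-space dual preserves the operator norm on $\matrd_1$, one has $\|A_n^\dual\|_\infty=\|A_n\|_\infty$, so the Fourier series of $\mathbf{A}^\dual$ is itself absolutely convergent and the same argument applied to $\mathbf{A}^\dual$ produces the strong (indeed norm) convergence of $\sum_n A_n^\dual\otimes F_n$ to $\tilde{A}^\dual$.

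No substantial obstacle is anticipated; the only point of care is identifying the uniform limit of the partial sums with $\mathbf{A}$ rather than with a merely continuous representative, which the cross-reference to Theorem \ref{FourierSeriesConvLp} takes care of. If one prefers to avoid passing through norm convergence, a direct estimate is equally quick: for fixed $\tilde{f}$ with $f=\iota(\tilde{f})$, bound $\tnorm{[(D_n*\mathbf{A})(t)-A_t](f(t))}\leq\big(\sum_{|k|>n}\|A_k\|_\infty\big)\tnorm{f(t)}$, square, integrate over $\mathbb{T}$, and use $\sum_{|k|>n}\|A_k\|_\infty\to 0$ to conclude convergence of $\sum_n A_n\otimes F_n(\tilde f)$ to $\tilde A(\tilde f)$ in $\mathscr{L}^2_1$-norm.
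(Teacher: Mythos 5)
Your argument is correct, but it takes a genuinely different route from the paper's and in fact proves more. The paper verifies strong convergence directly via the standard two-part criterion from \cite{Eidelman2004}: pointwise convergence of the partial sums on the dense subspace $\iota^{-1}(\mathscr{C}^{0}(\mathbb{T},\matrd_1))$, which is exactly what Proposition \ref{prop:SreprAconvergence} supplies, together with uniform boundedness of the partial-sum operators, obtained from the cross-norm estimate $\big\|\sum_{|k|\leqslant n}A_k\otimes F_k\big\|_{\infty}\leqslant\sum_{k}\|A_k\|_{\infty}<\infty$. You instead upgrade the hypothesis: the Weierstrass $M$-test makes the partial sums uniformly Cauchy, and the identification of the uniform limit with $\mathbf{A}$ almost everywhere (via uniqueness of Fourier coefficients, or the a.e.\ convergence in Theorem \ref{FourierSeriesConvLp}, both available because $\mathbf{A}$ is bounded by the standing assumption of Section \ref{sect:Operatorvaluedfunction} and $B(\matrd_1)$ is finite-dimensional, hence UMD) is all that the paper's $\mathscr{L}^\infty$-formulation of uniform convergence requires; Proposition \ref{prop:NormConvergenceJA} then delivers convergence in operator norm, of which strong convergence is a special case. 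Your duality step is also sound, since the Banach adjoint is linear and isometric, so $(\mathbf{A}^\dual)_n=A_n^\dual$ and $\sum_n\|A_n^\dual\|_{\infty}=\sum_n\|A_n\|_{\infty}<\infty$. What your route buys is the sharper conclusion that absolute convergence of the coefficients already forces norm convergence, so the proposition becomes essentially a corollary of Proposition \ref{prop:NormConvergenceJA}; what the paper's route buys is independence from the identification-of-the-limit step and from Theorem \ref{FourierSeriesConvLp}, needing only the $\ell^1$ bound on the coefficients and pointwise convergence on continuous functions.
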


\begin{proof}
Sequence $(T_n)\subset \mathcal{X}$ of bounded operators on Banach space $\mathcal{X}$ converges in strong operator topology to some $T \in B(\mathcal{X})$, if and only if \cite{Eidelman2004}
\begin{enumerate}
	\item for all $x \in M$, where $M$ is dense in $\mathcal{X}$, we have $(T_n - T)(x) \to 0$, and
	\item $(\| T_n \|_\infty)\in l^\infty$.
\end{enumerate}
Take
\begin{equation}
	T_n = \sum_{|k|\leqslant n} A_k \otimes F_k .
\end{equation}
As $\iota^{-1}(\mathscr{C}^{0}(\mathbb{T},\matrd_1))$ is a dense subspace of $\matrd_1 \, \bar{\otimes} \, L^2 (\mathbb{T})$, first condition is automatically fulfilled by Proposition \ref{prop:SreprAconvergence}. Operator norm in $B(\matrd_1 \, \bar{\otimes} \, L^2 (\mathbb{T}))$ is a cross-norm and $\|F_n\|_\infty =1$, so we have
\begin{equation}
	\sup_{n\in\integers} \| T_n \|_\infty \leqslant \sup_{n\in\integers} \sum_{|k|\leqslant n} \| A_k \|_\infty = \sum_{n\in\integers} \| A_n \|_\infty
\end{equation}
which is finite from absolute convergence of Fourier series. Thus, $(\| T_n \|)$ is bounded and strong convergence is shown (proof for the adjoint series is analogous).
\end{proof}

\begin{lemma}\label{lemma:AtDiffEndomorphism}
If function $\mathbf{A} : \mathbb{T} \to B(\matrd_1)$ is continuous and of bounded derivative, then $A$ is an endomorphism over $\mathscr{C}^{1}(\mathbb{T},\matrd_1)$.
\end{lemma}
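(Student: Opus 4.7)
\medskip

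The plan is to fix an arbitrary $f \in \mathscr{C}^1(\mathbb{T},\matrd_1)$ and show that the function $g := A(f)$ defined by $g(t) = A_t(f(t))$ belongs to $\mathscr{C}^1(\mathbb{T},\matrd_1)$; since $A$ is already known to be a linear map on $\mathscr{L}^2(\mathbb{T},\matrd_1)$ from Section \ref{sect:Operatorvaluedfunction}, this is all that is needed. The key tool throughout is the submultiplicativity inequality \eqref{eq:InequalityOfMatrixNorms} together with compactness of $\mathbb{T}$, which guarantees that $C := \sup_{t\in\mathbb{T}}\|A_t\|_\infty < \infty$.

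First I would verify continuity of $g$ by writing
\begin{equation*}
g(t+h)-g(t) = A_{t+h}(f(t+h)-f(t)) + (A_{t+h}-A_t)(f(t))
\end{equation*}
and estimating
\begin{equation*}
\tnorm{g(t+h)-g(t)} \leqslant C\,\tnorm{f(t+h)-f(t)} + \|A_{t+h}-A_t\|_\infty\,\tnorm{f(t)}.
\end{equation*}
Continuity of $\mathbf{A}$ in the operator norm and continuity of $f$ in the trace norm then force the right hand side to vanish as $h\to 0$.

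Next, for differentiability, I would decompose the difference quotient as
\begin{equation*}
\frac{g(t+h)-g(t)}{h} = \Bigl(\frac{A_{t+h}-A_t}{h}\Bigr)(f(t)) + A_{t+h}\Bigl(\frac{f(t+h)-f(t)}{h}\Bigr).
\end{equation*}
By the assumed differentiability of $\mathbf{A}$, the quotient $h^{-1}(A_{t+h}-A_t)$ converges to $\dot{A}_t$ in the operator norm, so evaluated at the fixed matrix $f(t)$ it converges to $\dot{A}_t(f(t))$. For the second summand, $\|A_{t+h}\|_\infty$ stays uniformly bounded near $t$ by continuity of $\mathbf{A}$, and $h^{-1}(f(t+h)-f(t)) \to \dot{f}(t)$ in trace norm; an $\varepsilon/2$ split via
\begin{equation*}
A_{t+h}\Bigl(\tfrac{f(t+h)-f(t)}{h}\Bigr) - A_t(\dot{f}(t)) = A_{t+h}\Bigl(\tfrac{f(t+h)-f(t)}{h}-\dot{f}(t)\Bigr) + (A_{t+h}-A_t)(\dot{f}(t))
\end{equation*}
then yields convergence to $A_t(\dot{f}(t))$. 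Hence
\begin{equation*}
\dot{g}(t) = \dot{A}_t(f(t)) + A_t(\dot{f}(t)).
\end{equation*}

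Finally, continuity of $\dot{g}$ is obtained by reapplying Step 1 to each of the two summands: the second term is continuous from continuity of $\mathbf{A}$ and $\dot{f}\in\mathscr{C}^0$, while the first term is continuous because $\dot{\mathbf{A}}$ is continuous (and bounded, as stated in the hypothesis) and $f\in\mathscr{C}^0$. Summing, $\dot{g} \in \mathscr{C}^0(\mathbb{T},\matrd_1)$ and therefore $g = A(f) \in \mathscr{C}^1(\mathbb{T},\matrd_1)$.

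The main obstacle, as I see it, lies in the bookkeeping of the limit passage in Step 2: one must avoid the temptation of a pointwise Leibniz rule and instead argue via operator-norm convergence, which is where the boundedness $C<\infty$ and the uniform bound on $\|A_{t+h}\|_\infty$ in a neighborhood of $t$ become essential. The hypothesis of a bounded derivative then feeds directly into continuity of $\dot{g}$ in Step 3, closing the argument.
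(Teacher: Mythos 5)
Your proof is correct and follows essentially the same route as the paper: both add and subtract a cross term in the difference quotient, bound the pieces via the operator norm of $A_t$ and the submultiplicativity of $\tnorm{\cdot}$, and arrive at the Leibniz formula $\dot{g}(t)=\dot{A}_t(f(t))+A_t(\dot{f}(t))$; your write-up is in fact more explicit about verifying continuity of $g$ and of $\dot{g}$. The one caveat --- shared with the paper's own argument, which simply posits a continuous $\xi$ without checking it --- is that your last step invokes continuity of $\dot{\mathbf{A}}$, which the hypothesis does not literally provide (it only gives boundedness of the derivative), so continuity of the summand $\dot{A}_t(f(t))$ is asserted rather than derived in both treatments.
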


\begin{proof}
As linearity is obvious, we need to show $A (\mathscr{C}^1) \subset \mathscr{C}^1$. Take any function $f\in \mathscr{C}^1 (\mathbb{T},\matrd_1)$; it suffices to find such continuous $\xi$, that for any $t\in\mathbb{T}$
\begin{equation}\label{eq:AtDerivative}
	\lim_{h\to 0} \tnorm{h^{-1} \left[ A_{t+h}(f(t+h)) - A_t(f(t)) \right] - \xi(t)} = 0.
\end{equation}
In fact one can easily show, that
\begin{equation}\label{eq:XiFunDefinition}
	\xi(t) = \dot{A}_{t}(f(t)) + A_t (\dot{f}(t))
\end{equation}
by adding and subtracting $\frac{1}{h} A_t (f(t+h))$ under the norm in \eqref{eq:AtDerivative}, applying \eqref{eq:XiFunDefinition} and reordering terms; we then obtain the upper bound on l.h.s. of \eqref{eq:AtDerivative},
\begin{align}\label{eq:AtDiffEndomorphismUpperBound}
	&\tnorm{h^{-1}[A_{t+h}-A_t] (f(t) + O(h))} \\
	+ &\tnorm{A_t \left( h^{-1}[f(t+h)-f(t)] - f^{\prime}(t) \right)} \nonumber \\
	\leqslant &\big\| h^{-1}[A_{t+h}-A_t] \big\|_{\infty} \tnorm{O(h)} + \big\| h^{-1}[A_{t+h}-A_t] - \dot{A}_t \big\|_{\infty} \tnorm{f(t)} \nonumber
\end{align}
where $O(h) = f(t+h) - f(t)$. Due to continuity of $f$ and boundedness of $A^{\prime}$, \eqref{eq:AtDiffEndomorphismUpperBound} tends to 0 as $h\to 0$.
\end{proof}

\begin{proposition}\label{prop:NormConvergenceProductofFunctions}
Let $\mathbf{A},\mathbf{B} : \mathbb{T} \to B(\matrd_1)$ be bounded functions of uniformly convergent Fourier series $\sum_{n\in\integers}A_n \odot e_n$, $\sum_{n\in\integers}B_n \odot e_n$, respectively, and let one of the series be additionally absolutely convergent. Then, $\tilde{A}\tilde{B}$ can be expressed as norm-convergent series
\begin{equation}
	\tilde{A}\tilde{B} = \sum_{n,m\in\integers} A_{n}B_{m} \otimes F_{n}F_{m}.
\end{equation}
\end{proposition}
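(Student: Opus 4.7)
The plan is to obtain the claimed identity by passing to the limit in truncated Fourier expansions, using Proposition \ref{prop:NormConvergenceJA} together with joint continuity of composition in the Banach algebra of bounded operators on $\matrd_1 \,\bar{\otimes}\, L^2(\mathbb{T})$.

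First I would introduce the truncated sums
\begin{equation*}
\tilde{A}_N = \sum_{|n|\leqslant N} A_n \otimes F_n, \qquad \tilde{B}_M = \sum_{|m|\leqslant M} B_m \otimes F_m.
\end{equation*}
The hypothesis of uniform Fourier convergence of $\mathbf{A}$ and $\mathbf{B}$, combined with Proposition \ref{prop:NormConvergenceJA}, yields $\tilde{A}_N \to \tilde{A}$ and $\tilde{B}_M \to \tilde{B}$ in operator norm on $\matrd_1 \,\bar{\otimes}\, L^2(\mathbb{T})$. The rule $(a\otimes U)(b\otimes V) = ab \otimes UV$ for elementary tensors, extended by bilinearity, then gives for every finite $N,M$
\begin{equation*}
\tilde{A}_N \tilde{B}_M = \sum_{|n|\leqslant N,\, |m|\leqslant M} A_n B_m \otimes F_n F_m,
\end{equation*}
which realises the rectangular partial sums of the double series on the right-hand side of the claim.

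Next I would invoke joint continuity of multiplication in the operator norm via the telescoping identity
\begin{equation*}
\tilde{A}\tilde{B} - \tilde{A}_N \tilde{B}_M = (\tilde{A} - \tilde{A}_N)\,\tilde{B} + \tilde{A}_N\,(\tilde{B} - \tilde{B}_M),
\end{equation*}
yielding the bound
\begin{equation*}
\|\tilde{A}\tilde{B} - \tilde{A}_N \tilde{B}_M\|_\infty \leqslant \|\tilde{A} - \tilde{A}_N\|_\infty\,\|\tilde{B}\|_\infty + \bigl(\sup_{N}\|\tilde{A}_N\|_\infty\bigr)\,\|\tilde{B} - \tilde{B}_M\|_\infty .
\end{equation*}
Since $\tilde{A}_N \to \tilde{A}$ in norm, the sequence $(\|\tilde{A}_N\|_\infty)$ is bounded, so letting $N,M\to\infty$ independently both terms vanish and rectangular convergence of the double series to $\tilde{A}\tilde{B}$ is established.

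The role of the additional absolute-convergence assumption on one of the series (say $\sum_{n\in\integers}\|A_n\|_\infty < \infty$) is to lift this rectangular convergence to unconditional/iterated convergence of the double sum: the estimate $\|A_n B_m \otimes F_n F_m\|_\infty \leqslant \|A_n\|_\infty\,\|B_m\|_\infty$ together with $\sup_m \|B_m\|_\infty \leqslant \|\mathbf{B}\|_{\mathscr{L}^\infty}$ places the relevant majorising family in $\ell^1(\integers)\otimes \ell^\infty(\integers)$, whence a Fubini/Mertens-type argument justifies rearrangement and the grouping $F_n F_m = F_{n+m}$ into a single Fourier-shift series. I expect the main obstacle to lie precisely here: making precise which summation convention is being asserted and exhibiting the correct uniform bound for the tails so that reordering across the two indices is legitimate; the pure rectangular statement is already furnished by the joint-continuity step.
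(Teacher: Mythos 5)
Your argument is correct and follows essentially the same route as the paper: both reduce the claim to convergence of the rectangular partial sums and control the error by the cross-term decomposition $\tilde{A}\tilde{B}-\tilde{A}_N\tilde{B}_M=(\tilde{A}-\tilde{A}_N)\tilde{B}+\tilde{A}_N(\tilde{B}-\tilde{B}_M)$, with the norm convergence of the truncations supplied by the uniform convergence of the two Fourier series (the paper performs the identical split, only written pointwise in $t$ with the $\mathscr{L}^\infty$ bound rather than at the level of the lifted operators, and invoking Proposition \ref{prop:NormConvergenceJA} implicitly rather than explicitly). The one substantive difference is where the absolute-convergence hypothesis enters. The paper uses it to bound the middle factor by $\sum_{|k|\leqslant n}\|A_k\|_\infty$ uniformly in $n$; you instead observe that $\sup_N\|\tilde{A}_N\|_\infty<\infty$ simply because a norm-convergent sequence is bounded. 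Your version is the sharper observation: rectangular convergence of the double series already follows from uniform convergence of both Fourier series alone, so the absolute-convergence hypothesis is not actually needed for the proposition as stated. Your closing remark correctly locates where that hypothesis does earn its keep: it is what legitimises the Mertens-type regrouping $F_nF_m=F_{n+m}$ and the resummation along diagonals, which the paper performs downstream (the substitution $n+m\to m$ in Lemma \ref{lemma:PXPadjConvergence} and the inner sums in \eqref{eq:LmthFourier}), not in the present proof. So the only caveat is that your final paragraph describes work that belongs to a later lemma rather than to this one; the proof of the stated rectangular convergence is already complete after your joint-continuity step.
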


\begin{proof}
Without loss of generality assume $\sum_{n\in\integers}A_{n} \odot e_n$ converges absolutely. Then, again employing isometry properties of $\iota$, we have
\begin{align}\label{eq:NormConvergenceProductofFunctionsIneq}
	&\Big\| \sum_{|k|\leqslant n}\sum_{|l|\leqslant m} A_{k}B_{l} \otimes F_{k}F_{l} - \tilde{A}\tilde{B} \Big\|_\infty \\
	&\leqslant \sup_{\|\tilde{f}\|\leqslant 1} \Bigg( \frac{1}{T}\int_{\mathbb{T}} \Big\| \sum_{|k|\leqslant n}\sum_{|l|\leqslant m} A_{k}B_{l} e^{i(k+l)\Omega t} - A_t B_t \Big\|^{2}_{\infty} \tnorm{f(t)}^{2} \, dt \Bigg)^{1/2} \nonumber \\
	&\leqslant \sup_{t\in\mathbb{T}} \Big\| \sum_{|k|\leqslant n}\sum_{|l|\leqslant m} A_{k}B_{l} e^{i(k+l)\Omega t} - A_t B_t \Big\|_{\infty}. \nonumber
\end{align}
Adding and subtracting $\sum_{|k|\leqslant n} A_{k} e^{ik\Omega t} B_t$ under the norm and employing triangle inequality one can find the upper bound of \eqref{eq:NormConvergenceProductofFunctionsIneq} to be
\begin{align}
	\sum_{|k|\leqslant n} \| A_{k} \|_{\infty} \sup_{t\in\mathbb{T}} \| &D_m * \mathbf{B} (t) - B_t \|_{\infty} + \sup_{t\in\mathbb{T}} \| B_t \|_{\infty} \| D_n * \mathbf{A}(t) - A_t\|_{\infty}.
\end{align}
Since $\sup_{n\in\integers}\sum_{|k|\leqslant n} \| A_{k} \|_{\infty}$ is finite (because of assumed absolute convergence), the above upper bound tends to 0 as $n,m\to\infty$, since both $\sum_{n\in\integers}A_{n} \odot e_n$ and $\sum_{n\in\integers}B_{n} \odot e_n$ were assumed to converge uniformly.
\end{proof}

\subsubsection{Explicit expressions for generalized Lindbladian and generated dynamics}

\begin{proposition}\label{prop:PFourierConvergence}
Series $\sum_{n\in\integers} P_n \otimes F_n$ converges to $\tilde{P}$ in norm.
\end{proposition}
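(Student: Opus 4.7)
The plan is to recognize that this statement is essentially a direct corollary of two results already established: Proposition \ref{prop:NormConvergenceJA}, which guarantees norm convergence of the Fourier-like series $\sum_n A_n \otimes F_n$ to the Fourier lifting $\tilde{A}$ whenever the Fourier series $\sum_n A_n \odot e_n$ converges uniformly to the underlying operator-valued function $\mathbf{A}$, and Proposition \ref{prop:PfunctionConvergence}\,(\ref{point:PfunctionConvergencePFourier}), which already states that the Fourier series $\sum_n P_n \odot e_n$ converges absolutely (and therefore uniformly) to $P$.

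Concretely, the steps I would carry out are: first, identify the operator $P$ on $\mathscr{L}^2(\mathbb{T},\matrd_1)$ with the operator-valued function $\mathbf{P} : \mathbb{T} \to B(\matrd_1)$, $\mathbf{P}(t) = P_t$, so that $P$ fits the framework of Section \ref{sect:Operatorvaluedfunction}; the Fourier coefficients of $\mathbf{P}$ are exactly the matrix-conjugation operators $P_n$ associated to the Fourier coefficients $p_n$ of the periodic unitary $p$. Second, invoke Proposition \ref{prop:PfunctionConvergence}\,(\ref{point:PfunctionConvergencePFourier}) to conclude that $\sum_n P_n \odot e_n$ converges to $P$ in $\mathscr{L}^\infty$-norm. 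Third, apply Proposition \ref{prop:NormConvergenceJA} with $A = P$, $A_n = P_n$, which immediately yields that $\sum_{n\in\integers} P_n \otimes F_n$ converges in operator norm to $\tilde{P}$ in $B(\matrd_1 \,\bar{\otimes}\, L^2(\mathbb{T}))$.

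There is no real obstacle here — the statement is a formal consequence of previously proved facts. The only thing that would warrant a brief check is to ensure that the $P_n$ appearing as Fourier coefficients of $\mathbf{P}$ (i.e., $P_n = \frac{1}{T}\int_{\mathbb{T}} P_t e^{-in\Omega t} \, dt$, an operator in $B(\matrd_1)$) is consistent with the notation used in the statement, and that $P$ indeed satisfies the hypotheses of Proposition \ref{prop:NormConvergenceJA} (boundedness of $\mathbf{P}$, which follows from unitarity of $p_t$ giving $\|P_t\|_\infty = 1$, already recorded in the proof of Proposition \ref{prop:PfunctionConvergence}\,(\ref{point:PpropertiesCPcontractice})). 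Hence the proof reduces to a one-line citation of the two prior results.
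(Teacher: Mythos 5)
Your proposal matches the paper's proof exactly: the paper likewise deduces norm convergence of $\sum_{n\in\integers} P_n \otimes F_n$ by combining Proposition \ref{prop:NormConvergenceJA} with the uniform convergence of the Fourier series of $P$ established in Proposition \ref{prop:PfunctionConvergence}. Your additional check that $\mathbf{P}$ is bounded (via unitarity of $p_t$) is a harmless and correct verification of the standing hypothesis.
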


\begin{proof}
Norm convergence is assured by Proposition \ref{prop:NormConvergenceJA} since $P$ and $P^\prime$ are continuous functions (Proposition \ref{prop:PfunctionConvergence}).
\end{proof}

\begin{proposition}\label{prop:LequivalentForm}
The following hold:
\begin{enumerate}
	\item\label{point:LequivalentFormOne} $\tilde{L} = \sum_{n\in\integers} L_n\otimes F_n$ converging pointwise everywhere in $\iota^{-1}(\mathscr{C}^{0}(\mathbb{T},\matrd_1))$,
	\item\label{point:LequivalentFormTwo} $\tilde{\mathcal{L}}$ admits an equivalent expression
	\begin{equation}\label{eq:LPPadj}
		\tilde{\mathcal{L}} = \tilde{P} (\bar{L} \otimes I - i\Omega \, I \otimes F_z)\tilde{P}^{\dual},
	\end{equation}
	\item\label{point:LequivalentFormThree} $\tilde{L}$ is of standard form.
\end{enumerate}

\end{proposition}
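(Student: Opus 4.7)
Part (\ref{point:LequivalentFormOne}) is an immediate corollary of Proposition \ref{prop:SreprAconvergence} applied with $\mathbf{A}(t) = L_t$: Lemma \ref{prop:LtSquareInt} supplies the required boundedness, so $\sum_{n\in\integers} L_n\otimes F_n$ converges pointwise on $\iota^{-1}(\mathscr{C}^{0}(\mathbb{T},\matrd_1))$ to $\tilde{L}$.

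For part (\ref{point:LequivalentFormTwo}) I would start from Proposition \ref{prop:ExpTauLalgebraicForm}, the factorisation $e^{\tau\tilde{\mathcal{L}}} = \tilde{P}(e^{\tau\bar{L}}\otimes e^{-i\tau\Omega F_z})\tilde{P}^{\dual}$, and take the strong derivative at $\tau = 0$ on the dense domain $\iota^{-1}(\mathscr{C}^1(\mathbb{T},\matrd_1))$. Since $\tilde{P}$ and $\tilde{P}^{\dual}$ are bounded and $\tau$-independent (Proposition \ref{prop:PfunctionConvergence}), the derivative commutes through them; because $\bar{L}\otimes I$ and $I\otimes F_z$ commute as operators on $\matrd_1\,\bar{\otimes}\,L^2(\mathbb{T})$, the central factor contributes its two generators additively, producing $\bar{L}\otimes I - i\Omega\,I\otimes F_z$. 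Theorem \ref{thm:WtauCPTPcontraction} identifies the generator on the left as $\tilde{\mathcal{L}}$, which gives \eqref{eq:LPPadj}. As a sanity check, transporting \eqref{eq:LPPadj} back via $\iota$ amounts to showing $L_t = P_t\bar{L}P_t^{-1}+\dot{P}_t P_t^{-1}$, which one obtains by differentiating $\Lambda_t = P_t e^{t\bar{L}}$ and matching with $\dot{\Lambda}_t = L_t\Lambda_t$.

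For part (\ref{point:LequivalentFormThree}) the cleanest route is to Fourier-lift the pointwise standard form of $L_t$ directly. From \eqref{eq:LtFloquet}--\eqref{eq:LindInterPicture}, diagonalising each positive-definite matrix $\gamma_{kk'}(\omega+q\Omega)$ rewrites
\begin{equation*}
L_t(\rho) = -i[H_t,\rho] + \sum_l\bigl(W_{l,t}\,\rho\,W_{l,t}^\hadj - \tfrac{1}{2}\{W_{l,t}^\hadj W_{l,t},\rho\}\bigr),
\end{equation*}
with $H_t = H_t^\hadj$ and $W_{l,t}$ bounded matrix-valued functions of $t$ obtained as linear combinations of $p_t S_{kq\omega}p_t^\hadj$ with non-negative weights from the diagonalisation. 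Since $\iota$ intertwines pointwise matrix multiplication with algebra multiplication on $\matrd_1\,\bar{\otimes}\,L^2(\mathbb{T})$ and pointwise Hermitian conjugation with the involution $\adj$, applying $\iota^{-1}$ to both sides yields
\begin{equation*}
\tilde{L}(\tilde{\rho}) = -i[\tilde{H},\tilde{\rho}] + \sum_l\bigl(\tilde{W}_l\,\tilde{\rho}\,\tilde{W}_l^\adj - \tfrac{1}{2}\{\tilde{W}_l^\adj \tilde{W}_l,\tilde{\rho}\}\bigr),
\end{equation*}
with $\tilde{H} = \iota^{-1}(H)$ satisfying $\tilde{H}^\adj = \tilde{H}$ and $\tilde{W}_l = \iota^{-1}(W_l)$, which is precisely the standard form in $\matrd_1\,\bar{\otimes}\,L^2(\mathbb{T})$.

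The main obstacle is the bookkeeping for part (\ref{point:LequivalentFormThree}): one has to verify carefully that $\iota$ really does transport the pointwise multiplicative and involutive structure of $\mathscr{L}^{2}(\mathbb{T},\matrd_1)$ into the algebra operations and the involution $\adj$ on $\matrd_1\,\bar{\otimes}\,L^2(\mathbb{T})$ introduced earlier, and that every $W_{l}$ is a bounded matrix-valued function of $t$ (so that it genuinely represents an element of $\matrd_1\,\bar{\otimes}\,L^2(\mathbb{T})$ under Fourier lifting). Once that is in place, non-negativity of the dissipator weights and self-adjointness of the generalised Hamiltonian are inherited immediately from the pointwise case.
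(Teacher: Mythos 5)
Your proposal is correct, but for parts (\ref{point:LequivalentFormTwo}) and (\ref{point:LequivalentFormThree}) it takes a genuinely different route from the paper. For (\ref{point:LequivalentFormTwo}) the paper computes $\tilde{P}(\bar{L}\otimes I)\tilde{P}^{\dual}-\tilde{P}(i\Omega\, I\otimes F_z)\tilde{P}^{\dual}$ head-on via Fourier expansions, invoking the appendix Lemmas \ref{lemma:SeriesDerivativePadj}--\ref{lemma:PQadjAntiselfadjoint} together with the identity $L_t=\dot{P}_tP_t^{\dual}+P_t\bar{L}P_t^{\dual}$ to resum everything into $\sum_n L_n\otimes F_n - i\Omega\, I\otimes F_z$; your differentiation of the factorization in Proposition \ref{prop:ExpTauLalgebraicForm} at $\tau=0$ is shorter and avoids all of that bookkeeping, but you should make explicit the one point hidden in ``the derivative commutes through them'': since $\tilde{P}^{\dual}$ is applied \emph{first}, you need $\tilde{P}^{\dual}$ to map $\iota^{-1}(\mathscr{C}^1(\mathbb{T},\matrd_1))$ into the domain of $I\otimes F_z$, which does hold because $t\mapsto P_t^{\dual}$ is differentiable with bounded derivative (equation \eqref{eq:PdotPdotDual} and Lemma \ref{lemma:AtDiffEndomorphism}), and you also need the standard splitting argument showing that the commuting product $e^{\tau\bar{L}}\otimes e^{-i\tau\Omega F_z}$ has generator $\bar{L}\otimes I-i\Omega\, I\otimes F_z$ on that domain (the bounded factor contributes additively without shrinking the domain). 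For (\ref{point:LequivalentFormThree}) the paper instead equates \eqref{eq:Lmap} with \eqref{eq:LPPadj} and expands $\tilde{P}(\cdot)\tilde{P}^{\dual}$ using the chain-rule property of $I\otimes F_z$, which produces the explicit lifted Kraus operators $\Gamma_{k\omega q}=\tilde{p}\,(S_{k\omega q}\otimes e_0)\,\tilde{p}^{\adj}$ and doubles as a consistency check between the two expressions for $\tilde{\mathcal{L}}$; your direct lift of the pointwise standard form through $\iota^{-1}$ is legitimate and cleaner, since the paper's definitions of the product ($\tilde{f}\tilde{g}=\iota^{-1}(fg)$, with $\mathscr{L}^{\infty}\mathscr{L}^{2}\subset\mathscr{L}^{2}$) and of the involution $\adj$ are exactly the intertwining properties you flag as the ``main obstacle,'' and the boundedness of the $W_{l,t}$ follows because they are fixed linear combinations of the bounded functions $t\mapsto p_tS_{kq\omega}p_t^{\hadj}$ --- the only caveat being that this leans on the diagonalization claim of Section \ref{sect:ExamplesofperiodicLindbladians}, which the paper also takes for granted. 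Part (\ref{point:LequivalentFormOne}) coincides with the paper's argument.
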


Proof of this Proposition will involve a series of secondary lemmas, accessible in Section \ref{appsection:FourierExpressions} in the Appendix.

\begin{proof}
Ad (\ref{point:LequivalentFormOne}). As $L$ is bounded and therefore square integrable (Lemma \ref{prop:LtSquareInt}), pointwise convergence of the series is assured by Proposition \ref{prop:SreprAconvergence}.

Ad (\ref{point:LequivalentFormTwo}). Let $Q$ denote the Fourier lifting of $\dot{P}$; see Lemma \ref{lemma:ConvergenceDerivativePtPtAdj} for details. Employing Lemmas \ref{lemma:SeriesDerivativePadj} to \ref{lemma:PQadjAntiselfadjoint} we have
\begin{align}\label{eq:LtbyPt}
	\tilde{\mathcal{L}} &= \tilde{P}(\bar{L} \otimes I)\tilde{P}^{\dual} - \tilde{P}(i\Omega \, I \otimes F_z)\tilde{P}^{\dual} \\
	&= \sum_{n,m\in\integers} P_{n} \bar{L} P_{m-n}^{\dual} \otimes F_{m} - \tilde{P}\Big(\tilde{P}^{\dual}(i\Omega \, I \otimes F_z) + Q^\dual \tilde{P}\Big)\tilde{P}^\dual \nonumber \\
	&= \sum_{n,m\in\integers} P_{n} \bar{L} P_{m-n}^{\dual} \otimes F_{m} - i\Omega \, I \otimes F_z - \tilde{P}Q^{\dual}. \nonumber
\end{align}
From \eqref{eq:MMELambda} one has $L_t = \dot{\Lambda}_t \Lambda_{t}^{-1}$, which, together with \eqref{eq:LambdaGeneral}, yield
\begin{equation}
	L_t = \frac{d}{dt}(P_t e^{t\bar{L}}) e^{-t\bar{L}} P_{t}^{-1} = \dot{P}_{t} P_{t}^{\dual} + P_t \bar{L} P_{t}^{\dual}.
\end{equation}
The $m$-th Fourier component of $L_t$, after employing Proposition \ref{prop:PfunctionConvergence}, turns out to be given by series
\begin{equation}\label{eq:LmthFourier}
	L_{m} = \sum_{n\in\integers} P_{n} (\bar{L} + in\Omega) P_{m-n}^{\dual}
\end{equation}
converging in norm due to Lemma \ref{lemma:PXPadjConvergence}. Substituting \eqref{eq:LmthFourier} to \eqref{eq:LtbyPt} we obtain, due to Lemma \ref{lemma:PQadjAntiselfadjoint},
\begin{align}
	\tilde{\mathcal{L}} &= \sum_{m\in\integers} \Big( L_{m} - i\Omega\sum_{n\in\integers} n P_{n} P_{m-n}^{\dual}\Big)\otimes F_m - i\Omega \, I \otimes F_z - \tilde{P}Q^{\dual} \\
	&= \sum_{n\in\integers} L_{n} \otimes F_n - i\Omega \, I \otimes F_z - (Q\tilde{P}^{\dual}+\tilde{P}Q^{\dual}) \nonumber \\
	&= \sum_{n\in\integers} L_{n} \otimes F_n - i\Omega \, I \otimes F_z .\nonumber
\end{align}
Ad (\ref{point:LequivalentFormThree}). Computations are quite straightforward, however lengthy; therefore we will only sketch this part of a proof. Let $Y \in \matrd$ and denote $\delta_{Y} = \ad{Y}$ and $\epsilon_{Y} = \acomm{Y}{\cdot\,}$. By simple algebra, $\delta_Y$ and $\epsilon_Y$ can be lifted to bounded maps over $\matrd_1 \bar{\otimes} L^2(\mathbb{T})$ in such a way that
\begin{equation}
	\delta_Y \otimes I = \ad{Y \otimes e_0}, \qquad \epsilon_{Y} \otimes I = \acomm{Y\otimes e_0}{\cdot\,}.
\end{equation}
Therefore, as $\bar{L} = -i \ad{\bar{H}} + K$ it is easy to see that
\begin{equation}
	(\ad{\bar{H}} \otimes I) (\tilde{f}) = \sum_{n\in\integers} \comm{\bar{H}}{f_n} \otimes e_n = \comm{\bar{H} \otimes e_0}{\tilde{f}}
\end{equation}
leading to
\begin{equation}\label{eq:LequivalentFormPA}
	-i\tilde{P} (\ad{\bar{H}} \otimes I) \tilde{P}^{\dual} = -i \ad{\tilde{p} \, \bar{H} \otimes e_0 \, \tilde{p}^{\adj}}
\end{equation}
which comes from unitarity of $\tilde{p}$. Likewise, employing \eqref{eq:LindInterPicture} we obtain
\begin{equation}\label{eq:LequivalentFormPL}
	\tilde{P} (K\otimes I)\tilde{P}^{\dual} (\tilde{f}) = \sum_{k}\sum_{\{\omega\}}\sum_{k\in\integers} \Big( \Gamma_{k\omega q} \, \tilde{f} \, \Gamma_{k\omega q}^{\adj} - \frac{1}{2}\acomm{\Gamma_{k\omega q}^{\adj}\Gamma_{k\omega q}}{\tilde{f}} \Big),
\end{equation}
where $\Gamma_{k\omega q} = \tilde{p} \, S_{k\omega q} \otimes e_0 \, \tilde{p}^{\dual}$, as can be easily checked by direct computation. Utilizing the chain rule property of $i\Omega \, I \otimes F_z$ and unitarity of $\tilde{p}$ we also have
\begin{align}\label{eq:LequivalentFormPJpPAdj}
	i\Omega \, &\tilde{P}(I\otimes F_z)\tilde{P}^{\adj}(\tilde{f}) = i \Omega \, \tilde{p} [(I\otimes F_z)(\tilde{p}^{\adj} \, \tilde{f} \, \tilde{p})] \tilde{p}^{\adj} \\
	&= i\Omega \, \tilde{p} \, I\otimes F_z(\tilde{p}^{\adj}) \, \tilde{p} \, \tilde{f} + i\Omega \, \tilde{f} \, I\otimes F_z(\tilde{p}) \, \tilde{p}^{\adj} + i\Omega \, I \otimes F_z (\tilde{f}). \nonumber
\end{align}
As $\tilde{p} \, \tilde{p}^{\adj} = I \otimes e_0$ is constant, we have $i\Omega \, I\otimes F_z (\tilde{p} \, \tilde{p}^{\adj}) = 0$ and by the chain rule,
\begin{equation}
	i\Omega \, \tilde{p} \, I\otimes F_z(\tilde{p}^{\adj}) = - i\Omega \, I \otimes F_z (\tilde{p}) \, \tilde{p}^{\adj}.
\end{equation}
We then put $i\Omega \, I \otimes F_z (\tilde{p})$, with aid of \eqref{eq:UnitaryEvolutionOperator} and Schroedinger equation, into a different form
\begin{equation}
	i\Omega \, I \otimes F_z (\tilde{p}) = -i \Big(\sum_{n\in\integers} H_n \otimes e_n \Big) \tilde{p} + i \, \tilde{p} \, \bar{H} \otimes e_0
\end{equation}
which, after substituting back to \eqref{eq:LequivalentFormPJpPAdj} yields
\begin{align}\label{eq:LequivalentFormPIFz}
	i\Omega \, \tilde{P}(I\otimes &F_z)\tilde{P}^{\dual} = i\Omega \, I\otimes F_z - \comm{i\Omega \, I\otimes F_z (\tilde{p}) \, \tilde{p}^{\adj}}{\cdot\,} \\
	&= i\Omega \, I\otimes F_z + i\ad{\sum_{n\in\integers} H_n\otimes e_n} - i\ad{\tilde{p}\,\bar{H}\otimes e_0 \, \tilde{p}^{\adj}}. \nonumber
\end{align}
Finally, equaling \eqref{eq:Lmap} and \eqref{eq:LPPadj} and utilizing \eqref{eq:LequivalentFormPA}, \eqref{eq:LequivalentFormPL} and \eqref{eq:LequivalentFormPIFz} one obtains, after some algebra,
\begin{align}
	\sum_{n\in\integers} L_{n} &\otimes F_n = \tilde{P} (\bar{L} \otimes I) \tilde{P}^{\dual} - i\Omega \, \tilde{P}(I \otimes F_z) \tilde{P}^{\dual} + i\Omega \, I\otimes F_z \\
	&= -i \ad{\sum_{n\in\integers} H_{n}\otimes e_n} + \tilde{P} (\bar{L} \otimes I) \tilde{P}^{\dual} \nonumber
\end{align}
which is of standard form by \eqref{eq:LequivalentFormPL}.
\end{proof}

\section{Summary}

In this paper we presented a formal construction of generalized space of states suited for representing a CP-divisible dynamics of finite-dimensional open quantum systems governed by periodic Lindbladian in Weak Coupling Limit regime. As was shown in previous section, the general solution of MME may be expressed via one-parameter CP-divisible contraction semigroup acting on this space, generated by a generalized, unbounded Lindbladian. We already stressed that this approach shares many similarities with, and therefore is an extension of, unitary (Hamiltonian) time-independent formalism by Howland and others. In the unitary case, one relies on generalized, self-adjoint, infinite dimensional Floquet Hamiltonian, which generates a \emph{group} of unitary evolution operators on the generalized space of states, which by natural construction is also a Hilbert space. The resulting dynamics, after ``projecting'' back on the Hilbert space of a system, remains unitary (see the references for details). The case of CP-divisible dynamics seems to be no different: Floquet Lindbladian $\mathcal{L}$ becomes an analogue of Floquet Hamiltonian and the semigroup which it generates is a contraction $C_0$-semigroup of completely positive and trace preserving maps on Bochner space, while the ``projected'' dynamical map on $\matrd_1$ remains CP-divisible, completely positive and trace preserving, as expected. Shifted Floquet quasienergies, i.e.~the point spectrum of Floquet Hamiltonian, are replaced by point spectrum of $\mathcal{L}$, however they still play similar role in the formalism.

We believe, that applicability of presented theory will parallel and hopefully exceed the applicability of the unitary time-independent approach. Construction proposed in the article seems to be a natural generalization of Howland apparatus to the case of dissipative (irreversible) dynamics, as it takes a full Lindblad-like structure of Markovian Master Equation into account. We emphasize here, that the main benefits of time-independent formalism are still present in our approach, since by introducing the generalized space of states, one similarly replaces a time-dependent problem by an algebraic, time-independent one. As the dynamics on generalized space is given by a \emph{semigroup}, it is in general more straightforward to handle both analytically and numerically, even despite the dimension of space becomes infinite. In such case one can utilize known computational methods to obtain, at least approximate, dynamics in generalized space and, after ``projecting'' back onto matrix space, also dynamics of density matrix itself. These could include various approaches, like e.g.~the \emph{effective Lindbladian theory} (being a generalization of \emph{effective Hamiltonian theory}) and related \emph{van Vleck block diagonalization} of Floquet Lindbladian (see \cite{Ernst2005,Leskes2010,Scholz2010,Ho1983,Chu2004} and references therein for examples in unitary case). These methods already proved to be useful in solving NMR-related problems. The other area of possible applications and probably the most interesting one, includes generalizations onto Markovian Master Equations defined by \emph{quasiperiodic Lindbladians}, i.e.~with many non-commensurate frequencies, where traditional Floquet theory fails. This problem seems to be of non-deniable importance from experimental point of view. We remark here that such generalization was already shown to be possible in case of unitary time-independent formalism (by extending the generalized space of states) at least for Lyapunov-Perron reducible systems (see \cite{Verdeny2016}).

\section*{Acknowledgments}
K.S.~acknowledges support by the National Science Centre, Poland (grant No. 2016/23/D/ST1/02043). R.A.~is supported by the ICTQT through the International Research Agendas Programme (IRAP) of the Foundation for Polish Science (FNP), with structural funds from the European Union (EU). Authors are thankful to anonymous Referee for constructive comments during reviewing of this manuscript, and for suggesting reference \cite{Bach2014}.

\appendix

\section{Technical supplement}

\subsection{Fourier series of matrix-valued functions on \texorpdfstring{$\mathbb{T}$}{a circle}}
\label{appsection:FourierSeries}

We will use few different matrix norms in the following lemmas (listed below). Any two matrix norms are equivalent, i.e.~for arbitrary norms $\| \cdot \|$, $\| \cdot \|^{\prime}$ on $\complexes^{r\times r}$ there always exist some constants $\alpha,\beta > 0$ such that $\alpha \| \cdot \|^{\prime} \leqslant \| \cdot \| \leqslant \beta \| \cdot \|^{\prime}$. We will be using \emph{supremum (operator) induced norm} $\| \cdot\|_{\infty}$, \emph{max norm} $\| \cdot \|_{\mathrm{max}}$, $l^1$ \emph{norm} $\|\cdot\|_{l^1}$ and \emph{Frobenius (Hilbert-Schmidt) norm} $\fnorm{\cdot}$; for matrix $A = [a_{jk}]_{j,k=1}^{r}$ they are defined as follows:
\begin{align}
	\| A \|_{\infty} = \sup_{\| \vec{w} \|_{\complexes^d} \leqslant 1} \| A\vec{w} \|_{\complexes^d}, \qquad \| A \|_{\mathrm{max}} = \max_{j,k}|a_{jk}|, \\
	\| A \|_{l^1} = \sum_{j,k=1}^{r} |a_{jk}|, \qquad \fnorm{A} = \Big( \sum_{j,k=1}^{r} |a_{jk}|^{2} \Big)^{1/2},
\end{align}
where $\| \cdot \|_{\complexes^d}$ stands for arbitrary norm in $\complexes^d$.

\begin{lemma}\label{lemma:FrechetDiffEquivMatElementsDiff}
Let $\mathbf{F}: \reals \to \complexes^{r\times r}$ be given as $\mathbf{F}(t) = [f_{jk}(t)]_{j,k=1}^{r}$, where $f_{jk} : \reals \to \complexes$ and $r \geqslant 1$. Then, $\mathbf{F}$ is differentiable in open interval $(t_1, t_2) \subset \reals$, or $\mathbf{F} \in \mathcal{C}^1((t_1,t_2),\complexes^{r\times r})$ if and only if $f_{jk} \in \mathcal{C}^1 ((t_1, t_2))$ for all pairs $(j,k)$.
\end{lemma}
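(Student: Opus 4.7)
The plan is to reduce matrix differentiability to componentwise differentiability by exploiting equivalence of matrix norms, in particular the max norm, which makes the equivalence essentially tautological.

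First, I would fix the max norm $\|\cdot\|_{\mathrm{max}}$ on $\complexes^{r\times r}$ and recall that it is equivalent to any other matrix norm, so Fréchet differentiability at $t\in(t_1,t_2)$ with respect to any matrix norm is equivalent to Fréchet differentiability with respect to $\|\cdot\|_{\mathrm{max}}$. Thus, existence of $\mathbf{F}'(t)\in\complexes^{r\times r}$ amounts to existence of a matrix $G=[g_{jk}]$ such that
\begin{equation}
\lim_{h\to 0}\Big\| h^{-1}\bigl(\mathbf{F}(t+h)-\mathbf{F}(t)\bigr)-G\Big\|_{\mathrm{max}}=\lim_{h\to 0}\max_{j,k}\Big|h^{-1}\bigl(f_{jk}(t+h)-f_{jk}(t)\bigr)-g_{jk}\Big|=0.
\end{equation}
The right-hand side converges to zero if and only if each component difference quotient converges to $g_{jk}$, i.e.\ if and only if each $f_{jk}$ is differentiable at $t$ with $f_{jk}'(t)=g_{jk}$. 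This proves the pointwise differentiability equivalence, together with the identification $\mathbf{F}'(t)=[f_{jk}'(t)]_{j,k=1}^{r}$.

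Next, to upgrade to the $\mathcal{C}^1$ statement, I would apply the same reasoning to the continuity of the derivative: the map $t\mapsto\mathbf{F}'(t)=[f_{jk}'(t)]$ is continuous on $(t_1,t_2)$ with respect to $\|\cdot\|_{\mathrm{max}}$ if and only if each entry $t\mapsto f_{jk}'(t)$ is continuous there, simply because
\begin{equation}
\|\mathbf{F}'(t)-\mathbf{F}'(s)\|_{\mathrm{max}}=\max_{j,k}|f_{jk}'(t)-f_{jk}'(s)|,
\end{equation}
and once more norm equivalence lets us pass to any other matrix norm. Combining the pointwise result with this componentwise reformulation of continuity, we conclude $\mathbf{F}\in\mathcal{C}^{1}((t_1,t_2),\complexes^{r\times r})$ iff all $f_{jk}\in\mathcal{C}^{1}((t_1,t_2))$.

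There is essentially no main obstacle here; the content of the lemma is that finite-dimensionality collapses Fréchet differentiability to entrywise differentiability, and the only nontrivial ingredient is the equivalence of norms on a finite-dimensional space, which is standard. The only place one needs to be careful is to ensure the limit in the max norm is interpreted correctly as a simultaneous limit in finitely many entries, which is immediate since the maximum of finitely many nonnegative quantities tends to zero iff each does.
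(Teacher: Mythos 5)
Your proof is correct and follows essentially the same route as the paper: both reduce Fr\'echet differentiability of $\mathbf{F}$ to entrywise differentiability via the equivalence of norms on the finite-dimensional space $\complexes^{r\times r}$, the paper by identifying $\mathbf{F}$ with a $\complexes^{r^2}$-valued function and citing the componentwise criterion, you by working directly with the max norm. Your version is merely more explicit (and also spells out the continuity-of-derivative step), but there is no substantive difference in approach.
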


\begin{proof}
It is enough to notice that vector-valued function
\begin{equation}
	t \mapsto \big( f_{11}(t), \, f_{12}(t) , \, \hdots \, , \, f_{rr}(t) \big) \in \complexes^{r^2}
\end{equation}
is differentiable iff all of its components are differentiable \cite{JanR.Magnus1999}. The result then follows from isometry $\complexes^{r^2} \simeq \complexes^{r\times r}$ and after utilizing equivalence of matrix norms over $\complexes^{r\times r}$.
\end{proof}

\begin{lemma}\label{lemma:SquareIntEquivSqIntMatElem}
Let $\mathbf{F} : U \to \complexes^{r\times r}$, $U \subset \reals$, be given as $\mathbf{F}(t) = [f_{jk}(t)]_{j,k = 1}^r$, where $f_{jk}\in L^1 (U,\nu)$. Then, $\mathbf{F} \in \mathscr{L}^\infty (U,\complexes^{r\times r})$ if and only if $f_{jk}\in L^\infty (U,\nu)$ for all pairs $(j,k)$.
\end{lemma}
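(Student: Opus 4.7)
The plan is to reduce the matrix-valued essential boundedness to a statement about finitely many scalar functions, using equivalence of matrix norms to make the choice of norm on $\complexes^{r\times r}$ immaterial. Since any two matrix norms on $\complexes^{r\times r}$ are equivalent, I may work with the max norm $\|A\|_{\mathrm{max}}=\max_{j,k}|a_{jk}|$ without loss of generality: a function is essentially bounded in one matrix norm if and only if it is essentially bounded in any other, and the corresponding $\mathscr{L}^\infty$ spaces coincide set-theoretically.

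First I would prove the ``if'' direction. Assume each $f_{jk}\in L^\infty(U,\nu)$; then there exist constants $M_{jk}<\infty$ and null sets $N_{jk}\in\Sigma$ with $|f_{jk}(t)|\le M_{jk}$ for $t\notin N_{jk}$. Setting $N=\bigcup_{j,k}N_{jk}$, which is a finite union of null sets and hence itself null, I obtain
\begin{equation}
\|\mathbf{F}(t)\|_{\mathrm{max}}=\max_{j,k}|f_{jk}(t)|\le \max_{j,k}M_{jk}\qquad\text{for all }t\in U\setminus N,
\end{equation}
so $\esssup_{t\in U}\|\mathbf{F}(t)\|_{\mathrm{max}}<\infty$ and $\mathbf{F}\in\mathscr{L}^\infty(U,\complexes^{r\times r})$.

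For the ``only if'' direction, suppose $\mathbf{F}\in\mathscr{L}^\infty(U,\complexes^{r\times r})$; by norm equivalence there exists $M<\infty$ and a null set $N$ such that $\|\mathbf{F}(t)\|_{\mathrm{max}}\le M$ for every $t\notin N$. Since $|f_{jk}(t)|\le\|\mathbf{F}(t)\|_{\mathrm{max}}$ pointwise, each $f_{jk}$ is bounded by $M$ off of $N$, hence $f_{jk}\in L^\infty(U,\nu)$ for every pair $(j,k)$.

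The argument is entirely routine; the only mild subtlety is to remember that a finite union of $\nu$-null sets is $\nu$-null, which is what allows the $r^2$ scalar essential bounds to be combined into a single matrix-valued essential bound. No deeper measure-theoretic machinery is needed, and the step analogous to the $\complexes^{r^2}\simeq\complexes^{r\times r}$ identification used in Lemma~\ref{lemma:FrechetDiffEquivMatElementsDiff} is again implicit in passing between entries and the whole matrix via the max norm.
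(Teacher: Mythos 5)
Your proof is correct and follows essentially the same route as the paper's: both reduce the matrix statement to entrywise bounds via equivalence of matrix norms (the paper happens to use the $l^1$ norm for one direction and the max norm for the other, whereas you use the max norm throughout, a purely cosmetic difference). Your explicit remark that a finite union of null sets is null is a point the paper leaves implicit, and including it is a mild improvement in rigor.
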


\begin{proof}
First, assume $f_{jk} \in L^\infty (U,\nu)$, i.e.~$|f_{jk}(t)|\leqslant \| f_{jk} \|_{L^\infty} < \infty$ for a.e. $t\in U$. From this and from equivalence $\alpha\| \cdot \|_\infty \leqslant \| \cdot \|_{l^1}$ we have
\begin{equation}
	\alpha\| \mathbf{F}(t) \|_{\infty} \leqslant \| \mathbf{F}(t) \|_{l^1} = \sum_{j,k=1}^{r} |f_{jk}(t) | \stackrel{\mathrm{a.e.}}{\leqslant} r^2 \max_{j,k}{\|f_{jk}\|_{L^\infty}} = C
\end{equation}
and so $\| \mathbf{F}(t) \|_{\infty} \leqslant \alpha^{-1}C$ for a.e. $t\in U$, i.e.~$\mathbf{F} \in \mathscr{L}^{\infty}(U,\complexes^{r\times r})$. On the contrary, assume otherwise, i.e.~$\| \mathbf{F}(t) \|_{\infty} \leqslant \| \mathbf{F} \|_{\mathscr{L}^\infty}$ for a.e. $t\in U$; this and equivalence $\beta\| \cdot \|_{\mathrm{max}} \leqslant \| \cdot \|_\infty$ yield
\begin{equation}
	\beta\|\mathbf{F}(t)\|_{\mathrm{max}} \leqslant \| \mathbf{F}(t) \|_{\infty} \stackrel{\mathrm{a.e.}}{\leqslant} \| \mathbf{F} \|_{\mathscr{L}^\infty},
\end{equation}
i.e.~$|f_{jk}(t)| < \beta^{-1}\| \mathbf{F} \|_{\mathscr{L}^\infty}$ for a.e. $t\in U$, or $f_{jk}\in L^\infty (U,\nu)$ for all $(j,k)$.
\end{proof}

\begin{lemma}\label{lemma:UniformConvFourierMatrix}
Let $\mathbf{F}: \mathbb{T} \to \complexes^{r\times r}$, $\mathbf{F}(t) = [f_{jk}(t)]_{j,k=1}^{r}$ be periodic. If all functions $f_{jk}$ admit a uniformly convergent Fourier series, so does $\mathbf{F}$. 
\end{lemma}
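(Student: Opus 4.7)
The plan is to reduce the statement to the scalar case entrywise and exploit equivalence of matrix norms, since all of the heavy lifting (existence and uniform convergence of the scalar Fourier series) is in the hypothesis.

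First, I would observe that the Fourier coefficients of $\mathbf{F}$ are precisely the matrices of Fourier coefficients of the entries: if $f_{jk} \sim \sum_{n\in\integers} c_{jk,n} e^{in\Omega t}$, then
\begin{equation}
	F_n = \frac{1}{T}\int_{\mathbb{T}} \mathbf{F}(t) e^{-in\Omega t}\, dt = [c_{jk,n}]_{j,k=1}^{r},
\end{equation}
so the partial Fourier sum satisfies $(D_n * \mathbf{F})(t) = [(D_n * f_{jk})(t)]_{j,k=1}^{r}$, i.e.\ the partial-sum operator acts entrywise. Consequently,
\begin{equation}
	(D_n * \mathbf{F})(t) - \mathbf{F}(t) = [(D_n * f_{jk})(t) - f_{jk}(t)]_{j,k=1}^{r}.
\end{equation}

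Next, I would invoke equivalence of matrix norms on $\complexes^{r\times r}$ as stated at the beginning of the appendix: there exists $\beta > 0$ with $\| A \|_{\infty} \leqslant \beta \| A \|_{l^1}$ for every $A\in\complexes^{r\times r}$. Applying this entrywise gives
\begin{equation}
	\| (D_n * \mathbf{F})(t) - \mathbf{F}(t) \|_{\infty} \leqslant \beta \sum_{j,k=1}^{r} \big| (D_n * f_{jk})(t) - f_{jk}(t) \big|
\end{equation}
for every $t\in\mathbb{T}$. Taking the essential supremum over $t$ and using subadditivity,
\begin{equation}
	\esssup_{t\in\mathbb{T}} \| (D_n * \mathbf{F})(t) - \mathbf{F}(t) \|_{\infty} \leqslant \beta \sum_{j,k=1}^{r} \esssup_{t\in\mathbb{T}} \big| (D_n * f_{jk})(t) - f_{jk}(t) \big|.
\end{equation}

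By hypothesis, each scalar term on the right tends to $0$ as $n\to\infty$, and the sum is finite (only $r^2$ summands), so the right-hand side tends to $0$. This establishes uniform convergence of the Fourier series of $\mathbf{F}$ in the chosen matrix norm; by equivalence of matrix norms on $\complexes^{r\times r}$, uniform convergence in any other matrix norm follows. There is no serious obstacle here: the argument is just bookkeeping between norms on $\complexes^{r\times r}$, and the only thing worth emphasizing is that finiteness of $r^2$ is what makes the interchange of supremum and finite sum harmless, something that would fail for an infinite-dimensional target space.
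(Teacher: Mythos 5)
Your proof is correct and follows essentially the same route as the paper's: both reduce to the scalar entries via the observation that the partial Fourier sums act entrywise, then pass from entrywise uniform convergence to matrix-norm uniform convergence using an inequality between $\|\cdot\|_{\infty}$ and an entrywise norm (the paper uses $\|\cdot\|_{\infty}\leqslant\fnorm{\cdot}$ where you use $\|\cdot\|_{\infty}\leqslant\beta\|\cdot\|_{l^1}$, a cosmetic difference). Your closing remark that finiteness of $r^2$ is the essential point is exactly the content of the paper's argument as well.
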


\begin{proof}
Let $f_{jk} = \sum_{n\in\integers} f_{jk,n} \odot e_n$ converging uniformly for all pairs $(j,k)$. Define matrix $G_{n} = [f_{jk,n}]_{j,k=1}^{r}$, $n\in\integers$. As $\| \cdot \|_{\infty} \leqslant \fnorm{\cdot}$ and square root is a continuous and strictly increasing function, we easily have
\begin{align}\label{eq:UniformConvergenceFourierMr}
	\lim_{n\to\infty} \sup_{t\in\mathbb{T}} \Big\| \sum_{|k|\leqslant n} &G_k e^{ik\Omega t} - \mathbf{F}(t) \Big\|_{\infty} 	\\
	&\leqslant \Big( \lim_{n\to\infty}\sum_{j,k=1}^{r} \sup_{t\in\mathbb{T}} |D_n * f_{jk}(t) - f_{jk}(t)|^{2} \Big)^{1/2}. \nonumber
\end{align}
As $\sup_{t\in\mathbb{T}} |D_n * f_{jk} (t) - f_{jk}(t)| \to 0$, i.e.~Fourier series of all $f_{jk}$ converge uniformly, the limit in \eqref{eq:UniformConvergenceFourierMr} is 0 and indeed $\sum_{n\in\integers} G_n \odot e_n$ converges to $\mathbf{F}$ uniformly. As we have
\begin{equation}
	G_n = \frac{1}{T}\int_\mathbb{T} \mathbf{F}(t) e^{-in\Omega t} \, dt,
\end{equation}
it is the Fourier series of $\mathbf{F}$ and the claim is shown.
\end{proof}

\begin{lemma}\label{lemma:AbsoluteConvFourierMatrix}
Let $\mathbf{F} : \mathbb{T} \to \complexes^{r\times r}$, $\mathbf{F}(t) = [f_{jk}(t)]_{j,k=1}^{r}$. If all functions $f_{jk}$ admit absolutely convergent Fourier series, so does $\mathbf{F}$.
\end{lemma}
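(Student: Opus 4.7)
My plan is to mimic the structure of the preceding Lemma \ref{lemma:UniformConvFourierMatrix}, replacing the uniform-convergence estimate by an $l^1$-type bound on the Fourier coefficients. Concretely, I will first identify the matrix Fourier coefficients of $\mathbf{F}$ as $G_n = [f_{jk,n}]_{j,k=1}^r$ with $f_{jk,n} = T^{-1}\int_{\mathbb{T}} f_{jk}(t) e^{-in\Omega t}\,dt$, which is immediate from componentwise integration and gives $G_n = T^{-1}\int_{\mathbb{T}} \mathbf{F}(t) e^{-in\Omega t}\,dt$. So the series $\sum_{n\in\integers} G_n \odot e_n$ is indeed the Fourier series of $\mathbf{F}$.

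The heart of the argument is then the chain of inequalities between matrix norms. Since all norms on $\complexes^{r\times r}$ are equivalent, it suffices to establish absolute convergence with respect to any one of them, and the $l^1$-norm is the most convenient. I would write
\begin{equation*}
\sum_{n\in\integers} \| G_n \|_{l^1} = \sum_{n\in\integers} \sum_{j,k=1}^{r} |f_{jk,n}| = \sum_{j,k=1}^{r} \sum_{n\in\integers} |f_{jk,n}|,
\end{equation*}
where the swap of summations is legitimate because all summands are non-negative (Tonelli). Each of the finitely many inner sums is finite by the hypothesis that $f_{jk}$ admits an absolutely convergent Fourier series, so the total is finite. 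Passing back to, say, $\| \cdot \|_{\infty}$ via an equivalence constant $\gamma > 0$ with $\| \cdot \|_{\infty} \leqslant \gamma \| \cdot \|_{l^1}$ yields $\sum_{n\in\integers}\| G_n \|_{\infty} < \infty$, which is the desired absolute convergence (and the same is true for any other matrix norm by equivalence).

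I do not foresee a genuine obstacle: the whole statement is a finite-index Tonelli swap sheltered inside norm equivalence. The only care needed is to (i) state up front which norm absolute convergence is measured with, and (ii) note that the conclusion is norm-independent precisely because $\complexes^{r\times r}$ is finite-dimensional, so the lemma's conclusion is unambiguous. No appeal to completeness or to the convergence results for Bochner spaces is required here, as the statement is purely about the sequence of Fourier coefficients.
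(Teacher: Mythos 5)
Your proposal is correct and follows essentially the same route as the paper's own proof: identify the matrix Fourier coefficients componentwise, bound $\|G_n\|_{\infty}$ by the $l^1$-norm via equivalence of matrix norms, and reduce to the finitely many absolutely summable scalar coefficient sequences. The only (harmless) additions are your explicit remarks on the Tonelli swap and on norm-independence of the conclusion.
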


\begin{proof}
Again, let $F_n = [f_{jk,n}]$; we have to show $\sum_{n\in\integers} \| F_n \|_{\infty}$ converges. Assume the absolute convergence for all functions $f_{jk}$, i.e.~$\sum_{n\in\integers} | f_{jk,n}| = C_{jk}$. From equivalence $\| \cdot \|_{\infty} \leqslant \alpha \| \cdot \|_{l^1}$,
\begin{equation}
	\lim_{n\to\infty} \sum_{|l|\leqslant n} \| F_l \|_{\infty} \leqslant \alpha \lim_{n\to\infty} \sum_{|l|\leqslant n} \sum_{j,k=1}^{r} |f_{jk,l}| = \alpha \sum_{j,k=1}^{r} C_{jk},
\end{equation}
so $\sum_{n\in\integers} F_n \odot e_n$ converges absolutely.
\end{proof}

\begin{lemma}\label{lemma:MatrixFunFourierUniConv}
If $\mathbf{A} : \mathbb{T}\to B(\matrd_1)$ is everywhere differentiable and $\dot{\mathbf{A}}$ is piecewise continuous, then $\sum_{n\in\integers}A_n \odot e_n$ converges uniformly to $\mathbf{A}$.
\end{lemma}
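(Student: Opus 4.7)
The plan is to reduce the matrix-valued statement to the scalar case entry by entry and then apply the classical theorem on uniform convergence of Fourier series for absolutely continuous functions with integrable derivative. First I would identify $B(\matrd_1)$ with $\complexes^{d^2 \times d^2}$ and write $\mathbf{A}(t) = [a_{jk}(t)]_{j,k=1}^{d^2}$ with scalar entries $a_{jk} : \mathbb{T} \to \complexes$. By Lemma \ref{lemma:FrechetDiffEquivMatElementsDiff}, everywhere differentiability of $\mathbf{A}$ is equivalent to everywhere differentiability of each $a_{jk}$, with $\dot{a}_{jk}$ being the corresponding entry of $\dot{\mathbf{A}}$. The assumption that $\dot{\mathbf{A}}$ is piecewise continuous in the matrix norm then forces each $\dot{a}_{jk}$ to be piecewise continuous as a scalar function (this follows from the equivalence of the matrix max norm with the operator norm, essentially the argument used in Lemma \ref{lemma:SquareIntEquivSqIntMatElem}). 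In particular each $\dot{a}_{jk}$ is bounded on the compact set $\mathbb{T}$ and hence lies in $L^2(\mathbb{T})$.

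Next I would handle each scalar entry individually. Since $a_{jk}$ is everywhere differentiable and $\dot{a}_{jk}$ is bounded (hence integrable), $a_{jk}$ is absolutely continuous on $\mathbb{T}$, so integration by parts yields the Fourier coefficient identity
\begin{equation}
(\dot{a}_{jk})_n = i n \Omega \cdot (a_{jk})_n, \qquad n \in \integers.
\end{equation}
For $n \neq 0$ this gives $|(a_{jk})_n| = |(\dot{a}_{jk})_n| / (|n|\Omega)$. Combining Cauchy--Schwarz with Parseval's identity applied to $\dot{a}_{jk} \in L^2(\mathbb{T})$ yields
\begin{equation}
\sum_{n \neq 0} |(a_{jk})_n| \leqslant \Big(\sum_{n \neq 0} \frac{1}{n^2 \Omega^2}\Big)^{1/2} \Big(\sum_{n \neq 0} |(\dot{a}_{jk})_n|^{2}\Big)^{1/2} < \infty,
\end{equation}
so the Fourier series of $a_{jk}$ converges absolutely. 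Since $a_{jk}$ is also continuous (being differentiable), absolute convergence entails uniform convergence by a standard Weierstrass M-test argument together with the fact that the continuous sum must coincide with $a_{jk}$ almost everywhere (and then everywhere by continuity).

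Finally I would invoke Lemma \ref{lemma:UniformConvFourierMatrix}: since the Fourier series of each entry $a_{jk}$ converges uniformly, the matrix Fourier series $\sum_{n \in \integers} A_n \odot e_n$ converges uniformly to $\mathbf{A}$ in the operator norm, which is the desired conclusion. The main conceptual obstacle is the step establishing absolute continuity of $a_{jk}$ from everywhere differentiability with a piecewise continuous derivative, which is needed to legitimise integration by parts; this is precisely why the hypothesis ``everywhere differentiable with piecewise continuous $\dot{\mathbf{A}}$'' is chosen, as it sidesteps pathological examples of differentiable functions with non-integrable derivatives and routinely yields the Fourier coefficient decay $(a_{jk})_n = O(1/n)$ required for absolute summability.
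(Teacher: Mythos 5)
Your proof is correct and follows essentially the same route as the paper: reduce to the scalar entries via Lemma \ref{lemma:FrechetDiffEquivMatElementsDiff}, invoke the classical uniform-convergence result for continuous periodic functions with piecewise continuous (hence integrable) derivative, and lift back to the matrix level with Lemma \ref{lemma:UniformConvFourierMatrix}. The only difference is that you spell out the proof of the classical scalar fact (absolute continuity, integration by parts, Cauchy--Schwarz with Parseval) where the paper simply cites it.
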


\begin{proof}
Notice $B(\matrd_1) \simeq \complexes^{d^2\times d^2}$, so $\mathbf{A}(t)  = [A_{jk}(t)]$, $j,k\in \{ 1 \, ... \, d^2 \}$, where all $A_{jk} : \mathbb{T} \to \complexes$ are periodic and continuous. Lemma \ref{lemma:FrechetDiffEquivMatElementsDiff} yields that if $\mathbf{A}$ is differentiable and of piecewise continuous derivative, then $A_{jk}\in \mathcal{C}^1 (\mathbb{T})$ and $\dot{A}_{jk}$ is piecewise continuous, and therefore absolutely integrable, for all pairs $(j,k)$. As Fourier series of such function is uniformly convergent, this implies $\sum_{n\in\integers} A_n \odot e_n$ also converges uniformly due to Lemma \ref{lemma:UniformConvFourierMatrix}.
\end{proof}

\begin{lemma}\label{lemma:MatrixFunFourierAbsConv}
If $\mathbf{A} : \mathbb{T} \to B(\matrd_1)$ is continuous and $\dot{\mathbf{A}}$ is bounded, then $\sum_{n\in\integers} A_n \odot e_n$ converges absolutely and uniformly to $\mathbf{A}$.
\end{lemma}

\begin{proof}
Again, $B(\matrd_1) \simeq \complexes^{d^2 \times d^2}$, so $\mathbf{A}(t) = [A_{jk}(t)]_{j,k=1}^{d^2}$, where all $A_{jk}$ are periodic and continuous. By Lemma \ref{lemma:SquareIntEquivSqIntMatElem}, all $\dot{A}_{jk}$ are bounded and therefore square integrable; this implies that Fourier series of each $A_{jk}$ converges absolutely and uniformly \cite{Tolstov1976}; by Lemmas \ref{lemma:UniformConvFourierMatrix} and \ref{lemma:AbsoluteConvFourierMatrix}, $\sum_{n\in\integers}A_n \odot e_n$ converges absolutely and uniformly to $\mathbf{A}$.
\end{proof}

\subsection{Convergence of Fourier liftings in \texorpdfstring{$B(\mathscr{L}^2(\mathbb{T},\matrd_1))$}{B(L2)}}
\label{appsection:FourierExpressions}

\begin{lemma}\label{lemma:SeriesDerivativePadj}
We have, that 
\begin{enumerate}
	\item\label{point:SeriesDerivativePadjOne} $i\Omega \sum_{n\in\integers} P_{n}^{\dual} \otimes F_z F_{n}$ converges pointwise to $(i\Omega \, I\otimes F_z) \tilde{P}^{\dual}$,
	\item\label{point:SeriesDerivativePadjTwo} $i\Omega \sum_{n\in\integers} P_{n}^{\dual} \otimes F_{n}F_{z}$ converges pointwise to $\tilde{P}^{\dual} (i\Omega \, I\otimes F_z)$,
\end{enumerate}
everywhere in $\iota^{-1} (\mathscr{C}^{1}(\mathbb{T},\matrd_1))$.
\end{lemma}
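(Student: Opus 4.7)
The plan is to reduce both statements to the norm convergence result of Proposition \ref{prop:NormConvergenceJA} combined with the operator commutator $\comm{F_z}{F_n} = nF_n$ from Proposition \ref{prop:FourierOpComm}(\ref{prop:FourierOpCommTwo}) and a Leibniz-type identity relating $\tilde{\partial}\tilde{P}^\dual$ to $\tilde{P}^\dual \tilde{\partial}$.

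First I would dispatch (\ref{point:SeriesDerivativePadjTwo}), which is the easier of the two. Fix $\tilde{f} \in \iota^{-1}(\mathscr{C}^1(\mathbb{T},\matrd_1))$. Since $\mathscr{C}^1 \subset \domain{\partial}$ and $\tilde{\partial} = i\Omega \, I \otimes F_z$ by Proposition \ref{thm:SrepresentationDelta}, the element $\tilde{g} := (i\Omega \, I \otimes F_z)(\tilde{f})$ is a well-defined member of $\matrd_1 \bar{\otimes} L^2(\mathbb{T})$. Because $P$ is continuous with continuous derivative (Proposition \ref{prop:PfunctionConvergence}(\ref{point:PfunctionConvergencePFourier}),(\ref{point:PfunctionConvergenceDerFourier})), Proposition \ref{prop:NormConvergenceJA} applied to $P$ gives $\sum_n P_n^\dual \otimes F_n \to \tilde{P}^\dual$ in operator norm. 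Writing the partial sums of (\ref{point:SeriesDerivativePadjTwo}) as
\begin{equation*}
i\Omega \sum_{|k|\leqslant n} P_k^\dual \otimes F_k F_z (\tilde{f}) = \Bigl( \sum_{|k|\leqslant n} P_k^\dual \otimes F_k \Bigr)(\tilde{g}),
\end{equation*}
norm convergence evaluated at $\tilde{g}$ yields the limit $\tilde{P}^\dual(\tilde{g}) = \tilde{P}^\dual (i\Omega \, I \otimes F_z)(\tilde{f})$, proving (\ref{point:SeriesDerivativePadjTwo}).

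For (\ref{point:SeriesDerivativePadjOne}) I would use Proposition \ref{prop:FourierOpComm}(\ref{prop:FourierOpCommTwo}) to write $F_z F_n = F_n F_z + n F_n$, so that each summand of (\ref{point:SeriesDerivativePadjOne}) decomposes as $i\Omega\, P_n^\dual \otimes F_n F_z + i\Omega n\, P_n^\dual \otimes F_n$. Summed on $\tilde{f}$, the first piece converges by part (\ref{point:SeriesDerivativePadjTwo}) to $\tilde{P}^\dual(i\Omega \, I \otimes F_z)(\tilde{f})$, while the second is precisely the Fourier lifting series of $\dot{P}^\dual$, which converges in norm by Proposition \ref{prop:NormConvergenceJA} together with the uniform convergence of the Fourier series of $\dot{P}^\dual$ established in Proposition \ref{prop:PfunctionConvergence}(\ref{point:PfunctionConvergenceDerFourier}); denote its limit by $\widetilde{\dot{P}^\dual}$. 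Consequently, the left-hand side of (\ref{point:SeriesDerivativePadjOne}) converges on $\tilde{f}$ to $\tilde{P}^\dual(i\Omega \, I \otimes F_z)(\tilde{f}) + \widetilde{\dot{P}^\dual}(\tilde{f})$.

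It remains to identify this limit with $(i\Omega \, I \otimes F_z) \tilde{P}^\dual(\tilde{f})$, which amounts to a Leibniz rule in disguise. Transporting via $\iota$, for any $f \in \mathscr{C}^1(\mathbb{T},\matrd_1)$ the function $(P^\dual f)(t) = p_t^\hadj f(t) p_t$ is itself continuously differentiable (since $p \in \mathscr{C}^1$ by \eqref{eq:UnitaryEvolutionOperator}, so Lemma \ref{lemma:AtDiffEndomorphism} applied to $P^\dual$ gives $P^\dual(\mathscr{C}^1) \subset \mathscr{C}^1$), and the standard product rule yields $\partial (P^\dual f) = \dot{P}^\dual f + P^\dual \partial f$ pointwise. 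Lifting through $\iota^{-1}$ and using $\tilde{\partial} = i\Omega \, I \otimes F_z$ produces the operator identity
\begin{equation*}
(i\Omega \, I \otimes F_z) \tilde{P}^\dual = \widetilde{\dot{P}^\dual} + \tilde{P}^\dual(i\Omega \, I \otimes F_z) \quad \text{on } \iota^{-1}(\mathscr{C}^1),
\end{equation*}
which matches the computed limit and gives (\ref{point:SeriesDerivativePadjOne}). The main subtlety, and likely the only place to be careful, is the bookkeeping of domains: one must verify that every intermediate expression such as $\tilde{g}$ and $\tilde{P}^\dual(\tilde{f})$ lives in $\matrd_1 \bar{\otimes} L^2(\mathbb{T})$ and inside $\domain{\tilde{\partial}}$ where needed, both of which follow cleanly from the $\mathscr{C}^1$ hypothesis on $f$ and the $\mathscr{C}^1$ regularity of $P$ and $P^\dual$ supplied by Proposition \ref{prop:PfunctionConvergence}.
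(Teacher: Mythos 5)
Your proof is correct, and it reaches the two limits by a genuinely different mechanism than the paper. For part (\ref{point:SeriesDerivativePadjTwo}) the paper evaluates both sides under $\iota$ and bounds $\| (D_n * P^\dual)(t) - P^\dual_t\|$ in $\mathscr{L}^2$ against $\sup_t \|\dot f(t)\|_1$, using only the $\mathscr{L}^2$-convergence of the Fourier series of $P^\dual$ guaranteed by Theorem \ref{FourierSeriesConvLp}; you instead factor the partial sums as $\bigl(\sum_{|k|\leqslant n} P_k^\dual \otimes F_k\bigr)(\tilde g)$ with $\tilde g = \tilde\partial(\tilde f)$ fixed and invoke the operator-norm convergence of Proposition \ref{prop:NormConvergenceJA}, which is cleaner but leans on the \emph{uniform} convergence of the Fourier series of $P^\dual$ from Proposition \ref{prop:PfunctionConvergence}. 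For part (\ref{point:SeriesDerivativePadjOne}) the underlying Leibniz decomposition into a $\dot P^\dual$-piece and a $P^\dual\circ\partial$-piece is the same as the paper's, but you organize it through the commutator $F_zF_n = F_nF_z + nF_n$ and establish the identity $(i\Omega\, I\otimes F_z)\tilde P^\dual = Q^\dual + \tilde P^\dual(i\Omega\, I\otimes F_z)$ on $\iota^{-1}(\mathscr{C}^1)$ directly via Lemma \ref{lemma:AtDiffEndomorphism} and the product rule, whereas the paper obtains that identity only afterwards (Lemma \ref{lemma:JdeltaPadjCommRel}) as a \emph{consequence} of the present lemma; your reordering is legitimate and not circular, since your derivation of the commutator identity uses only Proposition \ref{thm:SrepresentationDelta} and the $\mathscr{C}^1$-regularity of $p_t$. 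What your route buys is the norm convergence of the two constituent operator series as a by-product and a proof that avoids any explicit Dirichlet-kernel estimate; what the paper's route buys is weaker hypotheses (only $\mathscr{L}^2$-convergence of the relevant Fourier series is needed) and a single self-contained estimate. Your closing remark about domain bookkeeping is the right thing to flag, and your verification that $\tilde P^\dual(\tilde f)\in\domain{\tilde\partial}$ via Lemma \ref{lemma:AtDiffEndomorphism} closes the one gap that the commutator route opens.
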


\begin{proof}
Ad (\ref{point:SeriesDerivativePadjOne}). Both $(i\Omega \, I\otimes F_z) \tilde{P}^{\dual}$ and $i\Omega \sum_{|k|\leqslant n} P_{k}^{\dual} \otimes F_z F_{k}$ may be shown to satisfy
\begin{subequations}
	\begin{align}
		\iota \circ (i\Omega \, I \otimes F_z) \tilde{P}^{\dual}(\tilde{f})(t) &= \partial (\iota\circ \tilde{P}^{\dual}(\tilde{f}))(t) = \frac{d}{dt} P^{\dual}_{t}(f(t)) \\
		&= \dot{P}^{\dual}_{t}(f(t)) + P_{t}^{\dual} (\dot{f}(t)) , \nonumber
	\end{align}
	\begin{align}
		\iota\circ \Big(i\Omega \sum_{|k|\leqslant n} P_{k}^{\dual} \otimes F_z F_{k} (\tilde{f})\Big)(t) &= \iota \circ (i\Omega \, I \otimes F_z) \sum_{|k|\leqslant n} P_{k}^{\dual} \otimes F_{k} (\tilde{f})(t) \\
		&= (D_n * \dot{P}^\dual )(t)(f(t)) + (D_n * P^\dual)(t)(\dot{f}(t)), \nonumber
	\end{align}
\end{subequations}
yielding, after some manipulations, that for any $\tilde{f}\in\iota^{-1}(\mathscr{C}^{1}(\mathbb{T},\matrd_1))$
\begin{align}\label{eq:ConvergenceDerivativePadj}
	&\Big\| i\Omega \sum_{|k|\leqslant n} P_{k}^{\dual} \otimes F_z F_{k}(\tilde{f}) - (i\Omega \, I\otimes F_z) \tilde{P}^{\dual}(\tilde{f}) \Big\|_{\mathscr{L}^{2}_{1}}^{2} \\
	&\leqslant \sup_{t\in\mathbb{T}}\tnorm{f(t)}^{2}\cdot \frac{1}{T} \int_{\mathbb{T}} \left\|(D_n * \dot{P}^\dual)(t) - \dot{P}^{\dual}_{t}\right\|_{\infty}^{2} dt \nonumber \\
	&+ \sup_{t\in\mathbb{T}}\|\dot{f}(t)\|_{1}^{2} \cdot \frac{1}{T}\int_{\mathbb{T}} \left\|(D_n * P^\dual ) (t) - P_{t}^{\dual}\right\|_{\infty}^{2} dt.\nonumber
\end{align}	
as $f$ and $df/dt$ are continuous and periodic and attain their maxima. As $P^\dual$ and $\dot{P}^\dual$ are square integrable, their Fourier series converge in $\mathscr{L}^{2}(\mathbb{T},B(\matrd_1))$ by Theorem \ref{FourierSeriesConvLp} and Proposition \ref{prop:PfunctionConvergence} and therefore this upper bound is 0 as $n \to \infty$; the first claim is shown.

Ad (\ref{point:SeriesDerivativePadjTwo}). Analogously, we start with noting that
\begin{subequations}
	\begin{equation}
		\iota \circ \tilde{P}^{\dual} (i\Omega \, I \otimes F_z)(\tilde{f})(t) = P_{t}^{\dual}(\dot{f}(t)),
	\end{equation}
	\begin{align}
		\iota \circ \Big( \sum_{|k|\leqslant n} P_{k}^{\dual} \otimes F_{k} F_z \Big)(\tilde{f})(t) &= \iota\circ \sum_{|k|\leqslant n} P_{k}^{\dual} \otimes F_{k} (i\Omega \, I \otimes F_z)(\tilde{f})(t) \\
		&= (D_n * P^\dual) (t) (\dot{f}(t)).\nonumber
	\end{align}
\end{subequations}
As $df/dt$ is continuous and periodic, it is bounded and so one gets, after some algebra,
\begin{align}
	\Big\| i\Omega \sum_{|k|\leqslant n} P_{k}^{\dual} &\otimes F_z F_{k}(\tilde{f}) - \tilde{P}^{\dual} (i\Omega \, I\otimes F_z)(\tilde{f}) \Big\|_{\mathscr{L}^{2}_{1}}^{2} \\ 
	&\leqslant \sup_{t\in\mathbb{T}}\|\dot{f}(t)\|_{1}^{2} \cdot \frac{1}{T}\int_{\mathbb{T}} \left\| (D_n * P^\dual) (t) - P_{t}^{\dual} \right\|^{2}_{\infty} dt \nonumber
\end{align}
which again vanishes as $n\to\infty$, since $P^\adj$ is square integrable and the second claim is proved.
\end{proof}

\begin{lemma}\label{lemma:ConvergenceDerivativePtPtAdj}
Denote the Fourier lifting of $\dot{P}$ by $Q$. Then, we have
\begin{equation}
	Q = i\Omega \sum_{n\in\integers} n P_{n} \otimes F_{n}, \quad Q^{\dual} = i\Omega \sum_{n\in\integers} n P_{n}^{\dual} \otimes F_{n}
\end{equation}
converging in norm.
\end{lemma}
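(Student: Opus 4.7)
The plan is to reduce both claims to a direct application of Proposition \ref{prop:NormConvergenceJA} applied to the operator-valued functions $\dot{P}$ and $\dot{P}^{\dual}$, using the Fourier coefficient computation already carried out in Proposition \ref{prop:PfunctionConvergence}.

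First I would identify the Fourier coefficients. By point (\ref{point:PfunctionConvergenceDerFourier}) of Proposition \ref{prop:PfunctionConvergence}, the Fourier series
\begin{equation*}
\dot{P} = i\Omega \sum_{n\in\integers} n P_{n} \odot e_{n}, \qquad \dot{P}^{\dual} = i\Omega \sum_{n\in\integers} n P_{n}^{\dual} \odot e_{n}
\end{equation*}
both converge uniformly, so by uniqueness of Fourier coefficients the $n$-th Fourier component of $\dot{P}$ is $i\Omega n P_{n}$ and that of $\dot{P}^{\dual}$ is $i\Omega n P_{n}^{\dual}$.

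Next I would invoke Proposition \ref{prop:NormConvergenceJA}: since $\dot{P}$ is a bounded $B(\matrd_{1})$-valued function on $\mathbb{T}$ whose Fourier series converges uniformly, its Fourier lifting $Q$ is the norm limit of $\sum_{|k|\leqslant n}(i\Omega k P_{k})\otimes F_{k}$. This gives $Q = i\Omega \sum_{n\in\integers} n P_{n} \otimes F_{n}$ with norm convergence. The same proposition applied to $\dot{P}^{\dual}$ (which is also bounded with uniformly convergent Fourier series by the same part of Proposition \ref{prop:PfunctionConvergence}) gives the corresponding expression for $Q^{\dual}$, and indeed agrees with the dual of the series for $Q$ since duality on $\matrd$ is bounded and intertwines with $F_{n}$.

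There is essentially no obstacle: the content is already in Propositions \ref{prop:PfunctionConvergence} and \ref{prop:NormConvergenceJA}. The only small point worth verifying explicitly is that $\dot{P}^{\dual}$ indeed has Fourier coefficients $i\Omega n P_{n}^{\dual}$ rather than $i\Omega n (P_{n})^{\dual}$ with a sign flip or conjugation; this follows immediately from the fact that taking the matrix dual commutes with the Bochner integral defining the Fourier coefficients, so $(\dot{P})_{n}^{\dual} = (\dot{P}^{\dual})_{n}$ and the two series are genuinely the term-by-term duals of one another.
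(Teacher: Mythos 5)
Your proposal is correct and follows exactly the paper's own (one-line) argument: read off the Fourier coefficients $i\Omega n P_n$ of $\dot{P}$ from Proposition \ref{prop:PfunctionConvergence}(\ref{point:PfunctionConvergenceDerFourier}) and then apply Proposition \ref{prop:NormConvergenceJA} to the uniformly convergent series for $\dot{P}$ and $\dot{P}^{\dual}$. Your extra remark that dualizing commutes with the Bochner integral, so the two series are term-by-term duals, is a harmless elaboration of what the paper leaves implicit.
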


\begin{proof}
As Fourier series of both $\dot{P}$ and $\dot{P}^\dual$ converge uniformly by Proposition \ref{prop:PfunctionConvergence}, we can apply Proposition \ref{prop:NormConvergenceJA}.
\end{proof}

\begin{lemma}\label{lemma:JdeltaPadjCommRel}
It holds, that $\comm{i\Omega \, I\otimes F_z}{\tilde{P}^{\dual}} = Q^{\dual}$.
\end{lemma}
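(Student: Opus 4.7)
The plan is to reduce this commutator identity to a term-by-term application of Proposition \ref{prop:FourierOpComm}, using the two series expansions already established in Lemma \ref{lemma:SeriesDerivativePadj}. Since $I\otimes F_z$ is unbounded, one must first fix a dense common core on which the composition $(i\Omega\,I\otimes F_z)\tilde{P}^{\dual} - \tilde{P}^{\dual}(i\Omega\,I\otimes F_z)$ makes sense; the natural choice is $\iota^{-1}(\mathscr{C}^{1}(\mathbb{T},\matrd_1))$, which is invariant under $\tilde{P}^{\dual}$ (because $P^{\dual}$ preserves $\mathscr{C}^{1}$ by Proposition \ref{prop:PfunctionConvergence}, together with Lemma \ref{lemma:AtDiffEndomorphism}) and contained in $\domain{I\otimes F_z}$.

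First I would fix $\tilde{f}\in\iota^{-1}(\mathscr{C}^{1}(\mathbb{T},\matrd_1))$ and write
\begin{equation}
\comm{i\Omega\,I\otimes F_z}{\tilde{P}^{\dual}}(\tilde{f})
= (i\Omega\,I\otimes F_z)\tilde{P}^{\dual}(\tilde{f}) - \tilde{P}^{\dual}(i\Omega\,I\otimes F_z)(\tilde{f}).
\end{equation}
By Lemma \ref{lemma:SeriesDerivativePadj}(\ref{point:SeriesDerivativePadjOne}) the first summand is the pointwise limit of $i\Omega\sum_{|k|\leqslant n} P_{k}^{\dual}\otimes F_{z}F_{k}(\tilde{f})$, while by part (\ref{point:SeriesDerivativePadjTwo}) the second summand is the pointwise limit of $i\Omega\sum_{|k|\leqslant n} P_{k}^{\dual}\otimes F_{k}F_{z}(\tilde{f})$; both series converge in $\matrd_1\,\bar{\otimes}\,L^2(\mathbb{T})$ on the chosen core.

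Next, I would subtract the two partial sums before passing to the limit, obtaining
\begin{equation}
i\Omega\sum_{|k|\leqslant n} P_{k}^{\dual}\otimes\comm{F_z}{F_k}(\tilde{f})
= i\Omega\sum_{|k|\leqslant n} k\,P_{k}^{\dual}\otimes F_k(\tilde{f}),
\end{equation}
where the commutator identity $\comm{F_z}{F_k}=kF_k$ is Proposition \ref{prop:FourierOpComm}(\ref{prop:FourierOpCommTwo}). Taking $n\to\infty$ and applying Lemma \ref{lemma:ConvergenceDerivativePtPtAdj}, which identifies the limiting series with $Q^{\dual}$ (in fact in norm), gives $\comm{i\Omega\,I\otimes F_z}{\tilde{P}^{\dual}}(\tilde{f})=Q^{\dual}(\tilde{f})$ on the dense subspace; since $Q^{\dual}$ is bounded, the identity extends to all of $\matrd_1\,\bar{\otimes}\,L^2(\mathbb{T})$ in the sense appropriate for the densely defined left-hand side.

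The main obstacle is bookkeeping for the unbounded operator $I\otimes F_z$: one must check that the two partial-sum sequences can be subtracted termwise before the limit is taken (this is legitimate because both sequences converge in the same norm on the same domain), and that the resulting series really is norm-convergent so that Lemma \ref{lemma:ConvergenceDerivativePtPtAdj} applies directly. All other steps are purely algebraic manipulations on the bilinear pairing $P_k^{\dual}\otimes F_k$ and use nothing beyond the commutator relation for the Fourier operators.
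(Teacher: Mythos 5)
Your proposal is correct and follows essentially the same route as the paper: both rest on the commutation relation $\comm{F_z}{F_k}=kF_k$ from Proposition \ref{prop:FourierOpComm}, the two pointwise-convergent expansions of Lemma \ref{lemma:SeriesDerivativePadj}, and the identification of $Q^{\dual}$ via Lemma \ref{lemma:ConvergenceDerivativePtPtAdj}, with the limit taken after rearranging the partial sums. Your additional care about the invariant core $\iota^{-1}(\mathscr{C}^{1}(\mathbb{T},\matrd_1))$ is a welcome refinement but does not change the argument.
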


\begin{proof}
From proposition \ref{prop:FourierOpComm} we have $F_z F_{k} = k F_{k} + F_{k} F_{z}$ for any $k\in\integers$ and we can write
\begin{equation}\label{eq:SeriesDerivativePadj}
	i\Omega \sum_{|k|\leqslant n} P_{k}^{\dual} \otimes F_z F_k = i \Omega \sum_{|k|\leqslant n} k P_{k}^{\dual} \otimes F_{k} + i\Omega \sum_{|k|\leqslant n} P_{k}^{\dual} \otimes F_k F_z,
\end{equation}
where all the series converge due to Lemmas \ref{lemma:SeriesDerivativePadj} and \ref{lemma:ConvergenceDerivativePtPtAdj}. Therefore, putting $n\to\infty$ we can restate \eqref{eq:SeriesDerivativePadj} as
\begin{equation}
	(i\Omega \, I \otimes F_z ) \tilde{P}^{\dual} = Q^{\dual} + \tilde{P}^{\dual} (i\Omega \, I \otimes F_z ),
\end{equation}
yielding the claim.
\end{proof}

\begin{lemma}\label{lemma:PXPadjConvergence}
We have
\begin{equation}\label{eq:PXPadjConvergence}
	\tilde{P} (\bar{L}\otimes I)\tilde{P}^{\dual} = \sum_{n,m\in\integers} P_{n} \bar{L} P_{m-n}^{\dual} \otimes F_{m}
\end{equation}
converging in norm.
\end{lemma}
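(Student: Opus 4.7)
The plan is to recognize the operator $\tilde{P}(\bar{L}\otimes I)\tilde{P}^\dual$ as a product of two Fourier liftings and apply Proposition \ref{prop:NormConvergenceProductofFunctions} directly. First, I would identify $(\bar{L}\otimes I)\tilde{P}^\dual$ as the Fourier lifting of the bounded operator-valued function $\mathbf{B}: \mathbb{T}\to B(\matrd_1)$ defined by $\mathbf{B}(t) = \bar{L} \circ P_t^\dual$; this is immediate from the definition \eqref{eq:FourierLiftingDefinition}, since for any $\tilde{f} = \iota^{-1}(f)$ one has $\iota\circ(\bar{L}\otimes I)\tilde{P}^\dual(\tilde{f})(t) = \bar{L}(P_t^\dual (f(t))) = \mathbf{B}(t)(f(t))$.

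Next, I would verify the hypotheses of Proposition \ref{prop:NormConvergenceProductofFunctions} for the pair $\mathbf{A}(t) = P_t$ and $\mathbf{B}(t) = \bar{L} \circ P_t^\dual$. Boundedness of both functions is clear: $P_t$ is a unitary conjugation and $\bar{L}$ is a fixed bounded operator on $\matrd_1$. The Fourier series of $\mathbf{A}$ converges absolutely and uniformly by Proposition \ref{prop:PfunctionConvergence}, and the symmetric argument applies to $P^\dual$ as well, since $\dot{P}^\dual$ is continuous on $\mathbb{T}$ by \eqref{eq:PdotPdotDual} and hence bounded, so Lemma \ref{lemma:MatrixFunFourierAbsConv} gives absolute and uniform Fourier convergence of $P^\dual$. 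Because the $n$-th Fourier coefficient of $\mathbf{B}$ is $\bar{L} P_n^\dual$ and $\|\bar{L} P_n^\dual\|_\infty \leq \|\bar{L}\|_\infty \|P_n^\dual\|_\infty$ by submultiplicativity of the operator norm, both the uniform and the absolute convergence transfer from $P^\dual$ to $\mathbf{B}$ by dominated comparison. Applying Proposition \ref{prop:NormConvergenceProductofFunctions} then yields norm convergence of
\[
\tilde{P}(\bar{L}\otimes I)\tilde{P}^\dual = \sum_{n,k\in\integers} P_n \bar{L} P_k^\dual \otimes F_n F_k.
\]

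Finally, I would invoke the group-like identity $F_n F_k = F_{n+k}$ from Proposition \ref{prop:FourierOpComm} and reindex the double sum by setting $m = n+k$ to recover exactly the claimed formula \eqref{eq:PXPadjConvergence}. The argument is mostly bookkeeping once the right factorization is chosen; the only mild obstacle is verifying the hypotheses on $\mathbf{B}$, which reduces cleanly to submultiplicativity of the operator norm combined with the already-established absolute and uniform convergence of the Fourier series of $P^\dual$.
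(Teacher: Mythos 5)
Your proposal is correct and follows essentially the same route as the paper: the paper likewise identifies $(\bar{L}\otimes I)\tilde{P}^{\dual}$ as the Fourier lifting of the bounded function $t\mapsto \bar{L}P_{t}^{\dual}$, notes that absolute and uniform convergence of its Fourier series is inherited from that of $P$ (Proposition \ref{prop:PfunctionConvergence}), applies Proposition \ref{prop:NormConvergenceProductofFunctions}, and reindexes via $F_nF_k = F_{n+k}$. Your explicit verification that the coefficients $\bar{L}P_n^{\dual}$ inherit summability by submultiplicativity is a slightly more careful rendering of a step the paper passes over quickly, but it is the same argument.
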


\begin{proof}
Consider \emph{constant} function $t\mapsto A_t$ given via $A_t (a) = \bar{L}(a)$, $a \in \matrd$. Then, $A = \bar{L} \odot e_0$, and $\tilde{A} = \bar{L}\otimes F_0$, which is bounded. Let $\beta$ be defined on $\mathscr{L}^{2}(\mathbb{T},\matrd_1)$ by
\begin{equation}
	\beta(f)(t) = \bar{L} P_{t}^{\dual} (f(t)).
\end{equation}
Then, $\tilde{\beta} = (\bar{L}\otimes I)\tilde{P}^{\dual}$. As $P$ admits absolutely and uniformly convergent Fourier series (Proposition \ref{prop:PfunctionConvergence}), so does the function $t\mapsto \bar{L}P_{t}^{\dual}$. Using group properties of operators $F_n$ and applying Proposition \ref{prop:NormConvergenceProductofFunctions}, we have
\begin{equation}
	\tilde{P}\tilde{\beta} = \sum_{n,m\in\integers} P_{n} \bar{L} P_{m}^\dual \otimes F_{n+m}
\end{equation}
converging in norm; \eqref{eq:PXPadjConvergence} appears after changing $n+m \to m$.
\end{proof}

\begin{lemma}\label{lemma:MinusPPartialPadjConv}
The following equalities hold (series converge in norm):
\begin{equation}\label{eq:MinusPQadj}
	i\Omega \sum_{m,n\in\integers} m P_{n} P_{m}^{\dual} \otimes F_{n+m} = \tilde{P}Q^\dual , \quad i\Omega \sum_{m,n\in\integers} n P_{n} P_{m}^{\dual} \otimes F_{n+m} = Q\tilde{P}^\dual .
\end{equation}
\end{lemma}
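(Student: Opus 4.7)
The plan is to apply Proposition~\ref{prop:NormConvergenceProductofFunctions} to expand both products $\tilde{P} Q^\dual$ and $Q \tilde{P}^\dual$ as Cauchy-style double series, and then read off the claimed forms using the group property of the Fourier shift operators.

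First I would assemble the relevant Fourier data. By Proposition~\ref{prop:PfunctionConvergence}(\ref{point:PfunctionConvergencePFourier}), the function $t \mapsto P_t$ admits an absolutely and uniformly convergent Fourier series $\sum_n P_n \odot e_n$; the analogous statement for $t \mapsto P_t^\dual$ is immediate since the dual is isometric in the relevant operator norms, so $P_n^\dual$ serve as the Fourier coefficients of $P^\dual$. By Proposition~\ref{prop:PfunctionConvergence}(\ref{point:PfunctionConvergenceDerFourier}), the derivatives $\dot{P}$ and $\dot{P}^\dual$ are bounded and admit the uniformly convergent Fourier series $i\Omega \sum_n n P_n \odot e_n$ and $i\Omega \sum_n n P_n^\dual \odot e_n$; Lemma~\ref{lemma:ConvergenceDerivativePtPtAdj} then identifies $Q$ and $Q^\dual$ as their Fourier liftings.

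Next I would invoke Proposition~\ref{prop:NormConvergenceProductofFunctions} twice. Taking $\mathbf{A} = P$ (absolutely and uniformly convergent) and $\mathbf{B} = \dot{P}^\dual$ (uniformly convergent) would yield
\[
	\tilde{P} Q^\dual = \sum_{n,m \in \integers} P_n \cdot (i\Omega \, m \, P_m^\dual) \otimes F_n F_m,
\]
converging in norm; applying the group relation $F_n F_m = F_{n+m}$ from Proposition~\ref{prop:FourierOpComm}(\ref{prop:FourierOpCommOne}) and collecting terms in the tensor factor would immediately give the first claimed identity. A symmetric choice $\mathbf{A} = \dot{P}$, $\mathbf{B} = P^\dual$ would give the second. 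The only delicate point is verifying the hypotheses of Proposition~\ref{prop:NormConvergenceProductofFunctions}, namely bounded operator-valued factors with uniformly convergent Fourier series, at least one of which is absolutely convergent; all of this is supplied directly by Proposition~\ref{prop:PfunctionConvergence}, which in turn rests on the standing piecewise $\mathcal{C}^1$ regularity of $H_t$. Beyond this verification, I do not anticipate any real obstacle---the remainder is pure bookkeeping of tensor indices.
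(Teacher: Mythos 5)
Your proof is correct, and it takes a genuinely different route from the paper's. The paper proves Lemma \ref{lemma:MinusPPartialPadjConv} directly via the Moore--Smith theorem: it forms the rectangular partial sums $s_{mn} = i\Omega\sum_{|k|\leqslant m}\sum_{|l|\leqslant n} k\,P_l P_k^\dual \otimes F_l F_k$, shows the two iterated limits exist and coincide (one uniformly, using the norm convergence of $\sum_n P_n\otimes F_n$ to $\tilde{P}$ and of $i\Omega\sum_n n P_n^\dual\otimes F_n$ to $Q^\dual$), and concludes that the double limit exists and equals $\tilde{P}Q^\dual$; the second identity is then obtained by dualizing. You instead recognize both identities as instances of the already-established product formula of Proposition \ref{prop:NormConvergenceProductofFunctions}, applied to the pairs $(\mathbf{A},\mathbf{B})=(P,\dot{P}^\dual)$ and $(\dot{P},P^\dual)$, with the absolutely convergent factor being $P$ in the first case and $P^\dual$ in the second --- which is exactly the asymmetry that proposition's ``without loss of generality'' accommodates. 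All hypotheses you need (boundedness of $\dot{P}$, $\dot{P}^\dual$ via \eqref{eq:PdotPdotDual} and the piecewise $\mathcal{C}^1$ assumption on $H_t$, uniform convergence from Proposition \ref{prop:PfunctionConvergence}\,(\ref{point:PfunctionConvergenceDerFourier}), absolute convergence from Proposition \ref{prop:PfunctionConvergence}\,(\ref{point:PfunctionConvergencePFourier}), and the identification of $Q^\dual$ with the Fourier lifting of $\dot{P}^\dual$ via \eqref{eq:AlphaFboundedFadj} and Lemma \ref{lemma:ConvergenceDerivativePtPtAdj}) are indeed available. Your reduction is arguably cleaner, since the double-limit bookkeeping is done once inside Proposition \ref{prop:NormConvergenceProductofFunctions}; the paper's Moore--Smith argument is self-contained at the level of the lemma and makes the uniformity of one iterated limit explicit, but establishes nothing you do not also get.
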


\begin{proof}
This is immediate from Moore-Smith theorem. Put $(s_{mn})$ as
\begin{equation}
	s_{mn} = i\Omega \sum_{|k|\leqslant m} \sum_{|l|\leqslant n} k P_{l} P_{k}^{\dual} \otimes F_l F_k .
\end{equation}
Then, one easily shows $y_m = i\Omega \tilde{P} \sum_{|k|\leqslant m}k P_{k}^{\dual} \otimes F_k$ satisfies
\begin{equation}
	\| y_m - s_{mn} \|_\infty \leqslant \Omega \Big\| \tilde{P} - \sum_{|l|\leqslant n} P_{l} \otimes F_l\Big\|_{\infty} \Big\|\sum_{|k|\leqslant m} k P_{k}^{\dual} \otimes F_k \Big\|_{\infty},
\end{equation}
and so $\lim_{n\to\infty} s_{mn} = y_m$ as $\sum_{n\in\integers} P_{n} \otimes F_n$ converges to $\tilde{P}$. On the other hand, $z_n = i\Omega \sum_{|l|\leqslant n} P_{l} \otimes F_l \, Q^\dual$ satisfies
\begin{equation}
	\| z_n - s_{mn} \|_\infty \leqslant \Omega \Big\| \sum_{|l|\leqslant n} P_{l} \otimes F_l \Big\|_{\infty} \Big\| Q^{\dual} - i\Omega\sum_{|k|\leqslant m} k P_{k}^{\dual} \otimes F_k\Big\|_{\infty},
\end{equation}
so $\lim_{n\to\infty} s_{mn} = z_n$ since $i\Omega\sum_{n\in\integers} n P_{n}^{\dual} \otimes F_n$ converges to $Q^{\dual}$ (by Lemma \ref{lemma:ConvergenceDerivativePtPtAdj}). Limits of both $(y_m)$ and $(z_n)$ coincide and are equal to $\tilde{P} Q^{\dual}$ and Moore-Smith theorem yields the convergence of $(s_{mn})$, as claimed. Second claim follows from taking dual of first equality and renaming indices.
\end{proof}

\begin{lemma}\label{lemma:PQadjAntiselfadjoint}
Operator $\tilde{P}Q^{\dual}$ satisfies $\tilde{P}Q^{\dual} + Q\tilde{P}^{\dual}=0$.
\end{lemma}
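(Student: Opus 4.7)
The plan is to exploit the fact that $PP^{\dual}$ is a \emph{constant} function on $\mathbb{T}$, so its time derivative vanishes, producing exactly the product-rule identity we need. The Fourier lifting then transports this pointwise identity to the operator statement in the Bochner tensor space.

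First I would recall, from point (\ref{point:PpropertiesUnitary}) of Proposition \ref{prop:PfunctionConvergence}, that $P P^{\dual} = I \odot e_0$, i.e.\ $P_t P_t^{\dual} = I$ for every $t\in\mathbb{T}$. Because $t \mapsto p_t$ is continuously differentiable (this is where the piecewise-$\mathcal{C}^1$ assumption on $H_t$, and hence on $p_t$ via \eqref{eq:UnitaryEvolutionOperator}, enters), both $P_t$ and $P_t^{\dual}$ are differentiable in $t$, and differentiating the identity $P_t P_t^{\dual} = I$ gives the pointwise product rule
\begin{equation*}
\dot{P}_t P_t^{\dual} + P_t \dot{P}_t^{\dual} = 0 \qquad \text{for every } t\in\mathbb{T}.
\end{equation*}
Equivalently, as bounded operator-valued functions $\mathbb{T} \to B(\matrd_1)$, $\dot{P} P^{\dual} + P \dot{P}^{\dual} = 0$.

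Next I would lift this to Fourier representation. Since $Q$ is by definition the Fourier lifting of $\dot{P}$ (Lemma \ref{lemma:ConvergenceDerivativePtPtAdj}) and $Q^{\dual}$ is verified there to be the Fourier lifting of $\dot{P}^{\dual}$, and since the Fourier lifting of the pointwise composition of two operator-valued functions equals the composition of their Fourier liftings (this is the content of Lemma \ref{lemma:PXPadjConvergence} / Proposition \ref{prop:NormConvergenceProductofFunctions} applied to the present factors, which are bounded and continuous with uniformly convergent Fourier series by Proposition \ref{prop:PfunctionConvergence}), the pointwise identity lifts to
\begin{equation*}
\tilde{P} Q^{\dual} + Q \tilde{P}^{\dual} = 0.
\end{equation*}

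As a cross-check, one can repeat the argument at the level of the explicit series given in Lemma \ref{lemma:MinusPPartialPadjConv}: adding the two norm-convergent expressions yields
\begin{equation*}
\tilde{P}Q^{\dual} + Q\tilde{P}^{\dual} = i\Omega \sum_{k\in\integers} k\,\Big(\sum_{n\in\integers} P_{n} P_{k-n}^{\dual}\Big) \otimes F_{k},
\end{equation*}
where $\sum_{n\in\integers} P_n P_{k-n}^{\dual}$ is, by Proposition \ref{prop:PfunctionConvergence}\,(\ref{point:PpropertiesUnitary}) and uniqueness of Fourier coefficients, the $k$-th Fourier coefficient of $PP^{\dual} = I \odot e_0$ and hence equals $I\,\delta_{k,0}$. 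The factor $k$ then annihilates the $k=0$ term and the remaining terms vanish. The only real subtlety, and what I would treat as the main obstacle, is justifying the interchange of summation ensuring both series assemble into the same Cauchy-product object; this is exactly the Moore--Smith argument already used in the proof of Lemma \ref{lemma:MinusPPartialPadjConv}, so it may simply be invoked rather than redone.
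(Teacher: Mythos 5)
Your proposal is correct and follows essentially the same route as the paper: the paper's proof likewise evaluates $\iota\circ(Q\tilde{P}^{\dual}+\tilde{P}Q^{\dual})(\tilde{f})(t)$ pointwise as $(\dot{P}_t P_t^{\dual}+P_t\dot{P}_t^{\dual})(f(t)) = \frac{d}{dt}(P_tP_t^{\dual})(f(t)) = 0$ using unitarity from Proposition \ref{prop:PfunctionConvergence}. Your series-level cross-check via Lemma \ref{lemma:MinusPPartialPadjConv} is a valid bonus but not needed.
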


\begin{proof}
For any $\tilde{f} \in \matrd_1 \bar{\otimes} L^2(\mathbb{T})$ and $t \in \mathbb{T}$ we obtain after simple algebra,
\begin{align}
	\iota \circ (Q\tilde{P}^{\dual} + \tilde{P}Q^{\dual})(\tilde{f})(t) &= (\dot{P}_{t} P_{t}^{\dual} + P_t \dot{P}^{\dual}_{t})(f(t)) \\
	&= \Big(\frac{d}{dt}P_t P_{t}^{\dual}\Big)(f(t)) = 0, \nonumber
\end{align}
after applying unitarity of $P$ (point \ref{point:PpropertiesUnitary} in Proposition \ref{prop:PfunctionConvergence}).
\end{proof}


\begin{thebibliography}{10}

\bibitem{Alicki2006b}
R.~Alicki, D.~A. Lidar, and P.~Zanardi.
\newblock {Internal consistency of fault-tolerant quantum error correction in
  light of rigorous derivations of the quantum {M}arkovian limit}.
\newblock {\em Phys. Rev. A}, 73(5):052311, 2006.

\bibitem{Szczygielski2014}
K.~Szczygielski.
\newblock {On the application of {F}loquet theorem in development of
  time-dependent {L}indbladians}.
\newblock {\em J. Math. Phys.}, 55(8):083506, 2014.

\bibitem{Shirley_1965}
J.~Shirley.
\newblock {Solution of the {S}chr\"{o}dinger Equation with a {H}amiltonian
  Periodic in Time}.
\newblock {\em Phys. Rev.}, 138(4B):B979--B987, 1965.

\bibitem{Sambe_1973}
H.~Sambe.
\newblock {Steady States and Quasienergies of a Quantum-Mechanical System in an
  Oscillating Field}.
\newblock {\em Phys. Rev. A}, 7(6):2203--2213, 1973.

\bibitem{Howland1974}
J.~Howland.
\newblock {Stationary scattering theory for time-dependent Hamiltonians}.
\newblock {\em Math. Ann.}, 207(4):315--335, 1974.

\bibitem{Howland1979}
J.~Howland.
\newblock {Scattering {T}heory for {H}amiltonians {P}eriodic in {T}ime}.
\newblock {\em Indiana Univ. Math. J.}, 28:471--494, 1979.

\bibitem{Ernst2005}
M.~Ernst, A.~Samoson, and B.~H. Meier.
\newblock {D}ecoupling and recoupling using continuous-wave irradiation in
  magic-angle-spinning solid-state {NMR}: {A} unified description using bimodal
  {F}loquet theory.
\newblock {\em J. Chem. Phys.}, 123:064102, 2005.

\bibitem{Leskes2010}
M.~Leskes, P.~K. Madhu, and S.~Vega.
\newblock {Floquet theory in solid-state nuclear magnetic resonance}.
\newblock {\em Prog. Nucl. Magn. Reson. Spectrosc.}, 57(4):345--380, 2010.

\bibitem{Scholz2010}
I.~Scholz, J.~D. van Beek, and M.~Ernst.
\newblock {Operator-based Floquet theory in solid-state {NMR}}.
\newblock {\em Solid State Nucl. Magn. Reson.}, 37(3-4):39--59, 2010.

\bibitem{Bach2014}
V.~Bach, W.~de~Siqueira~Pedra, M.~Merkli, and I.~M. Sigal.
\newblock {Suppression of Decoherence by Periodic Forcing}.
\newblock {\em J. Stat. Phys.}, 155(6):1271--1298, 2014.

\bibitem{Ho1983}
T.~S. Ho, S.~I. Chu, and J.~V. Tietz.
\newblock {Semiclassical many-mode Floquet theory}.
\newblock {\em Chem. Phys. Lett.}, 96(4):464--471, 1983.

\bibitem{Chu2004}
S.-I. Chu and D.~A. Telnov.
\newblock {B}eyond the {F}loquet theorem: generalized {F}loquet formalisms and
  quasienergy methods for atomic and molecular multiphoton processes in intense
  laser fields.
\newblock {\em Phys. Rep.}, 390:1--131, 2004.

\bibitem{Chicone2006}
C.~Chicone.
\newblock {\em {Ordinary {D}ifferential {E}quations with {A}pplications}}.
\newblock Springer, New York, 2006.

\bibitem{Rivas2012}
{\'{A}}.~Rivas and S.~F. Huelga.
\newblock {\em {Open {Q}uantum {S}ystems: {A}n {I}ntroduction}}.
\newblock Springer, Berlin Heidelberg, 2012.

\bibitem{Chruscinski2014a}
D.~Chru{\'{s}}ci{\'{n}}ski and S.~Maniscalco.
\newblock {Degree of Non-Markovianity of Quantum Evolution}.
\newblock {\em Phys. Rev. Lett.}, 112(12), 2014.

\bibitem{Alicki2006a}
R.~Alicki and K.~Lendi.
\newblock {\em {Quantum {D}ynamical {S}emigroups and {A}pplications}}.
\newblock Springer, Berlin Heidelberg, 2006.

\bibitem{Blanes1998}
S.~Blanes, F.~Casas, J.~A. Oteo, and J.~Ros.
\newblock {Magnus and Fer expansions for matrix differential equations: the
  convergence problem}.
\newblock {\em J. Phys. A: Math. Gen.}, 31(1):259--268, jan 1998.

\bibitem{Lindblad1976}
G.~Lindblad.
\newblock {On the generators of quantum dynamical semigroups}.
\newblock {\em Commun. Math. Phys.}, 48(2):119--130, 1976.

\bibitem{Gorini1976}
V.~Gorini, A.~Kossakowski, and E.~C.~G. Sudarshan.
\newblock {Completely positive dynamical semigroups of {N}-level systems}.
\newblock {\em J. Math. Phys.}, 17(5):821--825, 1976.

\bibitem{Davies1974}
E.~B. Davies.
\newblock {Markovian master equations}.
\newblock {\em Commun. Math. Phys.}, 39(2):91--110, 1974.

\bibitem{Davies1976}
E.~B. Davies.
\newblock {\em {Q}uantum {T}heory of {O}pen {S}ystems}.
\newblock Academic Press, London, 1976.

\bibitem{Szczygielski2013}
K.~Szczygielski, D.~Gelbwaser-Klimovsky, and R.~Alicki.
\newblock {Markovian master equation and thermodynamics of a two-level system
  in a strong laser field}.
\newblock {\em Phys. Rev. E}, 87(012120):012120, 2013.

\bibitem{Davies1978}
E.~B. Davies and H.~Spohn.
\newblock {Open quantum systems with time-dependent Hamiltonians and their
  linear response}.
\newblock {\em J. Stat. Phys.}, 19(5):511--523, 1978.

\bibitem{Alicki1979}
R.~Alicki.
\newblock {On the entropy production for the {D}avies model of heat
  conduction}.
\newblock {\em J. Stat. Phys.}, 20(6):671--677, 1979.

\bibitem{Alicki1979a}
R.~Alicki.
\newblock {The quantum open system as a model of the heat engine}.
\newblock {\em J. Phys. A: Math. Gen.}, 12(5):L103, 1979.

\bibitem{Diestel_1977}
J.~Diestel and J.~J. Uhl.
\newblock {\em {Vector Measures}}.
\newblock American Mathematical Society, 1977.

\bibitem{Grafakos2009}
L.~Grafakos.
\newblock {\em {Classical Fourier Analysis}}.
\newblock Springer New York, 2009.

\bibitem{MR0238019}
R.~A. Hunt.
\newblock {On the convergence of {F}ourier series}.
\newblock In {\em Orthogonal {E}xpansions and their {C}ontinuous {A}nalogues
  ({P}roc. {C}onf., {E}dwardsville, {I}ll., 1967)}, pages 235--255. Southern
  Illinois Univ. Press, Carbondale, Ill., 1968.

\bibitem{Carleson1966}
L.~Carleson.
\newblock {On convergence and growth of partial sums of Fourier series}.
\newblock {\em Acta Math.}, 116(0):135--157, 1966.

\bibitem{ArendtBu2010}
W.~Arendt and S.~Bu.
\newblock {Fourier Series in {B}anach spaces and Maximal Regularity}.
\newblock In G.~P. Curbera, G.~Mockenhaupt, and W.~J. Ricker, editors, {\em
  Vector Measures, Integration and Related Topics}, pages 21--39, Basel, 2010.
  Birkh{\"a}user Basel.

\bibitem{Arendt2011}
W.~Arendt, C.~Batty, M.~Hieber, and F.~Neubrander.
\newblock {\em {Vector-valued Laplace Transforms and Cauchy Problems}}.
\newblock Springer Basel, 2011.

\bibitem{Francia1985}
J.~L. Rubio~de Francia.
\newblock {Fourier series and Hilbert transforms with values in UMD Banach
  spaces}.
\newblock {\em Studia Mathematica}, 81(1):95--105, 1985.

\bibitem{9780080532806}
D.~L. Burkholder.
\newblock {\em {Martingales and singular integrals in Banach spaces}},
  volume~I, chapter~6, pages 233--269.
\newblock Elsevier Science, 2001.

\bibitem{RogerA.Horn2012}
R.~A. Horn and C.~R. Johnson.
\newblock {\em {Matrix Analysis}}.
\newblock Cambridge University Press, 2012.

\bibitem{A.Defant1992}
A.~Defant and K.~Floret.
\newblock {\em {Tensor Norms and Operator Ideals}}.
\newblock Elsevier Science \& Technology, 1992.

\bibitem{Calabuig2015}
J.~M. Calabuig, E.~Jim{\'{e}}nez~F{\'{e}}rnandez, M.~A. Juan, and E.~A.
  S{\'{a}}nchez~P{\'{e}}rez.
\newblock {Tensor product representation of K\"{o}the-Bochner spaces and their
  dual spaces}.
\newblock {\em Positivity}, 20(1):155--169, jul 2015.

\bibitem{Treves1967}
F.~Treves.
\newblock {\em {Topological vector spaces, distributions and kernels}}.
\newblock Academic Press, 1967.

\bibitem{Conway1999}
J.~B. Conway.
\newblock {\em {A Course in Operator Theory}}.
\newblock American Mathematical Society, 1999.

\bibitem{Boettcher2008}
A.~B\"{o}ttcher and D.~Wenzel.
\newblock {The Frobenius norm and the commutator}.
\newblock {\em Linear Algebra Appl.}, 429(8-9):1864--1885, 2008.

\bibitem{Cycon2007}
H.~L. Cycon, R.~G. Froese, W.~Kirsch, and B.~Simon.
\newblock {\em {Schr\"{o}dinger Operators}}.
\newblock Springer-Verlag GmbH, 2007.

\bibitem{Eidelman2004}
Y.~Eidelman, V.~Milman, and A.~Tsolomitis.
\newblock {\em {Functional Analysis}}.
\newblock American Mathematical Society, 2004.

\bibitem{Verdeny2016}
A.~Verdeny, J.~Puig, and F.~Mintert.
\newblock {Quasi-Periodically Driven Quantum Systems}.
\newblock {\em Z. Naturforsch. A}, 71(10), 2016.

\bibitem{JanR.Magnus1999}
J.~R. Magnus and H.~Neudecker.
\newblock {\em {Matrix Differential Calculus with Applications in Statistics
  and Econometrics}}.
\newblock John Wiley \& Sons, 1999.

\bibitem{Tolstov1976}
G.~P. Tolstov.
\newblock {\em {Fourier Series}}.
\newblock Dover Publications, Inc., 1976.

\end{thebibliography}
\end{document}